\documentclass[11pt]{article}%
\usepackage{amsfonts}
\usepackage{amsmath}
\usepackage{amssymb}
\usepackage{graphicx}%
\setcounter{MaxMatrixCols}{30}
\providecommand{\U}[1]{\protect\rule{.1in}{.1in}}
\newtheorem{theorem}{Theorem}
\newtheorem{acknowledgement}[theorem]{Acknowledgement}

\newtheorem{corollary}[theorem]{Corollary}

\newtheorem{lemma}[theorem]{Lemma}

\newtheorem{proposition}[theorem]{Proposition}

\newenvironment{proof}[1][Proof]{\noindent\textbf{#1.} }{\ \rule{0.5em}{0.5em}}
\begin{document}

\title{Quantum Harmonic Analysis of the Density Matrix}
\author{Maurice A. de Gosson\\University of Vienna,\\Faculty of Mathematics (NuHAG)\\Oskar-Morgenstern-Platz 1, 1090 Vienna}
\maketitle
\tableofcontents

\begin{abstract}
We will study rigorously the notion of mixed states and their density
operators (or matrices.) We will also discuss the quantum-mechanical
consequences of possible variations of Planck's constant $h$. This Review has
been written having in mind two readerships: mathematical physicists and
quantum physicists. The mathematical rigor is maximal, but the language and
notation we use throughout should be familiar to physicists.

\end{abstract}

\section{Preface}

\paragraph{Quantum harmonic analysis}

When writing a Review on the density matrix you have several options. You can
use the standard approach found in physics graduate textbooks, but then you
can't go very far, not only because the important mathematical tools are
lacking, but also because the arguments are often flawed and incorrect: most
of them strictly speaking only apply when the underlying Hilbert space is
finite-dimensional (I am thinking here, among other things, about
\textquotedblleft trace taking\textquotedblright\ procedures for obtaining the
averages of observables which are most of the time mathematically undefendable
in infinite dimension in the absence of extra conditions on these
observables). Then, at the other extreme, you can use the $C^{\ast}$-algebraic
approach. This is certainly the most beautiful, elegant, and intellectually
satisfying way to introduce the mixed states of quantum mechanics and their
density matrices; after having listed some definitions and properties from the
theory of operator algebras the GNS construction then allows the passage from
this abstract theory to the usual Hilbert-space picture of quantum mechanics.
Unfortunately, this approach requires lots of preparatory work to become
accessible, and it is often difficult for the uninitiated to grasp the
physical meaning of the tools that are used. I have chosen here a third way,
which has the advantage of being both mathematically rigorous and intuitive.
It consists in using functional analysis together with tools from harmonic
analysis on phase space, especially Weyl--Wigner--Moyal theory (which is one
way to \textquotedblleft quantize\textquotedblright\ classical observables).
We will call this way of doing things the \textquotedblleft quantum harmonic
analysis\textquotedblright\ approach. One of its advantages is that it fully
justifies from a mathematical viewpoint the introduction of the Wigner
function of a density matrix, which otherwise appears as an \textit{ad hoc}
object pulled out of thin air. There is another major advantage to this
approach: it highlights the sensitivity of the theory of density matrices on
the choice of the value of Planck's constant, and this is precisely one of the
main themes we want to address. We will mostly deal with continuous-variable
systems with an infinite-dimensional Hilbert space described by observables
with continuous eigenspectra.

\paragraph{Variability of Planck's constant}

One topic we address in depth in this Review is the sensitivity of quantum
states to possible variations of Planck's constant. Physically this is a very
controversial question. Setting aside for a moment the debate on whether
Planck's constant can vary or not, consider the following situation: we have
an unknown quantum state, on which we perform a quorum of measurements in
order to determine its density matrix $\widehat{\rho}$. Now, one should be
aware of the fact that this density matrix will not be determined directly by
these experiments; what one does is to measure by, say, a homodyne quantum
tomography, certain properties of that system. Quantum homodyne tomography
originates from the observation by Vogel and Risken \cite{Vogel} that the
probability distributions determined by homodyne detection are just the Radon
transforms of the Wigner function of the density matrix, and that the latter
allows us to infer the density matrix using the Weyl correspondence. The
method works as follows: suppose now that we have been able to determine a
statistical function $\rho(x,p)$ of the position and momentum variables
(called \textquotedblleft quadratures\textquotedblright\ in this context),
yielding the properties of the quantum system under investigation. If we now
identify this function $\rho(x,p)$ with the Wigner distribution%
\begin{equation}
\rho(x,p)=\left(  \tfrac{1}{2\pi\hbar}\right)  ^{n}\int e^{-\frac{i}{\hbar}%
py}\langle x+\tfrac{1}{2}y|\widehat{\rho}|x-\tfrac{1}{2}y\rangle d^{n}y
\label{rhoxp1}%
\end{equation}
of the corresponding density matrix, this relation can be inverted and yields
$\widehat{\rho}$; mathematically speaking $\widehat{\rho}$ is just, up to a
factor, the Weyl operator corresponding to the \textquotedblleft
classical\textquotedblright\ observable $\rho(x,p)$, which can be written for
instance as%
\begin{equation}
\widehat{\rho}\psi(x)=\int\rho(x_{0},p_{0})e^{\frac{2i}{\hbar}p_{0}(x-x_{0}%
)}\psi(2x_{0}-x)d^{n}p_{0}d^{n}x_{0}. \label{rhoxp2}%
\end{equation}
Now, two essential observations. The first is that when using this procedure
we assume quite explicitly that we are using the Weyl--Wigner--Moyal (WWM)
formalism: we identify the function $\rho(x,p)$ with the Wigner function, and
\textquotedblleft quantize\textquotedblright\ it thereafter using the Weyl
transform. This is very good, of course, but one should keep in mind that
there are other possible representations in quantum mechanics; a physically
very interesting one is for instance the Born--Jordan quantization scheme. We
will not investigate the implications of such a choice in this Review, but we
want here to emphasize another problem. Even if we place ourselves in the WWM
framework we are tacitly assuming that $\hbar=h/2\pi$ has a fixed value in
time and space. If it happens that Planck's constant $h$ has another value,
$h^{\prime}$, at another location or at another time, formulas (\ref{rhoxp1})
and (\ref{rhoxp2}) would have to be replaced with the different expressions%
\begin{equation}
\rho^{\prime}(x,p)=\left(  \tfrac{1}{2\pi\hbar^{\prime}}\right)  ^{n}\int
e^{-\frac{i}{\hbar^{\prime}}py}\langle x+\tfrac{1}{2}y|\widehat{\rho}^{\prime
}|x-\tfrac{1}{2}y\rangle d^{n}y \label{rhoxp3}%
\end{equation}
and%
\begin{equation}
\widehat{\rho}^{\prime}\psi(x)=\int\rho(x_{0},p_{0})e^{\frac{2i}{\hbar
^{\prime}}p_{0}(x-x_{0})}\psi(2x_{0}-x)d^{n}p_{0}d^{n}x_{0}. \label{rhoxp4}%
\end{equation}

\paragraph{\textbf{Notation}}

Let $\sigma$ be the standard symplectic form on phase space $\mathbb{R}%
^{2n}\equiv\mathbb{R}_{x}^{n}\times\mathbb{R}_{p}^{n}$: by definition it is
the mapping which to the pairs of vectors $z=(x,p)$ and $z^{\prime}%
=(x^{\prime},p^{\prime})$ associates the number
\[
\sigma(z,z^{\prime})=\sum_{j}p_{j}x_{j}^{\prime}-p_{j}^{\prime}x_{j}%
=px^{\prime}-p^{\prime}x.
\]
The symplectic form can be conveniently written in matrix form as
\[
\sigma(z,z^{\prime})=(z^{\prime})^{T}Jz\text{ \ , \ }J=%
\begin{pmatrix}
0_{n\times n} & I_{n\times n}\\
-I_{n\times n} & 0_{n\times n}%
\end{pmatrix}
\]
($z$ and $z^{\prime}$ being here viewed as column vectors; $J$ is the
\textquotedblleft standard symplectic matrix\textquotedblright).

The inner product of two vectors $\psi,\phi\in L^{2}(\mathbb{R}^{n})$ is given
by
\[
\langle\psi|\phi\rangle=\int\psi^{\ast}(x)\phi(x)d^{n}x
\]
with $d^{n}x=dx_{1}\cdot\cdot\cdot dx_{n}$. The associated norm is
$||\psi||=\sqrt{\langle\psi|\psi\rangle}$.

In addition to the usual $\hbar$-dependent (unitary) Fourier transform%
\[
F\psi(p)=\left(  \tfrac{1}{2\pi\hbar}\right)  ^{n}\int e^{-\frac{i}{\hbar}%
px}\psi(x)d^{n}x
\]
on $\mathbb{R}^{n}$ we will use the symplectic Fourier transform on
$\mathbb{R}^{2n}$: it is the transformation $F_{\sigma}$ which takes a square
integrable function (or more generally a tempered distribution) $a$ on phase
space $\mathbb{R}^{2n}$ to the function (or tempered distribution)
\begin{equation}
F_{\sigma}a(z)=\left(  \tfrac{1}{2\pi\hbar}\right)  ^{n}\int e^{-\frac
{i}{\hslash}\sigma(z,z^{\prime})}a(z^{\prime})d^{2n}z^{\prime}. \label{sft}%
\end{equation}
That $F_{\sigma}$ is just an elementary modification of the $\hbar$-Fourier
transform on $\mathbb{R}^{2n}$ given by
\[
Fa(z)=\left(  \tfrac{1}{2\pi\hbar}\right)  ^{n}\int e^{-\frac{i}{\hslash
}zz^{\prime}}a(z^{\prime})d^{2n}z^{\prime}%
\]
is easy to see: since $\sigma(z,z^{\prime})=(z^{\prime})^{T}Jz=Jz\cdot
z^{\prime}$ ($J$ the standard symplectic matrix) we have $F_{\sigma
}a(z)=Fa(Jz)$. The symplectic Fourier transform is that it is its own inverse:
$F_{\sigma}^{2}=F_{\sigma}F_{\sigma}$ is the identity (\textit{i.e.} the
symplectic Fourier transform is involutive) and it satisfies the modified
Plancherel identity
\begin{equation}
\int a(z)b(z)d^{2n}z=\int F_{\sigma}a(z)F_{\sigma}b(-z)d^{2}z.
\label{Plancherelsig}%
\end{equation}

\section{Introduction\label{sec1}}

We begin by recalling informally the basic definitions from the theory of
quantum states; since the content of this section is common knowledge we do
not provide any particular references (we are following here the terminology
and the exposition in Peres \cite{Peres} with some modifications). The
formalism of density operators and matrices was introduced by John von Neumann
\cite{Neumann} in 1927 and independently, by Lev Landau and Felix Bloch in
1927 and 1946 respectively. Ugo Fano was one of the first to put the theory of
the density matrix in a rigorous form in his review paper \cite{Fano}.

\subsection{Quantum states and observables}

\subsubsection{Pure and mixed states; maximal tests}

A quantum system is said to be in a \emph{pure} state if we have complete
knowledge about that system, meaning we know exactly which state it is in.
Pure states can be prepared using \emph{maximal tests} (\cite{Peres}, \S 2-3):
suppose we are dealing with a quantum system and let $N$ be the maximum number
of different outcomes that can be obtained in a test of that system. If such a
test has exactly $N$ different outcomes, it is called a maximal test.\ The
quantum system under consideration is in a pure state if\ it is prepared in
such a way that it certainly yields a predictable outcome in that maximal
test, the outcomes in any other test having well-defined probabilities which
do not depend on the procedure used for the preparation. A pure state can thus
be identified by specifying the complete experiment that characterizes it
uniquely (Fano \cite{Fano}). One usually writes a pure state using Dirac's ket
notation $|\psi\rangle$; for all practical purposes it is convenient to use
the wavefunction $\psi$ defined by $\psi(x)=\langle x|\psi\rangle$; it is a
normalized element of a certain Hilbert space $\mathcal{H}$, which is usually
identified in the case of continuous variables with $L^{2}(\mathbb{R}^{n})$
(the square integrable functions). When doing this the state is identified
with the linear span of the function $\psi$, that is the ray $\mathbb{C}%
\psi=\{\lambda\psi:\lambda\in\mathbb{C}\}$. It is very important to note that
the pure state $|\psi\rangle$ can be identified with the orthogonal projection
$\widehat{\rho}_{\psi}$ of $\mathcal{H}$ on the subspace $\mathbb{C}\psi$.
This projection, which is of rank one, is denoted by $|\psi\rangle\langle
\psi|$ in quantum mechanics; it is analytically given by the formula
\begin{equation}
\widehat{\rho}_{\psi}\phi=|\psi\rangle\langle\psi|\phi\rangle\label{1}%
\end{equation}
where $\langle\psi|\phi\rangle$ is identified with the inner product in
$\mathcal{H}$. Most tests are however not maximal, and most preparations do
not produce pure states, so we only have partial knowledge of the quantum
system under consideration. The information on such a system is less than a
maximum, with reference to the lack of a complete experiment with a uniquely
predetermined outcome. The state of the system is nevertheless fully
identified by any data adequate to predict the statistical results of all
conceivable observations on the system \cite{Fano}. When this is the case we
say that the system is in a \emph{mixed state}. Mixed states are classical
probabilistic mixtures of pure states; however, different distributions of
pure states can generate physically indistinguishable mixed states (this
possibility will be discussed later). A quantum mixed state can be viewed as
the datum of a set of pairs $\{(\psi_{j},\alpha_{j})\}$ where $\psi_{j}$ is a
(normalized square integrable) pure state and $\alpha_{j}$ a classical
probability; these probabilities sum up to one: $\sum_{j}\alpha_{j}=1$. A
\textit{caveat}: one should not confuse the mixed state $\{(\psi_{j}%
,\alpha_{j})\}$ with the superposition $\psi=\sum_{j}\alpha_{j}\psi_{j}$ which
is a \textit{pure} state!

\subsubsection{The density matrix}

From a mathematical point of view the quantum state can be advantageously
described by a self-adjoint operator on a Hilbert space; this operator is
called the density matrix of the quantum system; in the pure state case this
operator is just the orthogonal projection (\ref{1}) on a subspace of the
Hilbert space, while the density matrix of a mixed state is
\begin{equation}
\widehat{\rho}=\sum_{j}\alpha_{j}\widehat{\rho}_{j}=\sum_{j}\alpha_{j}%
|\psi_{j}\rangle\langle\psi_{j}| \label{trc1}%
\end{equation}
where we have set $\widehat{\rho}_{j}=\widehat{\rho}_{\psi_{j}}$. It should be
kept in mind that the density matrix describes a preparation procedure for an
ensemble of quantum systems whose statistical properties correspond to the
given preparation procedure (this important aspect will be discussed in more
detail below).

\begin{theorem}
\label{ThmA}(i) The density operators on a Hilbert space $\mathcal{H}$ form a
convex subset $\operatorname{Dens}(\mathcal{H)}$ of the space $\mathcal{B}%
(\mathcal{H)}$ of bounded operators on $\mathcal{H}$. (ii) The extreme points
of this set are the rank-one projections, which correspond to the pure states
on $\mathcal{H}$.
\end{theorem}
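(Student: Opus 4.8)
The plan is to run everything off the standard characterization of a density operator, which I would record as the first step: $\widehat{\rho}\in\operatorname{Dens}(\mathcal{H})$ precisely when $\widehat{\rho}$ is self-adjoint, positive, trace-class, and $\operatorname{Tr}(\widehat{\rho})=1$. By the spectral theorem for positive compact operators this is the same as $\widehat{\rho}=\sum_{j}\alpha_{j}|\psi_{j}\rangle\langle\psi_{j}|$ with $(\psi_{j})$ orthonormal, $\alpha_{j}>0$ and $\sum_{j}\alpha_{j}=1$ (the series converging in trace norm), which reconciles the abstract description with formula (\ref{trc1}). All subsequent steps use only this characterization and elementary Hilbert-space facts.

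For part (i): a positive operator satisfies $||\widehat{\rho}||_{\mathrm{op}}\leq\operatorname{Tr}(\widehat{\rho})=1$ (extend any unit vector to an orthonormal basis), so $\operatorname{Dens}(\mathcal{H})$ sits inside the unit ball of $\mathcal{B}(\mathcal{H})$, in particular inside $\mathcal{B}(\mathcal{H})$. Convexity is then immediate: for $\widehat{\rho}_{0},\widehat{\rho}_{1}\in\operatorname{Dens}(\mathcal{H})$ and $t\in[0,1]$, the operator $t\widehat{\rho}_{0}+(1-t)\widehat{\rho}_{1}$ is self-adjoint (real combination of self-adjoint operators), positive (nonnegative combination of positive operators), trace-class (the trace-class operators form a vector space), and has trace $t+(1-t)=1$.

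For part (ii) I would prove two inclusions. First, a rank-one projection $\widehat{\rho}_{\psi}=|\psi\rangle\langle\psi|$ with $||\psi||=1$ is extreme: if $\widehat{\rho}_{\psi}=t\widehat{A}+(1-t)\widehat{B}$ with $t\in(0,1)$ and $\widehat{A},\widehat{B}\in\operatorname{Dens}(\mathcal{H})$, then for every $\phi\perp\psi$ one has $0=\langle\phi|\widehat{\rho}_{\psi}\phi\rangle=t\langle\phi|\widehat{A}\phi\rangle+(1-t)\langle\phi|\widehat{B}\phi\rangle$; both summands are $\geq 0$, hence both vanish, and positivity of $\widehat{A}$ forces $\widehat{A}\phi=0$, likewise $\widehat{B}\phi=0$. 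Thus $\operatorname{ran}\widehat{A},\operatorname{ran}\widehat{B}\subseteq\mathbb{C}\psi$, so $\widehat{A}=a|\psi\rangle\langle\psi|$ and $\widehat{B}=b|\psi\rangle\langle\psi|$; taking traces gives $a=b=1$, so $\widehat{A}=\widehat{B}=\widehat{\rho}_{\psi}$. Conversely, let $\widehat{\rho}$ be extreme and spectral-decompose $\widehat{\rho}=\sum_{j}\alpha_{j}|e_{j}\rangle\langle e_{j}|$; if two eigenvalues were positive, say $\alpha_{1},\alpha_{2}>0$, then $0<\alpha_{1}<1$ and $\widehat{\rho}=\alpha_{1}|e_{1}\rangle\langle e_{1}|+(1-\alpha_{1})\widehat{\sigma}$ with $\widehat{\sigma}=(1-\alpha_{1})^{-1}\sum_{j\geq 2}\alpha_{j}|e_{j}\rangle\langle e_{j}|\in\operatorname{Dens}(\mathcal{H})$ and $\widehat{\sigma}\neq|e_{1}\rangle\langle e_{1}|$ (they differ on $e_{1}$), contradicting extremality. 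Hence $\widehat{\rho}$ has a single nonzero eigenvalue, which equals $1$ by the trace condition, so $\widehat{\rho}=|e_{1}\rangle\langle e_{1}|$ is a rank-one projection; via (\ref{1}) it is exactly the pure state $|e_{1}\rangle$.

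The genuine work is the functional-analytic bookkeeping compressed into the first paragraph: one needs that the elements of $\operatorname{Dens}(\mathcal{H})$ are automatically compact, so that the spectral decomposition with a summable eigenvalue sequence is legitimate and the auxiliary operator $\widehat{\sigma}$ is well defined, its defining series converging in trace norm. Once that standard fact is secured, the arguments above are elementary — in finite dimensions they reduce to linear algebra, and the positivity/orthogonality argument for the extreme points is essentially the entire content.
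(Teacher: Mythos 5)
Your proof is correct, and it follows a somewhat different route from the paper's. The paper works throughout with the ensemble representation $\widehat{\rho}=\sum_{j}\alpha_{j}|\psi_{j}\rangle\langle\psi_{j}|$ of formula (\ref{trc1}): convexity is obtained by relabelling the two index sets so they are disjoint and merging the two sums into one, and extremality of a pure state is shown by assuming a decomposition of $|\psi\rangle\langle\psi|$ into rank-one projectors and running the same orthogonality argument you use ($\langle\phi|\widehat{\rho}_{\psi}|\phi\rangle=0$ for $\phi\perp\psi$ forces each $\psi_{j}$ into the ray $\mathbb{C}\psi$). You instead start from the abstract characterization (self-adjoint, positive, trace class, unit trace), which makes part (i) a one-line consequence of linearity of the trace and positivity, and which lets you test extremality against an \emph{arbitrary} convex decomposition $t\widehat{A}+(1-t)\widehat{B}$ rather than only against decompositions into projectors --- for this you need the extra (correct) observation that $\langle\phi|\widehat{A}\phi\rangle=0$ together with positivity forces $\widehat{A}\phi=0$, and then self-adjointness to confine the range of $\widehat{A}$ to $\mathbb{C}\psi$. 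You also supply explicitly the converse inclusion --- that an extreme point must be a rank-one projection, via splitting off one spectral term --- which the paper only asserts with the phrase that it ``immediately follows from the argument above.'' The price of your added generality is the reliance on the spectral decomposition (equivalently, compactness of trace class operators), which the paper only establishes later in Theorem \ref{ThmB}; you flag this dependency honestly, and modulo that forward reference your argument is complete and, in the converse direction of (ii), more thorough than the paper's.
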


\begin{proof}
(i) To say that the set of density matrices is convex means that if
$\widehat{\rho}_{1}$ and $\widehat{\rho}_{2}$ are in $\operatorname{Dens}%
(\mathcal{H)}$ then so is $\lambda\widehat{\rho}_{1}+(1-\lambda)\widehat{\rho
}_{2}$ for all real numbers $\lambda$ such that $0\leq\lambda\leq1$. Let
$\{(\psi_{j},\alpha_{j}):j\in K\}$ and $\{(\phi_{j},\beta_{j}):j\in L\}$ be
two mixed states. Relabeling if necessary the indices we may assume that the
sets $K$ and $L$ are disjoint: $K\cap L=\emptyset$. The corresponding density
matrices are%
\[
\widehat{\rho}_{1}=\sum_{k\in K}\alpha_{k}|\psi_{k}\rangle\langle\psi
_{k}|\text{ \ , \ }\widehat{\rho}_{2}=\sum_{\ell\in L}\beta_{\ell}|\phi_{\ell
}\rangle\langle\phi_{\ell}|
\]
and hence we have by linearity
\[
\lambda\widehat{\rho}_{1}+(1-\lambda)\widehat{\rho}_{2}=\sum_{j\in K\cup
L}\gamma_{j}|\chi_{j}\rangle\langle\chi_{j}|
\]
where $\chi_{j}=\psi_{j}$, $\gamma_{j}=\lambda\alpha_{j}$ if $j\in K$ and
$\chi_{j}=\phi_{j}$, $\gamma_{j}=(1-\lambda)\beta_{j}$ if $j\in L$. That
$\widehat{\rho}=\lambda\widehat{\rho}_{1}+(1-\lambda)\widehat{\rho}_{2}$ is a
mixed state now follows from the equality%
\[
\sum_{j\in K\cup L}\gamma_{j}=\lambda\sum_{j\in K}\alpha_{j}+(1-\lambda
)\sum_{j\in L}\beta_{j}=1.
\]
(ii) That a pure state really is \textquotedblleft pure\textquotedblright%
,\textit{ i.e. }that it can never be represented as a mixed state is easily
seen using the following algebraic argument: assume that $\widehat{\rho}%
_{\psi}=|\psi\rangle\langle\psi|$ can be rewritten as a sum $\sum_{j}%
\alpha_{j}|\psi_{j}\rangle\langle\psi_{j}|$ with $\alpha_{j}\geq0$ and
$\sum_{j}\alpha_{j}=1$. In fact, discarding the terms with $\alpha_{j}=0$ we
may assume that $\alpha_{j}>0$ for all indices $j$. Let now $(\mathbb{C}%
\psi)^{\bot}$ be the subspace of $\mathcal{H}$ orthogonal the ray
$\mathbb{C}\psi$: it consists of all vectors $\phi$ in $\mathcal{H}$ such that
$\langle\psi|\phi\rangle=0$. For every $\phi\in(\mathbb{C}\psi)^{\bot}$ we
have $\langle\phi|\widehat{\rho}_{\psi}|\phi\rangle=|\langle\psi|\phi
\rangle|^{2}=0$ and hence also
\[
\langle\phi|\widehat{\rho}_{\psi}|\phi\rangle=\sum_{j}\alpha_{j}|\langle
\phi|\psi_{j}\rangle|^{2}=0;
\]
since we are assuming that $\alpha_{j}>0$ this equality implies that we must
have $\langle\phi|\psi_{j}\rangle=0$ for all $\phi\in S_{\psi}$ and every
$\psi_{j}$, hence $\psi_{j}\in((\mathbb{C}\psi)^{\bot})^{\bot}$. The
orthogonality relation for subspaces being reflexive we have $((\mathbb{C}%
\psi)^{\bot})^{\bot}=\mathbb{C}\psi$, which means that each $\psi_{j}$ belongs
to the ray $\mathbb{C}\psi$, that is $\psi_{j}=\lambda\psi$ for some complex
number $\lambda$ with $|\lambda|=1$; the vectors $\psi_{j}$ thus define the
state $\langle\psi|$. It follows from this argument that the pure state
density matrices are the extreme points of $\operatorname{Dens}(\mathcal{H)}$.
This means that if $\widehat{\rho}_{\psi}$ is a pure state density matrix,
then the relation $\widehat{\rho}_{\psi}=\lambda\widehat{\rho}_{1}%
+(1-\lambda)\widehat{\rho}_{2}$ with $\lambda\neq0$ and $\lambda\neq1$ implies
$\widehat{\rho}_{\psi}=\widehat{\rho}_{1}=\widehat{\rho}_{2}$. That this is
the case immediately follows from the argument above.
\end{proof}

\subsubsection{The $C^{\ast}$-algebraic approach\label{secalgebraic}}

The notion of quantum state and density matrices can be very concisely
described using the language of $C^{\ast}$-algebras; this approach is useful
when one wants to give a rigorous axiomatic description of the theory of
quantum systems. Its inception goes back to early work by John von Neumann on
operator algebras around 1930; the modern theory of $C^{\ast}$-algebras was
developed one decade later by Gel'fand and Naimark. We recommend Rieffel's
paper \cite{Rieffel} for a survey of quantization using this approach;
Landsman's book \cite{landsman} is another excellent source, perhaps somewhat
closer to physicists' background. Let $\mathcal{A}$ be a non-commutative
unital $C^{\ast}$-algebra (\textit{i.e}. a unital complex Banach algebra
equipped with an isometric involution $A\longmapsto A^{\ast}$ compatible with
complex conjugation, and such that $||A^{\ast}A||=||A||^{2}$). If $A=A^{\ast}$
we call $A$ an observable; the set of observables is denoted by $\mathcal{O}$.
If in addition the spectrum of $A\in\mathcal{O}$ consists of numbers
$\lambda\geq0$ we say that $A$ is positive ($A\geq0$). A linear map
$\rho:\mathcal{O}\longrightarrow\mathbb{C}$ is positive ($\rho\geq0$) if
$A\geq0$ implies $\rho(A)\geq0$; if in addition $\rho(1)=1$ then $S$ is called
a \emph{quantum state}; quantum states form the state space $\mathcal{S}%
(\mathcal{A)}$ of $\mathcal{A}$; it is a convex cone whose extreme points are
the \textit{pure states} while the other elements of $\mathcal{S}%
(\mathcal{A)}$ are \textit{mixed states}. Now, the Gel'fand--Naimark theorem
\cite{dix,gena,landsman} implies that for every unital $C^{\ast}$-algebra
$\mathcal{A}$ there exists an isomorphism $\pi:A\longmapsto\widehat{A}$ of
$\mathcal{A}$ onto a separable Hilbert space $\mathcal{H}$ such that two of
the Dirac--von Neumann axioms (\cite{takta}, pp. 66--67) are satisfied:
$\rho\in\mathcal{S}(\mathcal{A)}$ if and only if there exists a positive
$\widehat{\rho}\in\mathcal{B}(\mathcal{H)}$ with $\operatorname{Tr}%
(\widehat{\rho})=1$ and $\rho(A)=\operatorname{Tr}(\widehat{\rho}\widehat{A})$
for every $A\in\mathcal{A}$; moreover the space of observables $\mathcal{O}$
is identified with the self-adjoint elements of $\mathcal{B}(\mathcal{H)}$ and
the number $\operatorname{Tr}(\widehat{\rho}\widehat{A})$ is then the expected
value of $A\in\mathcal{O}$. The $C^{\ast}$-algebraic approach is very
appealing because it helps in unifying classical and quantum states: when the
system is classical, the algebra of observables becomes an abelian $C^{\ast}%
$-algebra and the states become ordinary probability measures.

\subsection{\textquotedblleft Taking traces\textquotedblright: heuristics}

\subsubsection{The average of an observable}

Let $\widehat{A}$ be an observable, \textit{i.e.} a Hermitian operator defined
on some subspace $D_{\widehat{A}}$ of $\mathcal{H}$ (the domain of
$\widehat{A}$); the average value of $\widehat{A}$ in the pure state
$|\psi\rangle$ is by definition%
\begin{equation}
\langle\widehat{A}\rangle_{\psi}=\langle\psi|\widehat{A}|\psi\rangle
=\langle\psi|\widehat{A}\psi\rangle=\langle\widehat{A}\psi|\psi\rangle.
\label{average1}%
\end{equation}
This relation is usually rewritten using the notion of trace
$\operatorname{Tr}$ of an operator, defined as the sum of the eigenvalues of
that operator. In fact, one of the most used formulas in quantum mechanics is,
no doubt,
\begin{equation}
\langle\widehat{A}\rangle_{\psi}=\operatorname{Tr}(\widehat{\rho}_{\psi
}\widehat{A}). \label{trc2}%
\end{equation}
To see this, we observe that since $\widehat{\rho}_{\psi}\widehat{A}%
=|\psi\rangle\langle\psi|\widehat{A}$ and $\widehat{\rho}_{\psi}%
^{2}=\widehat{\rho}_{\psi}$ we have, by cyclicity,%
\[
\operatorname{Tr}(\widehat{\rho}_{\psi}\widehat{A})=\operatorname{Tr}%
(\widehat{\rho}_{\psi}^{2}\widehat{A})=\operatorname{Tr}(\widehat{\rho}_{\psi
}\widehat{A}\widehat{\rho}_{\psi})
\]
that is, in \textquotedblleft bra-c-ket notation\textquotedblright, and using
the homogeneity of the trace,
\begin{align*}
\operatorname{Tr}(\widehat{\rho}_{\psi}\widehat{A})  &  =\operatorname{Tr}%
(|\psi\rangle\langle\psi|\widehat{A}|\psi\rangle\langle\psi|)\\
&  =\langle\psi|\widehat{A}|\psi\rangle\operatorname{Tr}(|\psi\rangle
\langle\psi|)\\
&  =\langle\psi|\widehat{A}|\psi\rangle
\end{align*}
since $\operatorname{Tr}(|\psi\rangle\langle\psi|)=\operatorname{Tr}%
(\widehat{\rho}_{\psi})=1$ (the only eigenvalues of $\widehat{\rho}_{\psi}$
are $1$ and $0$). Formula (\ref{trc2}) is immediately generalized by linearity
of the trace to arbitrary mixed states, and one finds, that the average value
of the observable $\widehat{A}$ in a mixed state with density operator
(\ref{trc1}) is
\begin{equation}
\langle\widehat{A}\rangle_{\widehat{\rho}}=\operatorname{Tr}(\widehat{\rho
}\widehat{A}). \label{trc3}%
\end{equation}

It turns out that several mixtures can lead to the same density matrix. Since
formula (\ref{trc3}) only involves the density operator itself, such
equivalent mixtures cannot be distinguished by measurement of observables
alone because all observable results can be predicted from the density matrix,
without needing to know the ensemble that was used to construct it.

\subsubsection{Precautions}

So far, so good. However, the manipulations made above can -- at best -- be
justified when the Hilbert space $\mathcal{H}$ is finite dimensional because
in this case the notion of trace class operators (and their properties)
involves elementary calculations of finitely-dimensional matrices. This is
almost always implicitly assumed in textbooks, where the reader is invited to
study \textit{ad} \textit{nauseam }the same elementary examples in low
dimensions. Unfortunately the definitions above, and the proof of the formulas
(\ref{trc2})--(\ref{trc3}) are \emph{incorrect }in the general case of an
infinite-dimensional $\mathcal{H}$, and formulas like (\ref{trc3}) are not
justifiable unless one makes very restrictive hypotheses on the observable
$\widehat{A}$ because of convergence problems for the involved integrals.
These issues are seldom questioned and treated sloppily in most of the
physical literature (in this context it is very instructive to read the
excellent paper \cite{daube} by Daubechies which warns of such
misconceptions). The situation is very much similar to what happens when
physicists manipulate the Feynman path integral\footnote{I am not talking here
about the rigorous theory of the Feynman integral, as found for instance in
Albeverio \textit{et al}. \cite{albe} or Nicola \cite{Nicola}.}. It is a
heuristic object which has, in the form used most of the time in physics, no
mathematical justification, but which easily allows one to get new intuitions
and insights (hence its usefulness). Of course, such formal manipulations are
okay when one is only interested in qualitative statements, but they are
certainly not okay when one wants to perform exact calculations (numerical, or theoretical).

Here is another popular (and questionable!) way of calculating traces. Let
$\widehat{\rho}$ be a density matrix on $L^{2}(\mathbb{R}^{n})$; assume that
the kernel of $\widehat{\rho}$ is a function $K(x,y)$:%
\[
\widehat{\rho}\psi(x)=\int K(x,y)\psi(y)d^{n}y.
\]
It is customary (especially in the physical literature) to calculate the trace
of $\widehat{\rho}$ by integrating the kernel along the diagonal, that is by
using the formula
\begin{equation}
\operatorname*{Tr}(\widehat{A})=\int K(x,x)d^{n}x \label{trak}%
\end{equation}
which is obviously an extension to the infinite dimensional case of the usual
definition of the trace of a matrix as the sum of its diagonal elements.
Needless to say, this formula does not follow directly from the definition of
a trace class operator! In fact, even when the integral in (\ref{trak}) is
absolutely convergent, this formula has no reason to be true in general (the
kernel of an operator is defined uniquely only up to a set of measure zero,
and in (\ref{trak}) we are integrating along the diagonal, which precisely has
measure zero!). Brislawn \cite{Brislawn} discusses formulas of the type
(\ref{trak}) in the context of traceable Hilbert--Schmidt operators and gives
sufficient conditions for such formulas to hold using Mercer's theorem and its
variants. As shown in Barry Simon's little monograph \cite{BSimon} on trace
ideals, the right condition for formula (\ref{trak}) to hold in the case $n=1$
is that $K$ must be of positive type, which means that we have
\begin{equation}
\sum_{1\leq j,k\leq N}\lambda_{j}\overline{\lambda_{k}}K(x_{j},x_{k})\geq0
\label{lalak}%
\end{equation}
for all integers $N$, all $x_{j}\in\mathbb{R}$ and all $\lambda_{j}%
\in\mathbb{C}$ (in particular we must thus have $K\geq0)$. On the positive
side, Simon (\textit{ibid}.) notes that if a density matrix has kernel $K$
satisfying $\int|K(x,x)|d^{n}x<\infty$ then we are \textquotedblleft almost
sure\textquotedblright\ that formula (\ref{trak}) holds. Of course this vague
statement is not a charter allowing carefree calculations! Needless to say,
the \textquotedblleft derivations\textquotedblright\ above are formal and one
should be extremely cautious when using the \textquotedblleft
formulas\textquotedblright\ thus obtained. Shubin \cite{sh87}, \S 27,
discusses a (not so easy to use explicitly) step-by-step procedure for
checking such identities. We will come back to these essential points in
Section \ref{secalcul}.

\subsection{Quantum Tomography}

Reconstructing a quantum state (or even a classical state) is an extremely
important problem, and it is, generally speaking, a difficult one. We will
discuss here very briefly this topic; we will mainly be giving references to
past works; such a list is of course incomplete due to the large number of
contributions, which is steadily growing. \ We will mainly focus on quantum
tomography, which is a technique for characterizing a state of a quantum
system by subjecting it to a large number of quantum measurements, each time
preparing the system anew.

\subsubsection{Estimating the density matrix: generalities}

Experiments performed in a laboratory on quantum systems do not lead directly
and precisely to a determination of the density matrix. The problem can be
formulated as a statistical one: can we estimate the density matrix using
repeated measurements on quantum systems that are identically prepared? The
following strategy is used: after having obtained measurements on these
identical quantum systems we can make a statistical inference about the
probability distribution of the measurements, and thus indirectly about the
density matrix of the quantum system \cite{waxu}. This procedure is called
\textit{quantum state tomography}, and is practically implemented using a set
of measurements of a \textit{quorum} of observables, \textit{i.e}., of a
minimal\ \textquotedblleft complete\textquotedblright\ set of non-commuting
observables. See D'Ariano and his collaborators \cite{dariano,damapa} for
various strategies, and also Vogel and Risken \cite{Vogel} or Leonhardt and
Paul \cite{lepa94}. Bu\v{z}ek and his collaborators \cite{buzek} show how to
reconstruct the Wigner function using the Jaynes principle of Maximum Entropy.
In that context, also see the recent paper Thekkadath \textit{et al}.
\cite{the} where a scheme that can be used to directly measure individual
density matrix elements is given; Lvovsky and Raymer \cite{lvra09} give a very
interesting review in the continuous-variable case. The Lecture Notes volume
\cite{pare} edited by Paris and Reh\'{a}\v{c}ek contains a selection of texts
by leading experts presenting various aspects of quantum state estimation.
Mancini \textit{et al}. \cite{mancini} apply symplectic geometry to describe
the dynamics of a quantum system in terms of equations for a purely classical
probability distribution; also see Man'ko and Man'ko \cite{mankomanko}. Ibort
\textit{et al}. \cite{ib} introduce what they call the tomographic picture of
quantum mechanics.

Although the Wigner function cannot be measured directly as a probability
density, all its marginal distributions can. Once we know all the marginal
distributions associated with different quadratures, we can reconstruct the
Wigner function. Historically, the problem can be traced back to Wolfgang
Pauli's question in his 1958 book \cite{Pauli} \textquotedblleft\textit{The
mathematical problem as to whether, for given probability densities }%
$W(p)$\textit{ and }$W(x)$\textit{, the wavefunction }$\psi$\textit{ (...) is
always uniquely determined, has still not been investigated in all its
generality}\textquotedblright. It turns out that it is not possible to
determine a state by knowing only the configuration space and the momentum
space marginal distributions. The answer is negative; here is an elementary
example (Esposito \textit{et al.} \cite{espo}): consider the two normalized
wavefunctions%
\[
\psi_{1}(x)=\left(  \frac{2}{\pi}\operatorname{Re}\alpha\right)
^{1/4}e^{-\alpha x^{2}}\text{ \ , \ }\psi_{2}(x)=\left(  \frac{2}{\pi
}\operatorname{Re}\alpha\right)  ^{1/4}e^{-\alpha^{\ast}x^{2}}%
\]
where $\alpha$ is a complex number whose real part is positive:
$\operatorname{Re}\alpha>0$ and imaginary part nonzero: $\operatorname{Im}%
\alpha\neq0$. These functions become%
\begin{align*}
\phi_{1}(p)  &  =\left(  \frac{2}{\pi}\operatorname{Re}\alpha\right)
^{1/4}\frac{1}{\sqrt{2\alpha}}e^{-\alpha p^{2}}\\
\phi_{2}(p)  &  =\left(  \frac{2}{\pi}\operatorname{Re}\alpha\right)
^{1/4}\frac{1}{\sqrt{2\alpha^{\ast}}}e^{-\alpha p^{2}}%
\end{align*}
in the momentum representation, so that the associated probability
distributions are the same%
\begin{align*}
|\psi_{1}(x)|^{2}  &  =|\psi_{2}(x)|^{2}=\left(  \frac{2}{\pi}%
\operatorname{Re}\alpha\right)  ^{1/2}e^{-2(\operatorname{Re}\alpha)x^{2}}\\
|\phi_{1}(p)|^{2}  &  =|\phi_{2}(p)|^{2}=\left(  \frac{2}{\pi}%
\operatorname{Re}\alpha\right)  ^{1/2}\frac{1}{2|\alpha|}%
e^{-2(\operatorname{Re}\alpha)x^{2}}.
\end{align*}
However the states $|\psi_{1}\rangle$ and $|\psi_{2}\rangle$ are different,
because%
\[
|\langle\psi_{1}|\psi_{2}\rangle|^{2}=\frac{\operatorname{Re}\alpha}{|\alpha
|}\neq1
\]
so that we cannot have $\psi_{1}=c\psi_{2}$ for some complex constant $c$ such
that $|c|=1$.

\subsubsection{The Radon transform}

Let us begin with a short discussion of the classical case followed by a
motivation for the quantum case; we are following the very clear exposition in
\S 12.5 of the recent book \cite{espo} by Esposito \textit{et al}. Let
$\rho=\rho(x,p)$ be some function defined on the phase space $\mathbb{R}^{2}$
(it could be a classical probability, or a quantum quasiprobability). The
Radon transform of this function is defined formally in physics texts by
\begin{equation}
\mathcal{R}\rho(X,\mu,\nu)=\iint\rho(x,p)\delta(X-\mu x-\nu p)dpdx.
\label{Radon1}%
\end{equation}
The function $\mathcal{R}\rho(X,\mu,\nu)$ is called a \emph{tomogram}. The
meaning of this integral is that one integrates the function $\rho$ along the
line $\ell$ with equation $X-\mu x-\nu p=0$ where the variable $X$ has an
arbitrary value. Now, the crucial point is that if we know fully the function
$R\rho$ -- which depends on the three real variables $X,\mu,\nu$ -- then one
can reconstruct the function $\rho$ by using the so-called \textquotedblleft
inverse Radon transform\textquotedblright:%
\begin{equation}
\rho(x,p)=\left(  \tfrac{1}{2\pi}\right)  ^{2}\iiint\mathcal{R}\rho(X,\mu
,\nu)e^{i(X-\mu x-\nu p)}dXd\mu d\nu. \label{Radon2}%
\end{equation}
Let us \textquotedblleft verify\textquotedblright\ this formula, just for the
fun of making formal calculations. Denoting by $A$ the triple integral above
we have%
\begin{align*}
A  &  =\int\rho(x^{\prime},p^{\prime})\left\{  \iint\left[
{\textstyle\int}
e^{i(X-\mu x-\nu p)}\delta(X-\mu x^{\prime}-\nu p^{\prime})dX\right]  d\mu
d\nu\right\}  dp^{\prime}dx^{\prime}\\
&  =\int\rho(x^{\prime},p^{\prime})\left\{  \iint\left[  e^{i(\mu
(x-x)+\nu(p-p^{\prime}))}\int\delta(X-\mu x^{\prime}-\nu p^{\prime})dX\right]
d\mu d\nu\right\}  dp^{\prime}dx^{\prime}\\
&  =\int\rho(x^{\prime},p^{\prime})\left\{  \iint e^{i(\mu(x-x)+\nu
(p-p^{\prime}))}d\mu d\nu\right\}  dp^{\prime}dx^{\prime}\\
&  =(2\pi)^{2}\int\rho(x^{\prime},p^{\prime})\delta(x-x^{\prime}%
)\delta(p-p^{\prime})dp^{\prime}dx^{\prime}%
\end{align*}
and hence $A=(2\pi)^{2}\rho(x,p)$, so we are done.

In the quantum case one proceeds as follows: noting that $\delta(X-\mu x-\nu
p)$ can be rewritten as the Fourier integral%
\[
\delta(X-\mu x-\nu p)=\frac{1}{2\pi\hbar}\int_{-\infty}^{\infty}e^{\frac
{i}{\hbar}k(X-\mu x-\nu p)}dk
\]
and hence (\ref{Radon1}) as%
\[
\mathcal{R}\rho(X,\mu,\nu)=\frac{1}{2\pi\hbar}\iint\rho(x,p)\left(
\int_{-\infty}^{\infty}e^{\frac{i}{\hbar}k(X-\mu x-\nu p)}dk\right)  dpdx
\]
one \emph{defines} the quantum Radon transform by analogy with (\ref{Radon1})
by the formula
\[
\mathcal{R}\rho(X,\mu,\nu)=\frac{1}{2\pi\hbar}\operatorname{Tr}\left(
\widehat{\rho}\int_{-\infty}^{\infty}e^{\frac{i}{\hbar}k(\widehat{X}%
-\mu\widehat{x}-\nu\widehat{p})}dk\right)  .
\]
Doing this one has of course to give a precise meaning to the exponential
$e^{\frac{i}{\hbar}k(\widehat{X}-\mu\widehat{x}-\nu\widehat{p})}$, and one
then \textquotedblleft inverts\textquotedblright\ the formula above, which
yields in analogy with (\ref{Radon2})%
\[
\widehat{\rho}=\frac{1}{2\pi\hbar}\int\mathcal{R}\rho(X,\mu,\nu)e^{\frac
{i}{\hbar}(\widehat{X}-\mu\widehat{x}-\nu\widehat{p})}dXd\mu d\nu.
\]
Of course these manipulations are heuristic and completely formal and do not
have any mathematical sense unless one defines rigorously the involved operators.

\section{Mathematical Theory of the Density Matrix}

To give a precise definition of the trace formulas above, we need to work a
little bit and use some operator theory (the theory of Hilbert--Schmidt and
trace class operators suffices at this point). We will use several times the
generalized Bessel equality%
\begin{equation}
\sum_{j}\langle\psi|\psi_{j}\rangle\langle\phi|\psi_{j}\rangle^{\ast}%
=\langle\psi|\phi\rangle\label{Besseleq}%
\end{equation}
valid for every orthonormal basis of a Hilbert space $\mathcal{H}$. It is an
immediate consequence of the equality $\sum_{j}|\psi_{j}\rangle\langle\psi
_{j}|=I$.

\subsection{Trace class operators}

\subsubsection{General algebraic definition}

Let us now define the notion of density operator in terms of the so important
notion of trace class operators. The starting point is to notice that formula
(\ref{trc1}) implies that a density operator has, to begin with, three basic properties:

\begin{enumerate}
\item[(1)] It is a \emph{bounded operator} on $\mathcal{H}$; in particular
$\widehat{\rho}$ is well-defined for all $\psi\in\mathcal{H}$;

\item[(2)] It is a \emph{self-adjoint operator}: $\widehat{\rho}^{\dag
}=\widehat{\rho}$;

\item[(3)] It is a \emph{positive operator}: $\langle\psi|\widehat{\rho}%
|\psi\rangle\geq0$ for all $\psi\in\mathcal{H}$.
\end{enumerate}

The boundedness of $\widehat{\rho}$ follows from the observation that
\[
||\widehat{\rho}\psi||\leq\sum_{j}\alpha_{j}||\widehat{\rho}_{j}\psi||\leq
\sum_{j}\alpha_{j}||\psi||=||\psi||
\]
where $||\psi||=\sqrt{\langle\psi|\psi\rangle}$; that $||\widehat{\rho}%
_{j}\psi||\leq||\psi||$ follows from the Cauchy--Schwarz inequality since
$\widehat{\rho}_{j}\psi=|\psi_{j}\rangle\langle\psi_{j}|\psi\rangle$. The
self-adjointness of $\widehat{\rho}$ is clear, since it is a linear
combination of the self-adjoint operators $\widehat{\rho}_{j}=|\psi_{j}%
\rangle\langle\psi_{j}|$. Finally, the positivity of $\widehat{\rho}$ is clear
since we have%
\[
\langle\psi|\widehat{\rho}|\psi\rangle=\sum_{j}\alpha_{j}\langle\psi|\psi
_{j}\rangle\langle\psi_{j}|\psi\rangle=\sum_{j}\alpha_{j}|\langle\psi|\psi
_{j}\rangle|^{2}\geq0
\]
(notice that $(3)\Longrightarrow(2)$ when $\mathcal{H}$ is a complex Hilbert
space). There remains the trace issue. As we mentioned above, physicists
define the trace of an operator $\widehat{\rho}$ as the sum of its diagonal
elements, and this definition only makes sense without further restrictions
when $\dim\mathcal{H}<\mathcal{\infty}$. The way to do things correctly
consists in using the mathematical definition of trace class operators; the
latter is very general (and hence very useful) because it involves arbitrary
bases of $\mathcal{H}$. The definition goes as follows: a bounded operator
$\widehat{\rho}$ on a Hilbert space $\mathcal{H}$ (we do not assume
self-adjointness or positivity at this point) is of \emph{trace class} if
there exist two orthonormal bases $(\phi_{j})$ and $(\chi_{j})$ of
$\mathcal{H}$ (indexed by the same set) such that%
\begin{equation}
\sum_{j}|\langle\widehat{\rho}\phi_{j}|\chi_{j}\rangle|<\infty. \label{trc4}%
\end{equation}
Notice that this definition immediately implies that $\widehat{\rho}$ is of
trace class if and only if its adjoint $\widehat{\rho}^{\dag}$ is. Now, a
crucial property is that if condition (\ref{trc4}) holds for one pair of
orthonormal bases, then it also holds for \emph{all} pairs of orthonormal
bases, and that if $(\psi_{j})$ and $(\phi_{j})$ are two such pairs then%
\begin{equation}
\sum_{i}\langle\psi_{j}|\widehat{\rho}\psi_{j}\rangle=\sum_{i}\langle\phi
_{j}|\widehat{\rho}\phi_{j}\rangle\label{trc5}%
\end{equation}
each series being absolutely convergent. The proof of this result is not
difficult; it consists in expanding each base using the vectors of the other;
we refer to Chapter 12 in de Gosson \cite{Birkbis} and to the Appendix 3 in
Shubin's book \cite{sh87} for complete proofs. This being done, we
\emph{define} the trace of $\widehat{\rho}$ by the formula
\begin{equation}
\operatorname{Tr}(\widehat{\rho})=\sum_{k}\langle\psi_{k}|\widehat{\rho}%
\psi_{k}\rangle\label{tr6}%
\end{equation}
where $(\psi_{j})$ is any orthonormal basis of $\mathcal{H}$; that the result
does not depend on the choice of such a basis follows from the identity
(\ref{trc5}). We leave to the reader to verify that as a consequence of this
definition the sum of two trace class operators is again a trace class
operator, and that the trace of their sum is the sum of the traces. Also,
$\operatorname{Tr}(\lambda\widehat{\rho})=\lambda\operatorname{Tr}%
(\widehat{\rho})$ for every complex number $\lambda$, hence density matrices
form a convex cone. Also notice that it immediately follows from definition
(\ref{tr6}) that
\begin{equation}
\operatorname{Tr}(\widehat{\rho}^{\dag})=\operatorname{Tr}(\widehat{\rho
})^{\ast} \label{tr7}%
\end{equation}
hence the trace is real if $\widehat{\rho}$ is self-adjoint.

\subsubsection{Invariance under unitary conjugation}

The following invariance of the trace under unitary conjugation is well known:

\begin{theorem}
Let $\widehat{\rho}$ be a density matrix on the Hilbert space $\mathcal{H}$
and $\widehat{U}$ a unitary operator on $\mathcal{H}$. Then $\widehat{U}%
^{\dag}\widehat{\rho}\widehat{U}$ is also a positive trace class operator and
we have
\begin{equation}
\operatorname*{Tr}(\widehat{U}^{\dag}\widehat{\rho}\widehat{U}%
)=\operatorname*{Tr}(\widehat{A}). \label{truau12}%
\end{equation}

\end{theorem}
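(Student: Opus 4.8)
The plan is to verify in turn the three structural properties that qualify $\widehat{U}^{\dag}\widehat{\rho}\widehat{U}$ as a positive trace class operator, and then the trace identity, using throughout the single fact that $\widehat{U}$ is unitary, i.e. $\widehat{U}\widehat{U}^{\dag}=\widehat{U}^{\dag}\widehat{U}=I$, so that $(\phi_{j})$ is an orthonormal basis of $\mathcal{H}$ if and only if $(\widehat{U}\phi_{j})$ is one (orthonormality is preserved because $\widehat{U}$ is isometric, completeness because it is surjective).

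First I would dispatch boundedness, self-adjointness and positivity. Boundedness is immediate since a product of bounded operators is bounded; self-adjointness follows from $(\widehat{U}^{\dag}\widehat{\rho}\widehat{U})^{\dag}=\widehat{U}^{\dag}\widehat{\rho}^{\dag}\widehat{U}=\widehat{U}^{\dag}\widehat{\rho}\widehat{U}$ because $\widehat{\rho}^{\dag}=\widehat{\rho}$; and positivity from $\langle\psi|\widehat{U}^{\dag}\widehat{\rho}\widehat{U}\psi\rangle=\langle\widehat{U}\psi|\widehat{\rho}|\widehat{U}\psi\rangle\ge 0$ for every $\psi\in\mathcal{H}$, using that $\widehat{\rho}\ge 0$.

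Next, the trace-class property. Let $(\phi_{j})$ and $(\chi_{j})$ be orthonormal bases witnessing (\ref{trc4}) for $\widehat{\rho}$, and set $\phi_{j}'=\widehat{U}^{\dag}\phi_{j}$, $\chi_{j}'=\widehat{U}^{\dag}\chi_{j}$; these are again orthonormal bases. A direct computation gives
\[
\langle\widehat{U}^{\dag}\widehat{\rho}\widehat{U}\,\phi_{j}'|\chi_{j}'\rangle=\langle\widehat{U}^{\dag}\widehat{\rho}\phi_{j}|\widehat{U}^{\dag}\chi_{j}\rangle=\langle\widehat{\rho}\phi_{j}|\widehat{U}\widehat{U}^{\dag}\chi_{j}\rangle=\langle\widehat{\rho}\phi_{j}|\chi_{j}\rangle,
\]
where I repeatedly used $\widehat{U}\widehat{U}^{\dag}=I$ together with the defining property of the adjoint. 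Summing the absolute values and invoking (\ref{trc4}) for $\widehat{\rho}$ shows $\sum_{j}|\langle\widehat{U}^{\dag}\widehat{\rho}\widehat{U}\,\phi_{j}'|\chi_{j}'\rangle|<\infty$, so $\widehat{U}^{\dag}\widehat{\rho}\widehat{U}$ is of trace class.

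Finally, for the trace identity I would pick any orthonormal basis $(\psi_{k})$ of $\mathcal{H}$ and use the basis-independent formula (\ref{tr6}):
\[
\operatorname{Tr}(\widehat{U}^{\dag}\widehat{\rho}\widehat{U})=\sum_{k}\langle\psi_{k}|\widehat{U}^{\dag}\widehat{\rho}\widehat{U}\psi_{k}\rangle=\sum_{k}\langle\widehat{U}\psi_{k}|\widehat{\rho}|\widehat{U}\psi_{k}\rangle.
\]
Since $(\widehat{U}\psi_{k})$ is again an orthonormal basis, the right-hand side equals $\operatorname{Tr}(\widehat{\rho})$ by the independence of (\ref{tr6}) of the chosen basis, i.e. by (\ref{trc5}). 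There is essentially no obstacle here; the only point requiring a word of care is the standing use of the elementary fact that a unitary operator maps orthonormal bases onto orthonormal bases — once that is granted, both the trace-class claim and the equality of traces reduce to the one-line computations above. One could alternatively appeal to cyclicity of the trace, writing $\operatorname{Tr}(\widehat{U}^{\dag}(\widehat{\rho}\widehat{U}))=\operatorname{Tr}((\widehat{\rho}\widehat{U})\widehat{U}^{\dag})=\operatorname{Tr}(\widehat{\rho})$, but since cyclicity has not yet been established at this point of the text, the direct basis argument is to be preferred.
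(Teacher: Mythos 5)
Your proof is correct and follows essentially the same route as the paper: transport an orthonormal basis by the unitary $\widehat{U}$ and invoke the basis-independence of the trace formula (\ref{tr6}). The only (harmless) difference is that you verify the trace-class property from the two-basis definition (\ref{trc4}), whereas the paper uses the one-basis criterion $\sum_{j}\langle\psi_{j}|\widehat{\rho}|\psi_{j}\rangle<\infty$ appropriate for positive operators; your variant is if anything slightly more careful, and your closing remark about avoiding cyclicity (not yet established at that point in the text) is well taken.
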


\begin{proof}
It is clear that $\widehat{U}^{\dag}\widehat{\rho}\widehat{U}$ is a positive,
bounded, and self-adjoint operator. The operator $\widehat{\rho}$ is of trace
class if and only if $\sum_{j}\langle\psi_{j}|\widehat{\rho}|\psi_{j}%
\rangle<\infty$ for one (and hence every) orthonormal basis $(\psi_{j})_{j}$
of $\mathcal{H}$. Since $\langle\psi_{j}|\widehat{U}^{\dag}\widehat{\rho
}\widehat{U}|\psi_{j}\rangle=\langle\widehat{U}\psi_{j}|\widehat{\rho
}\widehat{U}\psi_{j}\rangle$ and the basis $(\widehat{U}\psi_{j})_{j}$ also is
orthonormal, the operator $\widehat{U}^{\dag}\widehat{\rho}\widehat{U}$ is of
trace class. The trace formula (\ref{truau12}) follows from the orthonormal
basis independence of the trace formula (\ref{tr6}).
\end{proof}

We have not yet related our definition of trace to that used in the previous
section. Let us show that these definitions do coincide. Let
\begin{equation}
\widehat{\rho}=\sum_{j}\alpha_{j}|\psi_{j}\rangle\langle\psi_{j}|
\end{equation}
be a density matrix in the sense of (\ref{trc1}); the vectors $\psi_{j}$ being
normalized we have
\[
\langle\psi_{k}|\widehat{\rho}|\psi_{k}\rangle=\sum_{j}\alpha_{j}|\langle
\psi_{j}|\psi_{k}\rangle|^{2}=\alpha_{k}%
\]
and hence, using definition (\ref{tr6}) of the trace,
\[
\operatorname{Tr}(\widehat{\rho})=\sum_{k}\alpha_{k}=1.
\]

\subsubsection{The spectral theorem}

An especially useful expansion of a density operator is obtained using
elementary functional analysis (the spectral resolution theorem). Recall that
each eigenvalue of a selfadjoint compact operator (except possibly zero) has
finite multiplicity (see any book on elementary functional analysis,
\textit{e.g.} Blanchard and Br\"{u}ning \cite{blabru} or Landsman \cite{land1}).

\begin{theorem}
\label{ThmB}A bounded linear self-adjoint operator $\widehat{\rho}$ on a
complex Hilbert space $\mathcal{H}$ is of trace class if and only if there
exists a sequence of real numbers $\lambda_{j}\geq0$ and an orthonormal basis
$(\psi_{j})$ of $\mathcal{H}$ such that for all $\psi\in\mathcal{H}$%
\begin{equation}
\widehat{\rho}\psi=\sum_{j}\lambda_{j}\langle\psi_{j}|\psi\rangle\psi_{j}
\label{speck1}%
\end{equation}
that is $\widehat{\rho}=\sum_{j}\lambda_{j}\widehat{\rho}_{j}$ where
$\widehat{\rho}_{j}$ is the orthogonal projection on the ray $\mathbb{C}%
\psi_{j}$. In particular, $\widehat{\rho}$ is a density matrix if and only if
$\lambda_{j}\geq0$ for every $j$ and $\sum_{j}\lambda_{j}=1$; the vector
$\psi_{j}$ is the eigenvector corresponding to the eigenvalue $\lambda_{j}$.
\end{theorem}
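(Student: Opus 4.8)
The statement is essentially the spectral theorem for self-adjoint trace class operators, so the plan is to reduce it to the spectral theorem for compact self-adjoint operators plus the trace-class characterization. The key preliminary step is: a positive (or merely self-adjoint) trace class operator is compact. I would first show that the density-matrix-type operators we are considering are compact. For the "if" direction this is easy: the series $\sum_j \lambda_j \widehat{\rho}_j$ converges in operator norm (since $\lambda_j \to 0$, as $\sum_j \lambda_j < \infty$ forces this, and each partial sum is finite-rank), hence the limit is compact; self-adjointness and the trace formula then follow by computing $\operatorname{Tr}(\widehat{\rho}) = \sum_k \langle \psi_k | \widehat{\rho} \psi_k \rangle = \sum_j \lambda_j$ directly from definition (\ref{tr6}), using that $(\psi_j)$ is itself an orthonormal basis (completing it if necessary by vectors in the kernel). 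The positivity of $\widehat{\rho}$ is immediate from (\ref{speck1}) since $\langle \psi | \widehat{\rho} \psi \rangle = \sum_j \lambda_j |\langle \psi_j | \psi \rangle|^2 \geq 0$.

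For the "only if" direction — the substantive one — I would argue that a self-adjoint trace class operator $\widehat{\rho}$ is Hilbert--Schmidt, hence compact. Indeed, if $(\phi_j)$ is any orthonormal basis, one has $\sum_j \|\widehat{\rho}\phi_j\|^2 = \sum_j \langle \widehat{\rho}\phi_j | \widehat{\rho}\phi_j\rangle = \sum_j \langle \phi_j | \widehat{\rho}^2 \phi_j \rangle \leq \|\widehat{\rho}\| \sum_j \langle \phi_j | \widehat{\rho} \phi_j \rangle$ using $\widehat{\rho}^{\dag} = \widehat{\rho}$ and $0 \leq \widehat{\rho}^2 \leq \|\widehat{\rho}\| \widehat{\rho}$ (which needs positivity; in the general self-adjoint case one first splits $\widehat{\rho} = \widehat{\rho}_+ - \widehat{\rho}_-$ into positive and negative parts, each trace class). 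Since the last sum is finite by trace-class-ness, $\widehat{\rho}$ is Hilbert--Schmidt, therefore a norm-limit of finite-rank operators, therefore compact. Now the spectral theorem for compact self-adjoint operators supplies an orthonormal basis $(\psi_j)$ of eigenvectors with real eigenvalues $\lambda_j$, each nonzero eigenvalue of finite multiplicity and accumulating only at $0$; this is exactly formula (\ref{speck1}). Positivity of $\widehat{\rho}$ forces $\lambda_j \geq 0$. Finally, evaluating (\ref{tr6}) in this eigenbasis gives $\operatorname{Tr}(\widehat{\rho}) = \sum_j \lambda_j$, so $\widehat{\rho}$ is a density matrix precisely when this sum equals $1$.

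The main obstacle is the implication \emph{self-adjoint trace class $\Rightarrow$ compact}, since everything else is bookkeeping on top of the classical spectral theorem. The cleanest route is the one above via the Hilbert--Schmidt estimate, but it quietly uses the operator inequality $\widehat{\rho}^2 \leq \|\widehat{\rho}\|\,\widehat{\rho}$ for positive $\widehat{\rho}$ (equivalently, that the spectrum lies in $[0,\|\widehat{\rho}\|]$ and continuous functional calculus is monotone) — so if one wants to be fully self-contained one should either invoke the continuous functional calculus for bounded self-adjoint operators, or alternatively prove compactness by a more hands-on argument: show that on the orthogonal complement of any finite-dimensional "large eigenvalue" subspace the operator has small norm, using that $\sum_j |\langle \widehat{\rho}\phi_j | \phi_j\rangle| < \infty$ along a suitable basis adapted to $\widehat{\rho}$. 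In the present paper's context it is entirely legitimate to cite the standard fact (de Gosson \cite{Birkbis}, Ch.~12, or Shubin \cite{sh87}, App.~3) that trace class operators are compact and Hilbert--Schmidt, and then only the eigenvalue-sign and trace-computation steps remain, which are a couple of lines each.
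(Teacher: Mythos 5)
Your proposal is correct, but it does considerably more work than the paper does: the paper's ``proof'' of Theorem \ref{ThmB} consists of citing the classical spectral theorem for compact self-adjoint operators (Blanchard--Br\"{u}ning) and then only verifying that each $\psi_{j}$ is an eigenvector for $\lambda_{j}$; the reduction of a trace class operator to a compact one is entirely outsourced to the reference. You supply exactly the missing link -- self-adjoint trace class $\Rightarrow$ Hilbert--Schmidt $\Rightarrow$ norm-limit of finite-rank $\Rightarrow$ compact -- via the estimate $\sum_{j}\|\widehat{\rho}\phi_{j}\|^{2}\leq\|\widehat{\rho}\|\sum_{j}\langle\phi_{j}|\widehat{\rho}\phi_{j}\rangle$, and you correctly flag that this inequality needs positivity (or a decomposition $\widehat{\rho}=\widehat{\rho}_{+}-\widehat{\rho}_{-}$, whose trace-class-ness is itself a small lemma). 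Note that the paper's own Theorem \ref{ThmHS}(ii) later establishes $\mathcal{L}_{1}(\mathcal{H})\subset\mathcal{L}_{2}(\mathcal{H})$ by a different route (polar decomposition), so your Hilbert--Schmidt step is consistent with, and independent of, that result. Your treatment of the ``if'' direction is also sound, with the caveat -- which is a defect of the theorem's statement rather than of your argument -- that the equivalence as literally stated requires the implicit hypotheses $\sum_{j}\lambda_{j}<\infty$ (otherwise the identity operator would be a counterexample) and positivity of $\widehat{\rho}$ (otherwise the eigenvalues need not be $\geq0$); you handle both points correctly. In short: same underlying mathematics, but you prove what the paper merely cites.
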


\begin{proof}
This is a classical result from the theory of compact self-adjoint operators
on a Hilbert space; see any introductory book on functional analysis, for
instance Blanchard and Br\"{u}ning \cite{blabru}. That $\psi_{j}$ \ is an
eigenvector corresponding to $\lambda_{j}$ is clear: since $\langle\psi
_{k}|\psi_{j}\rangle=\delta_{kj}$ we have
\[
\widehat{\rho}\psi_{j}=\sum_{k}\lambda_{k}\langle\psi_{k}|\psi_{j}\rangle
\psi_{k}=\lambda_{j}\psi_{j}.
\]

\end{proof}

An immediate consequence of Theorem \ref{ThmB} is that if $\widehat{\rho}$ is
a density matrix, then $\operatorname{Tr}(\widehat{\rho}^{2})\leq1$ with
equality if and only if $\widehat{\rho}$ represents a pure state: we have
\[
\widehat{\rho}^{2}=\left(  \sum\nolimits_{j}\lambda_{j}\widehat{\rho}%
_{j}\right)  ^{2}=\sum\nolimits_{j,k}\lambda_{j}\lambda_{k}\widehat{\rho}%
_{j}\widehat{\rho}_{k}=\sum\nolimits_{j}\lambda_{j}^{2}\widehat{\rho}_{j}%
\]
the second equality because $\widehat{\rho}_{j}\widehat{\rho}_{k}=0$ if $j\neq
k$ since $\psi_{j}$ and $\psi_{k}$ are then orthogonal, and $\widehat{\rho
}_{j}^{2}=\widehat{\rho}_{j}$. Since $\lambda_{j}^{2}\leq\lambda_{j}\leq1$ we
have%
\[
\operatorname{Tr}(\widehat{\rho}^{2})=\sum\nolimits_{j}\lambda_{j}^{2}\leq1.
\]
The equality $\sum\nolimits_{j}\lambda_{j}^{2}=1$ can only occur if all the
coefficients $\lambda_{j}$ are equal to zero, except one which is equal to
one. Thus $\operatorname{Tr}(\widehat{\rho}^{2})=1$ if and only if
$\widehat{\rho}$ represents a pure state. The number $\operatorname{Tr}%
(\widehat{\rho}^{2})$ is therefore called the \emph{purity }of the quantum
state represented by the density matrix $\widehat{\rho}$. Another way of
measuring the purity of a state is to use the von Neumann entropy
$S(\widehat{\rho})$. By definition:%
\begin{equation}
S(\widehat{\rho})=-\sum_{j}\lambda_{j}\ln\lambda_{j} \label{entropy}%
\end{equation}
(with the convention $0\ln0=0$). One often uses the suggestive notation%
\begin{equation}
S(\widehat{\rho})=-\operatorname{Tr}(\widehat{\rho}\ln\widehat{\rho})
\label{entropy2}%
\end{equation}
but one should then not forget that the right-hand side of this equality is
defined by (\ref{entropy}), and not the other way around! Notice that the von
Neumann entropy $S(\widehat{\rho})$ is zero if and only if $\widehat{\rho}$ is
a pure state.

\subsubsection{Functional properties}

Perhaps the most directly useful property of trace class operators (and hence
of density matrices) is Theorem \ref{ThmTC} below; it says that if we compose
a trace class operator with any bounded operator we obtain again a trace class operator.

We denote by $\mathcal{L}_{1}(\mathcal{H)}$ the set of all trace class
operators on the Hilbert space $\mathcal{H}$ and by $\mathcal{B}(\mathcal{H)}$
the algebra of bounded linear operators on $\mathcal{H}$. The following
important result justifies the trace formula (\ref{trc3}) for bounded observables:

\begin{theorem}
\label{ThmTC}The set $\mathcal{L}_{1}(\mathcal{H})$ of all trace class
operators on $\mathcal{H}$ is both a vector subspace of $\mathcal{B}%
(\mathcal{H)}$ and a two-sided ideal in $\mathcal{B}(\mathcal{H)}$: if
$\widehat{\rho}\in\mathcal{L}_{1}(\mathcal{H})$ and $\widehat{A}\in
\mathcal{B}(\mathcal{H)}$ then $\widehat{\rho}\widehat{A}\in\mathcal{L}%
_{1}(\mathcal{H})$ and $\widehat{A}\widehat{\rho}\in\mathcal{L}_{1}%
(\mathcal{H})$ and we have%
\begin{equation}
\operatorname{Tr}(\widehat{\rho}\widehat{A})=\operatorname{Tr}(\widehat{A}%
\widehat{\rho}). \label{trab1}%
\end{equation}
Formula (\ref{trab1}) applies in particular when $\widehat{\rho}$ is a density
matrix and $\widehat{A}$ a bounded quantum observable on $\mathcal{H}$.
\end{theorem}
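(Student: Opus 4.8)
The plan is to prove the ideal property first and then deduce the trace identity from it. First I would establish that $\mathcal{L}_{1}(\mathcal{H})$ is a vector subspace: this is already granted in the preceding text (the sum of two trace class operators is trace class, and scalar multiples are trace class), so nothing new is needed there. The substantive part is the two-sided ideal property. I would argue it in two stages. \emph{Stage one: left multiplication by a bounded operator.} Let $\widehat{\rho}\in\mathcal{L}_{1}(\mathcal{H})$ and $\widehat{A}\in\mathcal{B}(\mathcal{H})$, and let $(\psi_{j})$ be an orthonormal basis. Using the spectral decomposition of $\widehat{\rho}$ (Theorem~\ref{ThmB}) is cleanest when $\widehat{\rho}$ is self-adjoint; for the general case one should instead invoke the polar decomposition $\widehat{\rho}=\widehat{U}|\widehat{\rho}|$ together with the fact, noted after (\ref{trc4}), that $\widehat{\rho}$ is trace class iff $\widehat{\rho}^{\dag}$ is, and the (standard, citable) fact that $|\widehat{\rho}|$ is then trace class with an orthonormal eigenbasis $(\psi_{j})$ and eigenvalues $\lambda_{j}\geq0$ summing to a finite value. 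Writing $\widehat{\rho}=\sum_{j}\lambda_{j}|\widehat{U}\psi_{j}\rangle\langle\psi_{j}|$, I would estimate $\sum_{k}|\langle \chi_{k}|\widehat{A}\widehat{\rho}\,\chi_{k}\rangle|$ for an arbitrary orthonormal basis $(\chi_{k})$ by expanding, applying Cauchy--Schwarz in the index $j$, and using $\|\widehat{A}\|<\infty$ together with Bessel's inequality to bound the sum by $\|\widehat{A}\|\sum_{j}\lambda_{j}<\infty$. This shows $\widehat{A}\widehat{\rho}\in\mathcal{L}_{1}(\mathcal{H})$. \emph{Stage two: right multiplication.} Since $(\widehat{\rho}\widehat{A})^{\dag}=\widehat{A}^{\dag}\widehat{\rho}^{\dag}$, and $\widehat{\rho}^{\dag}$ is trace class with $\widehat{A}^{\dag}$ bounded, Stage one gives $\widehat{A}^{\dag}\widehat{\rho}^{\dag}\in\mathcal{L}_{1}(\mathcal{H})$, hence so is its adjoint $\widehat{\rho}\widehat{A}$. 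This disposes of the ideal claim.

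For the trace identity (\ref{trab1}), I would first reduce to a convenient basis. Decompose $\widehat{\rho}=\sum_{j}\lambda_{j}|e_{j}\rangle\langle f_{j}|$ as above (with $(e_{j})$, $(f_{j})$ orthonormal and $\sum_{j}\lambda_{j}<\infty$; in the density-matrix case $e_{j}=f_{j}=\psi_{j}$ and this is literally Theorem~\ref{ThmB}). Then compute $\operatorname{Tr}(\widehat{\rho}\widehat{A})$ using the orthonormal basis $(f_{j})$ and $\operatorname{Tr}(\widehat{A}\widehat{\rho})$ using $(e_{j})$ — both legitimate since the trace (\ref{tr6}) is basis-independent — and observe that both sums collapse to $\sum_{j}\lambda_{j}\langle f_{j}|\widehat{A}e_{j}\rangle$ after inserting the expansion of $\widehat{\rho}$ and using orthonormality. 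The absolute convergence needed to rearrange is exactly what Stage one's estimate provides. In the self-adjoint/density-matrix case this is even more transparent: with $(\psi_{j})$ the eigenbasis, $\operatorname{Tr}(\widehat{\rho}\widehat{A})=\sum_{j}\lambda_{j}\langle\psi_{j}|\widehat{A}|\psi_{j}\rangle=\operatorname{Tr}(\widehat{A}\widehat{\rho})$, and absolute convergence follows from $|\langle\psi_{j}|\widehat{A}|\psi_{j}\rangle|\leq\|\widehat{A}\|$ and $\sum_{j}\lambda_{j}=1$.

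The main obstacle is the general (non-self-adjoint) case of the ideal property: without self-adjointness one cannot immediately write down an eigenbasis, so one must either route through the polar/singular-value decomposition of $\widehat{\rho}$ or give a direct two-basis argument from the definition (\ref{trc4}), carefully controlling the double sum that arises when one basis is expanded in terms of another after composing with $\widehat{A}$. Everything downstream — the ideal property on the other side by adjunction, and the cyclicity (\ref{trab1}) — follows routinely once that convergence bound is in hand. Since the paper's intended applications (Theorem~\ref{ThmTC} is advertised as justifying (\ref{trc3})) concern density matrices, which are self-adjoint, I would likely present the self-adjoint argument in full and remark that the general case follows by polar decomposition, citing de Gosson~\cite{Birkbis} or Shubin~\cite{sh87} for the decomposition itself.
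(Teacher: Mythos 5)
Your proof is correct, but it takes a genuinely different route from the paper's. You route everything through the polar/singular--value decomposition $\widehat{\rho}=\sum_{j}\lambda_{j}|e_{j}\rangle\langle f_{j}|$ with $\lambda_{j}\geq0$ and $\sum_{j}\lambda_{j}<\infty$, and both the ideal property and the cyclicity (\ref{trab1}) then drop out of the summability of the singular values. The paper does not use this decomposition here: it works directly from the two-basis definition (\ref{trc4}), writes $\langle\widehat{A}\widehat{\rho}\psi_{j}|\phi_{j}\rangle=\langle\widehat{\rho}\psi_{j}|\widehat{A}^{\dagger}\phi_{j}\rangle$, expands via the Bessel equality (\ref{Besseleq}) and bounds the $\widehat{A}^{\dagger}$-factors by $\Vert\widehat{A}\Vert$ using Cauchy--Schwarz, and proves (\ref{trab1}) by exhibiting $\operatorname{Tr}(\widehat{\rho}\widehat{A})$ and $\operatorname{Tr}(\widehat{A}\widehat{\rho})$ as the same double sum over a single basis. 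The paper's route is self-contained relative to its definition; yours requires importing the (standard, citable) fact that (\ref{trc4}) forces $\sum_{j}\lambda_{j}<\infty$, which you correctly flag. In exchange your estimate closes cleanly at $\Vert\widehat{A}\Vert\sum_{j}\lambda_{j}$, whereas the paper's terminates in the double sum $\sum_{j,k}|\langle\widehat{\rho}\psi_{j}|\phi_{k}\rangle|$, which is \emph{not} finite for a general trace class operator: already for $\widehat{\rho}=|\psi\rangle\langle\psi|$ it factors as $\bigl(\sum_{j}|\langle\psi|\psi_{j}\rangle|\bigr)\bigl(\sum_{k}|\langle\psi|\phi_{k}\rangle|\bigr)$, which diverges whenever the expansion coefficients of $\psi$ lie in $\ell^{2}$ but not in $\ell^{1}$. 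So your decomposition-based argument is the more robust of the two at the one genuinely delicate step. Your adjoint trick $\widehat{\rho}\widehat{A}=(\widehat{A}^{\dagger}\widehat{\rho}^{\dagger})^{\dagger}$ for the second inclusion is identical to the paper's. One small point of hygiene: in the trace computation the orthonormal systems $(e_{j})$ and $(f_{j})$ must be completed to orthonormal bases of $\mathcal{H}$ before being used in (\ref{tr6}); this is harmless since the added vectors carry singular value zero.
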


\begin{proof}
That $\mathcal{L}_{1}(\mathcal{H})$ is a vector space is clear using formula
(\ref{trc4}): if $\sum_{j}|\langle\widehat{\rho}_{1}\phi_{j}|\chi_{j}%
\rangle|<\infty$ and $\sum_{j}|\langle\widehat{\rho}_{2}\phi_{j}|\chi
_{j}\rangle|<\infty$ then we have, using the triangle inequality,%
\[
\sum_{j}|\langle(\widehat{\rho}_{1}+\widehat{\rho}_{2})\phi_{j}|\chi
_{j}\rangle|\leq\sum_{j}|\langle\widehat{\rho}_{1}\phi_{j}|\chi_{j}%
\rangle|+\sum_{j}|\langle\widehat{\rho}_{2}\phi_{j}|\chi_{j}\rangle|<\infty
\]
so that $\widehat{\rho}_{1}+\widehat{\rho}_{2}\in\mathcal{L}_{1}(\mathcal{H}%
)$; that $\lambda\widehat{\rho}\in\mathcal{L}_{1}(\mathcal{H})$ if
$\widehat{\rho}\in\mathcal{L}_{1}(\mathcal{H})$ and $\lambda\in\mathbb{C}$ is
clear. Let us show that $\widehat{A}\widehat{\rho}\in\mathcal{L}%
_{1}(\mathcal{H})$ if $\widehat{\rho}\in\mathcal{L}_{1}(\mathcal{H})$ and
$\widehat{A}$ is a bounded operator on $\mathcal{H}$. Recall that the
boundedness of $\widehat{A}$ is equivalent to the existence of a number
$C\geq0$ such that $||\widehat{A}\psi||\leq C||\psi||$ for all $\psi
\in\mathcal{H}$. Let now $(\psi_{j})$ and $(\phi_{j})$ be two orthonormal
bases of $\mathcal{H}$; writing $\langle\widehat{A}\widehat{\rho}\psi_{j}%
|\phi_{j}\rangle=\langle\widehat{\rho}\psi_{j}|\widehat{A}^{\dagger}\phi
_{j}\rangle$ and applying Bessel's equality (\ref{Besseleq}) to $\langle
\widehat{\rho}\psi_{j}|\widehat{A}^{\dagger}\phi_{j}\rangle$ we get%
\begin{equation}
\langle\widehat{A}\widehat{\rho}\psi_{j}|\phi_{j}\rangle=\sum_{k}%
\langle\widehat{\rho}\psi_{j}|\phi_{k}\rangle\langle\widehat{A}^{\dagger}%
\phi_{j}|\psi_{k}\rangle^{\ast}; \label{Bessel}%
\end{equation}
using the Cauchy--Schwarz inequality we have, since $||\widehat{A}^{\dagger
}\phi_{j}||\leq C$,%
\[
|\langle\widehat{A}^{\dagger}\phi_{j}|\psi_{k}\rangle^{\ast}|\leq
||\widehat{A}^{\dagger}\phi_{j}||\,||\psi_{k}||\leq C
\]
and hence%
\[
|\langle\widehat{A}\widehat{\rho}\psi_{j}|\phi_{j}\rangle|\leq\sum_{k}%
|\langle\widehat{\rho}\psi_{j}|\phi_{k}\rangle\langle\widehat{A}^{\dagger}%
\phi_{j}|\psi_{k}\rangle^{\ast}|\leq C\sum_{k}|\langle\widehat{\rho}\psi
_{j}|\phi_{k}\rangle|.
\]
Summing this inequality with respect to the index $j$ yields, since
$\widehat{\rho}$ is of trace class
\[
|\sum_{j}\langle\widehat{A}\widehat{\rho}\psi_{j}|\phi_{j}\rangle|\leq
C\sum_{j,k}|\langle\widehat{\rho}\psi_{j}|\phi_{k}\rangle|<\infty
\]
hence $\widehat{A}\widehat{\rho}$ is of trace class as claimed. That
$\widehat{\rho}\widehat{A}$ also is of trace class is immediate noting that we
can write $\widehat{\rho}\widehat{A}=(\widehat{A}^{\dagger}\widehat{\rho
}^{\dagger})^{\dagger}$. There remains to prove the trace equality
(\ref{trab1}). Choosing $(\psi_{j})=(\phi_{j})$ the Bessel equality
(\ref{Bessel}) becomes%
\[
\langle\widehat{A}\widehat{\rho}\psi_{j}|\psi_{j}\rangle=\sum_{k}%
\langle\widehat{\rho}\psi_{j}|\psi_{k}\rangle\langle\widehat{A}^{\dagger}%
\psi_{j}|\psi_{k}\rangle^{\ast}%
\]
and, similarly,%
\begin{align*}
\langle\widehat{\rho}\widehat{A}\psi_{j}|\psi_{j}\rangle &  =\sum_{k}%
\langle\widehat{A}\psi_{j}|\psi_{k}\rangle\langle\widehat{\rho}^{\dagger}%
\psi_{j}|\psi_{k}\rangle^{\ast}\\
&  =\sum_{k}\langle\widehat{\rho}\psi_{k}|\psi_{j}\rangle\langle
\widehat{A}^{\dagger}\psi_{k}|\psi_{j}\rangle^{\ast}.
\end{align*}
Summing this equality over $j$ we get%
\begin{align*}
\operatorname{Tr}(\widehat{\rho}\widehat{A})  &  =\sum_{j}\langle
\widehat{A}\widehat{\rho}\psi_{j}|\psi_{j}\rangle=\sum_{j,k}\langle
\widehat{\rho}\psi_{j}|\psi_{k}\rangle\langle\widehat{A}^{\dagger}\psi
_{j}|\psi_{k}\rangle^{\ast}\\
\operatorname{Tr}(\widehat{A}\widehat{\rho})  &  =\sum_{j}\langle
\widehat{\rho}\widehat{A}\psi_{j}|\psi_{j}\rangle=\sum_{j,k}\langle
\widehat{\rho}\psi_{k}|\psi_{j}\rangle\langle\widehat{A}^{\dagger}\psi
_{k}|\psi_{j}\rangle^{\ast}%
\end{align*}
hence $\operatorname{Tr}(\widehat{\rho}\widehat{A})=\operatorname{Tr}%
(\widehat{A}\widehat{\rho})$ since the sums of both right-hand sides are identical.
\end{proof}

\subsection{Hilbert--Schmidt operators\label{sectionHS}}

We are following here de Gosson \cite{Birkbis}, \S 12.1, and Shubin
\cite{sh87}, Appendix 3.

\subsubsection{The trace norm}

An operator $\widehat{A}$ on a Hilbert space $\mathcal{H}$ is called a
\textit{Hilbert--Schmidt operator} if there exists an orthonormal basis
$(\psi_{j})$ of $\mathcal{H}$ such that%
\begin{equation}
\sum_{j}\langle\widehat{A}\psi_{j}|\widehat{A}\psi_{j}\rangle=\sum
_{j}||\widehat{A}\psi_{j}||^{2}<\infty. \label{HSdef}%
\end{equation}
In particular such an operator is bounded on $\mathcal{H}$. As in the case of
trace class operators one shows that if this property holds for one
orthonormal basis then it holds for all, and that the number $\sum
_{j}||\widehat{A}\psi_{j}||^{2}$ is independent of the choice of such a basis:
Let in fact $(\phi_{j})$ be a second orthonormal basis, and write
$\widehat{A}\psi_{j}=\sum_{k}\langle\phi_{j}|\widehat{A}\psi_{j}\rangle
\phi_{k}$. Then, using the Bessel equality (\ref{Besseleq})
\[
\sum_{j}||\widehat{A}\psi_{j}||^{2}=\sum_{j,k}|\langle\phi_{j}|\widehat{A}%
\psi_{j}\rangle|^{2}=\sum_{j,k}|\langle\widehat{A}^{\dag}\phi_{j}|\psi
_{j}|^{2}%
\]
that is, using again (\ref{Besseleq}),
\[
\sum_{j}||\widehat{A}\psi_{j}||^{2}=\sum_{k}||\widehat{A}^{\dag}\phi_{k}%
||^{2}<\infty.
\]
Observe that this equality shows that if we take $(\psi_{j})_{j}=(\phi
_{j})_{j}$ then $\sum_{k}||\widehat{A}^{\dag}\psi_{k}||_{\mathcal{H}}%
^{2}<\infty$ hence the adjoint $\widehat{A}^{\dag}$ of a Hilbert--Schmidt
operator is also a Hilbert--Schmidt operator; we may thus replace
$\widehat{A}$ by $\widehat{A}^{\dag}$ in the inequality above, which yields
$\sum_{k}||\widehat{A}^{\dag}\phi_{k}||^{2}<\infty$ as claimed.

Hilbert--Schmidt operators form a vector space $\mathcal{L}_{2}(\mathcal{H)}%
$\ and%
\begin{equation}
||\widehat{A}||_{\mathrm{HS}}=\left(
{\textstyle\sum\nolimits_{j}}
||\widehat{A}\psi_{j}||^{2}\right)  ^{1/2} \label{HSnorm}%
\end{equation}
defines a norm on this space ; this norm is associated with the scalar
product
\begin{equation}
\langle\widehat{A}|\widehat{B}\rangle_{\mathrm{HS}}=\operatorname{Tr}%
(\widehat{A}^{\dag}\widehat{B})=%
{\textstyle\sum\nolimits_{j}}
\langle\widehat{A}\psi_{j}|\widehat{B}\psi_{j}\rangle. \label{HSAB}%
\end{equation}

If\textit{ }$\widehat{A}$ and $\widehat{B}$ are Hilbert--Schmidt operators
then $\lambda\widehat{A}$ is trivially a Hilbert--Schmidt operator and
$||\lambda\widehat{A}||_{\text{HS}}=|\lambda|\,||\widehat{A}||_{\text{HS}}$
for every $\lambda\in\mathbb{C}$; on the other hand, by the triangle
inequality,%
\[
\sum_{j}||(\widehat{A}+\widehat{B})\psi_{j}||^{2}\leq\sum_{j}||\widehat{A}%
\psi_{j}||^{2}+\sum_{j}||\widehat{B}\psi_{j}||^{2}<\infty
\]
for every orthonormal basis $(\psi_{j})$ hence $\widehat{A}+\widehat{B}$ is
also a Hilbert--Schmidt operator and we have
\[
||\widehat{A}+\widehat{B}||_{\text{HS}}^{2}\leq||\widehat{A}||_{\text{HS}}%
^{2}+||\widehat{B}||_{\text{HS}}^{2}\leq(||\widehat{A}||_{\text{HS}%
}+||\widehat{B}||_{\text{HS}})^{2}%
\]
and hence%
\[
||\widehat{A}+\widehat{B}||_{\text{HS}}\leq||\widehat{A}||_{\text{HS}%
}+||\widehat{B}||_{\text{HS}}.
\]
Finally, $||\widehat{A}||_{\text{HS}}=0$ is equivalent to $\widehat{A}\psi
_{j}=0$ for every index $j$ that is to $\widehat{A}=0$.

The space $\mathcal{L}_{2}(\mathcal{H})$ is complete for that norm, and hence
a Banach space (it is actually even a Hilbert space when equipped with the
scalar product (\ref{HSAB}). In addition $\mathcal{L}_{2}(\mathcal{H)}$ is
closed under multiplication (and hence an algebra).

It turns out that the space $\mathcal{L}_{2}(\mathcal{H})$ is a two-sided deal
in the algebra of $\mathcal{B}(\mathcal{H})$ of bounded operators: if
$\widehat{A}\in\mathcal{L}_{2}(\mathcal{H})$ and $\widehat{B}\in
\mathcal{B}(\mathcal{H})$ then $\widehat{A}\widehat{B}\in\mathcal{L}%
_{2}(\mathcal{H})$ and $\widehat{B}\widehat{A}\in\mathcal{L}_{2}(\mathcal{H}%
)$. Let us show that $\widehat{B}\widehat{A}\in\mathcal{L}_{2}(\mathcal{H})$.
We have, denoting by $||\widehat{B}||$ the operator norm of $\widehat{B}$,
\[
||\widehat{B}\widehat{A}||_{\text{HS}}^{2}=\sum\nolimits_{j}||\widehat{B}%
\widehat{A}\psi_{j}||^{2}\leq||\widehat{B}||\left(  \sum\nolimits_{j}%
||\widehat{A}\psi_{j}||^{2}\right)  <\infty.
\]
Applying the same argument to $\widehat{A}\widehat{B}=(\widehat{B}^{\dag
}\widehat{A}^{\dag})^{\dag}$ shows that we have $\widehat{A}\widehat{B}%
\in\mathcal{L}_{2}(\mathcal{H})$ as well.

\subsubsection{Hilbert--Schmidt and trace class}

An essential property is that every trace class operator is the product of two
Hilbert--Schmidt operators (and hence itself a Hilbert--Schmidt operator). Let
us glorify this important statement as a theorem:

\begin{theorem}
\label{ThmHS} (i) A bounded operator $\widehat{A}$ on $\mathcal{H}$ is of
trace class if and only if it is the product of two Hilbert--Schmidt
operators: $\mathcal{L}_{1}(\mathcal{H})=(\mathcal{L}_{2}(\mathcal{H))}^{2}$.
(ii) A trace class operator $\widehat{A}$ on $\mathcal{H}$ is itself a
Hilbert--Schmidt operator: $\mathcal{L}_{1}(\mathcal{H})\subset\mathcal{L}%
_{2}(\mathcal{H})$.
\end{theorem}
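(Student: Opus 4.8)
The plan is to establish both characterizations from the polar decomposition of $\widehat{A}$ together with the ideal properties of $\mathcal{L}_{1}(\mathcal{H})$ and $\mathcal{L}_{2}(\mathcal{H})$ already proved above; the two inclusions in (i) are of very different character, one being a one-line estimate and the other requiring the polar decomposition. First I would treat the easy inclusion $(\mathcal{L}_{2}(\mathcal{H}))^{2}\subseteq\mathcal{L}_{1}(\mathcal{H})$. Suppose $\widehat{A}=\widehat{B}\widehat{C}$ with $\widehat{B},\widehat{C}\in\mathcal{L}_{2}(\mathcal{H})$, and let $(\phi_{j})$, $(\chi_{j})$ be any two orthonormal bases of $\mathcal{H}$. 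Writing $\langle\widehat{A}\phi_{j}|\chi_{j}\rangle=\langle\widehat{C}\phi_{j}|\widehat{B}^{\dagger}\chi_{j}\rangle$ and applying the Cauchy--Schwarz inequality first termwise and then to the resulting series, one gets
\[
\sum_{j}|\langle\widehat{A}\phi_{j}|\chi_{j}\rangle|\leq\Big(\sum_{j}||\widehat{C}\phi_{j}||^{2}\Big)^{1/2}\Big(\sum_{j}||\widehat{B}^{\dagger}\chi_{j}||^{2}\Big)^{1/2}=||\widehat{C}||_{\mathrm{HS}}\,||\widehat{B}||_{\mathrm{HS}}<\infty ,
\]
using that the adjoint of a Hilbert--Schmidt operator is again Hilbert--Schmidt with the same norm; by definition (\ref{trc4}) this makes $\widehat{A}$ trace class.

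The substantive direction is $\mathcal{L}_{1}(\mathcal{H})\subseteq(\mathcal{L}_{2}(\mathcal{H}))^{2}$. Let $\widehat{A}\in\mathcal{L}_{1}(\mathcal{H})$ and take its polar decomposition $\widehat{A}=\widehat{U}|\widehat{A}|$, where $|\widehat{A}|=(\widehat{A}^{\dagger}\widehat{A})^{1/2}\geq0$ is defined via the continuous functional calculus and $\widehat{U}$ is the associated partial isometry, so that in particular $|\widehat{A}|=\widehat{U}^{\dagger}\widehat{A}$. Since $\widehat{U}^{\dagger}$ is bounded and $\widehat{A}\in\mathcal{L}_{1}(\mathcal{H})$, Theorem \ref{ThmTC} gives $|\widehat{A}|\in\mathcal{L}_{1}(\mathcal{H})$, so $\operatorname{Tr}(|\widehat{A}|)=\sum_{j}\langle|\widehat{A}|\psi_{j}|\psi_{j}\rangle$ is finite and basis-independent by (\ref{tr6}). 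I would then set $\widehat{C}=|\widehat{A}|^{1/2}$ (a bounded positive operator, again by functional calculus) and $\widehat{B}=\widehat{U}\,|\widehat{A}|^{1/2}$. From $||\,|\widehat{A}|^{1/2}\psi_{j}\,||^{2}=\langle|\widehat{A}|\psi_{j}|\psi_{j}\rangle$ one gets $\sum_{j}||\widehat{C}\psi_{j}||^{2}=\operatorname{Tr}(|\widehat{A}|)<\infty$, hence $\widehat{C}\in\mathcal{L}_{2}(\mathcal{H})$; and $\widehat{B}\in\mathcal{L}_{2}(\mathcal{H})$ because $\mathcal{L}_{2}(\mathcal{H})$ is a two-sided ideal in $\mathcal{B}(\mathcal{H})$ (shown in \S\ref{sectionHS}). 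Finally $\widehat{B}\widehat{C}=\widehat{U}\,|\widehat{A}|^{1/2}|\widehat{A}|^{1/2}=\widehat{U}|\widehat{A}|=\widehat{A}$, which exhibits $\widehat{A}$ as a product of two Hilbert--Schmidt operators and proves (i). Then (ii) is immediate: $\widehat{A}=\widehat{B}\widehat{C}$ with $\widehat{B},\widehat{C}\in\mathcal{L}_{2}(\mathcal{H})$, and $\mathcal{L}_{2}(\mathcal{H})$ is closed under multiplication, so $\widehat{A}\in\mathcal{L}_{2}(\mathcal{H})$. (Alternatively (ii) can be obtained directly, bypassing the polar decomposition: $\widehat{A}\in\mathcal{L}_{1}(\mathcal{H})$ implies $\widehat{A}^{\dagger}\in\mathcal{L}_{1}(\mathcal{H})$, hence $\widehat{A}^{\dagger}\widehat{A}\in\mathcal{L}_{1}(\mathcal{H})$ by the ideal property, and then $\sum_{j}||\widehat{A}\psi_{j}||^{2}=\operatorname{Tr}(\widehat{A}^{\dagger}\widehat{A})<\infty$.)

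The only genuinely non-elementary input is the polar decomposition together with the continuous functional calculus for the positive operator $|\widehat{A}|$; both are standard and I would simply cite a functional-analysis reference such as \cite{blabru} or \cite{sh87}. The point worth emphasizing is that trace-classness of $|\widehat{A}|$ must be obtained from the identity $|\widehat{A}|=\widehat{U}^{\dagger}\widehat{A}$ and the ideal property of $\mathcal{L}_{1}(\mathcal{H})$, not from a spectral expansion of $\widehat{A}^{\dagger}\widehat{A}$: the latter operator is only Hilbert--Schmidt in general, its eigenvalues need not have summable square roots, and that route would yield only the weaker inclusion (ii).
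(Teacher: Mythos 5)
Your proof is correct and follows essentially the same route as the paper: Cauchy--Schwarz for the inclusion $(\mathcal{L}_{2}(\mathcal{H}))^{2}\subseteq\mathcal{L}_{1}(\mathcal{H})$, and for the converse the polar decomposition with the factors $\widehat{U}(\widehat{A}^{\dag}\widehat{A})^{1/4}$ and $(\widehat{A}^{\dag}\widehat{A})^{1/4}$, deducing trace-classness of $(\widehat{A}^{\dag}\widehat{A})^{1/2}=\widehat{U}^{\dag}\widehat{A}$ from the ideal property of Theorem \ref{ThmTC}. The only (harmless) differences are that you bound the sum by the product of Hilbert--Schmidt norms rather than by the arithmetic-mean estimate the paper uses, and that you correctly describe $\widehat{U}$ as a partial isometry where the paper calls it unitary.
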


\begin{proof}
In what follows $(\psi_{j})$ is an orthonormal basis in $\mathcal{H}$. (i)
Assume that $\widehat{A}=$ $\widehat{B}\widehat{C}$ where $\widehat{B}$ and
$\widehat{C}$ are both Hilbert--Schmidt operators. We have, using respectively
the triangle and the Cauchy--Schwarz inequalities,
\[
|\sum_{j}\langle\psi_{j}|\widehat{A}\psi_{j}\rangle|\leq\sum_{j}%
|\langle\widehat{B}^{\dag}\psi_{j}|\widehat{C}\psi_{j}\rangle|\leq\sum
_{j}||\widehat{B}^{\dag}\psi_{j}||\,||\widehat{C}\psi_{j}||;
\]
in view of the trivial inequality
\[
\sum_{j}||\widehat{B}^{\dag}\psi_{j}||\,||\widehat{C}\psi_{j}||\leq\frac{1}%
{2}\left(
{\textstyle\sum\nolimits_{j}}
||\widehat{B}^{\dag}\psi_{j}||^{2}+\,||\widehat{C}\psi_{j}||^{2}\right)
\]
we get, since $\widehat{B}$ and $\widehat{C}$ are both Hilbert--Schmidt
operators,
\[
|\sum_{j}\langle\psi_{j}|\widehat{A}\psi_{j}\rangle|\leq\frac{1}{2}\left(
{\textstyle\sum\nolimits_{j}}
||\widehat{B}^{\dag}\psi_{j}||^{2}+\,||\widehat{C}\psi_{j}||^{2}\right)
<\infty
\]
proving that $\widehat{A}$ is indeed of trace class. Assume, conversely, that
$\widehat{A}\in\mathcal{L}_{1}(\mathcal{H})$. In view of the polar
decomposition theorem there exists a unitary operator $\widehat{U}$ on
$\mathcal{H}$ such that $\widehat{A}=\widehat{U}(\widehat{A}^{\dag}%
\widehat{A})^{1/2}$. Setting $\widehat{B}=\widehat{U}(\widehat{A}^{\dag
}\widehat{A})^{1/4}$ and $\widehat{C}=(\widehat{A}^{\dag}\widehat{A})^{1/4}$
we have $\widehat{A}=\widehat{B}\widehat{C}$; let us show that $\widehat{C}$
and $\widehat{B}$ are Hilbert--Schmidt operators. We have%
\[
\sum_{j}|\langle\widehat{C}\psi_{j}|\widehat{C}\psi_{j}\rangle|=\sum
_{j}|\langle\widehat{C}^{\dag}\widehat{C}\psi_{j}|\psi_{j}\rangle|=\sum
_{j}|\langle(\widehat{A}^{\dag}\widehat{A})^{1/2}\psi_{j}|\psi_{j}%
\rangle|<\infty
\]
because $(\widehat{A}^{\dag}\widehat{A})^{1/2}=\widehat{U}^{\dag}\widehat{A}$
is of trace class (Theorem \ref{ThmTC}), hence $\widehat{C}\in\mathcal{L}%
_{2}(\mathcal{H)}$. It follows that $\widehat{B}=\widehat{U}\widehat{C}%
\in\mathcal{L}_{2}(\mathcal{H)}$ as well. (ii) We have seen that every trace
class operator is a product $\widehat{A}=\widehat{B}\widehat{C}$ of two
Hilbert--Schmidt operators. In view of the algebra property (i) of
$\mathcal{L}_{2}(\mathcal{H)}$ the operator $\widehat{A}$ is
itself\ Hilbert--Schmidt operator.
\end{proof}

\subsubsection{The case of $L^{2}(\mathbb{R}^{n})$}

Let us now specialize to the case where $\mathcal{H}=L^{2}(\mathbb{R}^{n})$.
In this case Hilbert--Schmidt operators are exactly those operators that have
a square integrable kernel; as a consequence a density matrix also has square
integrable kernel.

\begin{theorem}
\label{ThmKernel}Let $\widehat{A}$ be a bounded operator on $L^{2}%
(\mathbb{R}^{n})$. (i) It is a Hilbert--Schmidt operator if and only if there
exists a function $K\in L^{2}(\mathbb{R}^{n}\times\mathbb{R}^{n})$ such that%
\begin{equation}
\widehat{A}\psi(x)=\int K(x,y)\psi(y)d^{n}y. \label{akernel}%
\end{equation}
(ii) Every trace class operator (and hence every density matrix ) on
$L^{2}(\mathbb{R}^{n})$ can be represented by (\ref{akernel}) with kernel
$K\in L^{2}(\mathbb{R}^{n}\times\mathbb{R}^{n})$.
\end{theorem}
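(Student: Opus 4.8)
The plan is to prove the two statements by first establishing (i), the characterization of Hilbert--Schmidt operators on $L^{2}(\mathbb{R}^{n})$ as integral operators with square-integrable kernel, and then deducing (ii) from Theorem \ref{ThmHS}(ii) together with (i). The starting point is the observation that $L^{2}(\mathbb{R}^{n}\times\mathbb{R}^{n})$ is canonically isometrically isomorphic to the Hilbert space $\mathcal{L}_{2}(L^{2}(\mathbb{R}^{n}))$ of Hilbert--Schmidt operators, so the theorem is really a statement about this isomorphism.

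First I would prove the easy direction of (i): suppose $K\in L^{2}(\mathbb{R}^{n}\times\mathbb{R}^{n})$ and define $\widehat{A}$ by (\ref{akernel}). By Cauchy--Schwarz in the $y$-variable, $|\widehat{A}\psi(x)|^{2}\leq\|\psi\|^{2}\int|K(x,y)|^{2}d^{n}y$, so $\widehat{A}$ is bounded with $\|\widehat{A}\|\leq\|K\|_{L^{2}}$. To see it is Hilbert--Schmidt, pick an orthonormal basis $(\psi_{j})$ of $L^{2}(\mathbb{R}^{n})$; then $\langle\widehat{A}\psi_{j}\,|\,\psi_{k}\rangle=\langle K\,|\,\psi_{k}\otimes\psi_{j}^{\ast}\rangle_{L^{2}(\mathbb{R}^{2n})}$ (up to conjugation conventions), and since $(\psi_{k}\otimes\psi_{j}^{\ast})_{j,k}$ is an orthonormal basis of $L^{2}(\mathbb{R}^{2n})$, Parseval gives $\sum_{j,k}|\langle\widehat{A}\psi_{j}\,|\,\psi_{k}\rangle|^{2}=\|K\|_{L^{2}}^{2}<\infty$. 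But $\sum_{j}\|\widehat{A}\psi_{j}\|^{2}=\sum_{j,k}|\langle\widehat{A}\psi_{j}\,|\,\psi_{k}\rangle|^{2}$ by the Bessel equality (\ref{Besseleq}), so $\widehat{A}\in\mathcal{L}_{2}(L^{2}(\mathbb{R}^{n}))$ with $\|\widehat{A}\|_{\mathrm{HS}}=\|K\|_{L^{2}}$.

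Next, the converse direction of (i): given $\widehat{A}\in\mathcal{L}_{2}(L^{2}(\mathbb{R}^{n}))$, fix an orthonormal basis $(\psi_{j})$. Then $\sum_{j}\|\widehat{A}\psi_{j}\|^{2}<\infty$, so the formal series $K=\sum_{j}(\widehat{A}\psi_{j})\otimes\psi_{j}^{\ast}$ converges in $L^{2}(\mathbb{R}^{2n})$ because its terms are orthogonal there with squared norms $\|\widehat{A}\psi_{j}\|^{2}$. One then checks that the integral operator with kernel $K$ agrees with $\widehat{A}$: for any $\psi=\sum_{j}c_{j}\psi_{j}$, integrating $K$ against $\psi$ reproduces $\sum_{j}c_{j}\widehat{A}\psi_{j}=\widehat{A}\psi$, with the interchange of sum and integral justified by the $L^{2}$-convergence and Cauchy--Schwarz. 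Uniqueness of $K$ follows since an integral operator with zero kernel (in $L^{2}$) is the zero operator.

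Finally, (ii) is immediate: by Theorem \ref{ThmHS}(ii), $\mathcal{L}_{1}(\mathcal{H})\subset\mathcal{L}_{2}(\mathcal{H})$, so every trace class operator on $L^{2}(\mathbb{R}^{n})$ — in particular every density matrix — is Hilbert--Schmidt and hence by part (i) is an integral operator with kernel in $L^{2}(\mathbb{R}^{n}\times\mathbb{R}^{n})$. The main obstacle, such as it is, lies in the converse direction of (i): one must carefully justify that $K=\sum_{j}(\widehat{A}\psi_{j})\otimes\psi_{j}^{\ast}$ genuinely defines the operator $\widehat{A}$ and not merely a Hilbert--Schmidt operator with the same ``matrix entries,'' which amounts to a Fubini-type argument legitimizing the exchange of summation and integration; everything else is a direct application of Bessel--Parseval and the tensor-product structure of $L^{2}(\mathbb{R}^{2n})$.
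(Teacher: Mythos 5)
Your proposal is correct and follows essentially the same route as the paper: both directions of (i) rest on the identification of the Hilbert--Schmidt class with $L^{2}(\mathbb{R}^{n}\times\mathbb{R}^{n})$ via the tensor-product orthonormal basis $(\psi_{j}\otimes\psi_{i}^{\ast})$ (your single sum $K=\sum_{j}(\widehat{A}\psi_{j})\otimes\psi_{j}^{\ast}$ is the paper's double sum $\sum_{i,j}\langle\psi_{j}|\widehat{A}\psi_{i}\rangle\,\psi_{j}\otimes\psi_{i}^{\ast}$ with the inner expansion carried out), and (ii) is deduced in both cases from the inclusion $\mathcal{L}_{1}(\mathcal{H})\subset\mathcal{L}_{2}(\mathcal{H})$ of Theorem \ref{ThmHS}. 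Your explicit attention to the convergence of the kernel series and the interchange of sum and integral is a welcome refinement but not a different argument.
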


\begin{proof}
(i) \textit{The condition is necessary}. Let $(\psi_{j})$ be an orthonormal
basis in $L^{2}(\mathbb{R}^{n})$. Let $\widehat{A}$ $\in\mathcal{L}_{2}%
(L^{2}(\mathbb{R}^{n})\mathcal{)}$; we have $\widehat{A}\psi_{i}=\sum
_{j}\langle\psi_{j}|\widehat{A}\psi_{i}\rangle\psi_{j}$ and hence%
\[
\widehat{A}\psi=\sum_{i}\langle\psi_{i}|\psi\rangle\widehat{A}\psi_{i}%
=\sum_{i,j}\langle\psi_{i}|\psi\rangle\langle\psi_{j}|\widehat{A}\psi
_{i}\rangle\psi_{j}%
\]
which we can rewrite, using the definition
\[
\langle\psi_{i}|\psi\rangle=\int\psi_{i}^{\ast}(y)\psi(y)d^{n}y
\]
of the inner product as
\begin{align*}
\widehat{A}\psi(x)  &  =\sum_{i,j}\langle\psi_{i}|\psi\rangle\langle\psi
_{j}|\widehat{A}\psi_{i}\rangle\psi_{j}(x)\\
&  =\sum_{i,j}\langle\psi_{i}|\psi\rangle\langle\psi_{j}|\widehat{A}\psi
_{i}\rangle\psi_{j}(x)\\
&  =\sum_{i,j}\langle\psi_{j}|\widehat{A}\psi_{i}\rangle\int\psi_{j}%
(x)\psi_{i}^{\ast}(y)\psi(y)d^{n}y.
\end{align*}
This is now (\ref{akernel}) with
\[
K(x,y)=\sum_{i,j}\langle\psi_{j}|\widehat{A}\psi_{i}\rangle\psi_{j}(x)\psi
_{i}^{\ast}(y).
\]
Let us show that $K\in L^{2}(\mathbb{R}^{n}\times\mathbb{R}^{n})$. Remarking
that the tensor products $(\psi_{j}\otimes\psi_{i}^{\ast})$ form an
orthonormal basis in $L^{2}(\mathbb{R}^{n}\times\mathbb{R}^{n})$ we have%
\[
\int|K(x,y)|^{2}d^{n}xd^{n}y\leq\sum_{i,j}|\langle\psi_{j}|\widehat{A}\psi
_{i}\rangle|^{2};
\]
applying the Bessel equality (\ref{Besseleq}) to $|\langle\psi_{j}%
|\widehat{A}\psi_{i}\rangle|^{2}$ we get
\[
\sum_{i,j}|\langle\psi_{j}|\widehat{A}\psi_{i}\rangle|^{2}=\sum_{i}%
|\langle\widehat{A}\psi_{i}|\widehat{A}\psi_{i}\rangle|^{2}<\infty
\]
since $\widehat{A}$ is a Hilbert--Schmidt operator. It follows that $K\in
L^{2}(\mathbb{R}^{n}\times\mathbb{R}^{n})$ as claimed. \textit{The condition
is sufficient}. Assume that the kernel $K$ of $\widehat{A}\in\mathcal{B}%
(L^{2}(\mathbb{R}^{n}))$ belongs to $L^{2}(\mathbb{R}^{n}\times\mathbb{R}%
^{n})$. Since $(\psi_{j}\otimes\psi_{i}^{\ast})$ is an orthonormal basis in
$L^{2}(\mathbb{R}^{n}\times\mathbb{R}^{n})$ we can find numbers $c_{ij}$ such
that $\sum_{i,j}|c_{ij}|^{2}<\infty$ and
\[
K(x,y)=\sum_{i,j}c_{ij}\psi_{j}(x)\otimes\psi_{i}^{\ast}(y).
\]
Define now the operator $\widehat{A}$ by the equality (\ref{akernel}); we
have
\[
\widehat{A}\psi(x)=\sum_{i,j}c_{ij}\psi_{j}(x)\int\psi_{i}^{\ast}%
(y)\psi(y)d^{n}y=\sum_{i,j}c_{ij}\langle\psi_{i}|\psi\rangle\psi_{j}(x)
\]
and hence, since the basis $(\psi_{j})_{j}$ is orthonormal,
\[
\widehat{A}\psi_{k}(x)=\sum_{i,j}c_{ij}\langle\psi_{i}|\psi_{k}\rangle\psi
_{j}(x)=\sum_{j}c_{kj}\psi_{j}(x)
\]
so that
\[%
{\textstyle\sum\nolimits_{k}}
||\widehat{A}\psi_{k}||^{2}=\sum\nolimits_{j,k}|c_{kj}|^{2}<\infty
\]
and $\widehat{A}$ is thus a Hilbert--Schmidt operator. (ii) In view of
property the algebra property (see (i) in Theorem \ref{ThmHS}) a trace class
operator is \textit{a fortiori} a Hilbert--Schmidt operator. The claim follows
in view of the statement (i).
\end{proof}

\section{The Phase Space Picture}

From now on we assume that the Hilbert space $\mathcal{H}$ is $L^{2}%
(\mathbb{R}^{n})$, the space of complex-valued square integrable functions on
$\mathbb{R}^{n}$ (we are thus dealing with quantum systems with $n$ degrees of freedom).

\subsection{The Weyl correspondence}

\subsubsection{Weyl operators and symbols}

Let us first explain what we mean by a Weyl symbol. Recall that a function
$K(x,y)$ defined on $\mathbb{R}^{n}\times\mathbb{R}^{n}$ is the kernel of an
operator $\widehat{A}$ if we have%
\begin{equation}
\widehat{A}\psi(x)=\int K(x,y)\psi(y)d^{n}y \label{kernel1}%
\end{equation}
for all $\psi\in L^{2}(\mathbb{R}^{n})$. A deep theorem from functional
analysis (\textquotedblleft Schwartz's kernel theorem\textquotedblright) tells
us that every continuous linear operator from spaces of test functions to the
tempered distributions\footnote{In this context we use the word
\textquotedblleft distribution\textquotedblright\ in the sense of L.
Schwartz's \textquotedblleft generalized functions\textquotedblright.} can be
represented in this way, the integral in (\ref{kernel1}) being possibly
replaced by a distributional bracket. By definition, the Weyl symbol of the
operator $\widehat{A}$ is the function%
\begin{equation}
a(x,p)=\int e^{-\frac{i}{\hbar}py}K(x+\tfrac{1}{2}y,x-\tfrac{1}{2}y)d^{n}y;
\label{symbol1}%
\end{equation}
this formula is easily inverted using an inverse Fourier transform in $p$,
yielding the expression of the kernel in terms of the symbol:
\begin{equation}
K(x,p)=\left(  \tfrac{1}{2\pi\hbar}\right)  ^{n}\int e^{\frac{i}{\hbar}%
p(x-y)}a(\tfrac{1}{2}(x+y),p)d^{n}p. \label{symbol2}%
\end{equation}
These two formulas uniquely define the kernel and the symbol in terms of each
other, and imply the \textquotedblleft Weyl correspondence (or
transform)\textquotedblright, which expresses unambiguously the operator
$\widehat{A}$ in terms of the symbol $a$:
\begin{equation}
\widehat{A}\psi(x)=\left(  \tfrac{1}{2\pi\hbar}\right)  ^{n}%
{\displaystyle\iint}
e^{\frac{i}{\hbar}p(x-y)}a(\tfrac{1}{2}(x+y),p)\psi(y)d^{n}yd^{n}p;
\label{Weyl1}%
\end{equation}
one often writes $\widehat{A}=\operatorname*{Op}_{\mathrm{W}}(a)$, and this
notation is unambiguous because the symbol of $\widehat{A}$ is uniquely
determined by (\ref{symbol1}). Formula (\ref{Weyl1}) can be rewritten in
several different ways; one common expression is%
\begin{equation}
\widehat{A}\psi(x)=\left(  \tfrac{1}{\pi\hbar}\right)  ^{n}%
{\displaystyle\iint}
a(x_{0},p_{0})\widehat{\Pi}(x_{0},p_{0})\psi(x)d^{n}x_{0}d^{n}p_{0}
\label{Weyl2}%
\end{equation}
where $\widehat{\Pi}(x_{0},p_{0})$ is the reflection (or parity) operator
\cite{Birk,Birkbis,gowig}%
\begin{equation}
\widehat{\Pi}(x_{0},p_{0})\psi(x)=e^{\frac{2i}{\hbar}p_{0}(x-x_{0})}%
\psi(2x_{0}-x). \label{parity}%
\end{equation}

The usefulness of the Weyl correspondence in quantum mechanics comes from the
fact that it associates to real symbols self-adjoint operators. In fact, more
generally:%
\[
\widehat{A}=\operatorname*{Op}\nolimits^{\mathrm{W}}(a)\text{ \ }%
\Longrightarrow\text{ \ }\widehat{A}^{\dag}=\operatorname*{Op}%
\nolimits^{\mathrm{W}}(a^{\ast}).
\]
We refer to de Gosson \cite{Birk,Birkbis,gowig} for detailed discussions of
the Weyl correspondence from the point of view outlined above; Littlejohn's
well-cited paper \cite{Littlejohn} contains a very nice review of the topic
with applications to semiclassical approximations.

\subsubsection{Twisted products and convolutions}

Composing Weyl operators leads in a natural way to the notion of \emph{twisted
product}, which is essential in the theory of deformation quantization: assume
that the two operators $\widehat{A}=\operatorname*{Op}_{\mathrm{W}}(a)$ and
$\widehat{B}=\operatorname*{Op}_{\mathrm{W}}(b)$ can be composed, and set
$\widehat{A}\widehat{B}=\widehat{C}=\operatorname*{Op}_{\mathrm{W}}(c)$. Then
the Weyl symbol $c$ is given by the expression $c=a\times b$ where
\begin{equation}
(a\times b)(z)=\left(  \tfrac{1}{4\pi\hbar}\right)  ^{2n}%
{\displaystyle\iint}
e^{\frac{i}{2\hbar}\sigma(z^{\prime},z^{\prime\prime})}a(z+\tfrac{1}%
{2}z^{\prime})b(z-\tfrac{1}{2}z^{\prime\prime})d^{2n}z^{\prime}d^{2n}%
z^{\prime\prime}. \label{star1}%
\end{equation}
Setting $u=z+\tfrac{1}{2}z^{\prime}$ and $v=z-\tfrac{1}{2}z^{\prime\prime}$ we
have we have $d^{2n}z^{\prime}d^{2n}z^{\prime\prime}=4^{2n}d^{2n}ud^{2n}v$ so
this formula can be rewritten%
\begin{equation}
(a\times b)(z)=\left(  \tfrac{1}{\pi\hbar}\right)  ^{n}%
{\displaystyle\iint}
e^{\frac{2i}{\hbar}\partial\sigma(u,z,v)}a(u)b(v)d^{2n}ud^{2n}v \label{star2}%
\end{equation}
where $\partial\sigma(u,z,v)$ is the coboundary of $\sigma(u,v)$ viewed as a
1-chain:%
\begin{equation}
\partial\sigma(u,z,v)=\sigma(u,z)-\sigma(u,v)+\sigma(z,v). \label{cocycle}%
\end{equation}
For detailed proofs of these properties see de Gosson \cite{Birkbis}, Chapter
10. The function
\begin{equation}
c=a\times b \label{twiprod}%
\end{equation}
is called the twisted product\footnote{It was defined by J. von Neumann
following the work of Weyl.} of $a$ and $b$. Thus, by definition,%
\[
\operatorname*{Op}\nolimits_{\mathrm{W}}(a\times b)=\operatorname*{Op}%
\nolimits_{\mathrm{W}}(a)\operatorname*{Op}\nolimits_{\mathrm{W}}(b).
\]

To the twisted product is associated the \emph{twisted convolution} $a\#b$ of
two symbols; it is defined by
\[
a\#b=F_{\sigma}(F_{\sigma}a\times F_{\sigma}b)
\]
or, equivalently, by
\[
F_{\sigma}(a\#b)=F_{\sigma}a\times F_{\sigma}b
\]
where $F_{\sigma}$ is the symplectic Fourier transform. The equivalence of
both definitions is due to the fact that the symplectic Fourier transform is
its own inverse. Explicitly (de Gosson, \cite{gowig}, Section 11.1):%
\begin{equation}
a\#b(z)=\left(  \tfrac{1}{2\pi\hbar}\right)  ^{n}\int e^{\frac{i}{2\hbar
}\sigma(z,z^{\prime})}a_{\sigma}(z-z^{\prime})b_{\sigma}(z^{\prime}%
)d^{2}z^{\prime}; \label{twisthoulahop}%
\end{equation}
an equivalent statement is to say that the twisted symbol $c_{\sigma}$ of the
product $\widehat{C}=\widehat{A}\widehat{B}$ is given by%
\begin{equation}
c_{\sigma}(z)=\left(  \tfrac{1}{2\pi\hbar}\right)  ^{n}\int e^{\frac{i}%
{2\hbar}\sigma(z,z^{\prime})}a_{\sigma}(z-z^{\prime})b_{\sigma}(z^{\prime
})d^{2n}z^{\prime}. \label{cecomp}%
\end{equation}

\subsection{The Wigner function of a density matrix}

The essential point to understand now is that the Wigner function
$\rho=W_{\widehat{\rho}}(x,p)$ we are going to define below is (up to an
unimportant constant factor) the \emph{Weyl symbol} of the operator
$\widehat{\rho}$: the Wigner function is thus a \emph{dequantization} of
$\widehat{\rho}$, that is a phase space\ function obtained from this
operator\footnote{It is perhaps a little bit daring to speak about
\textquotedblleft dequantization\textquotedblright\ in this context since the
Wigner functon is not really a classical object.}. Also notice that it is the
first time Planck's constant appears in a quite explicit way; we could have
\textit{a priori} replaced $\hbar$ with any other real parameter $\eta$: this
change wouldn't have consequence for the involved mathematics (but it would of
course change the physics!). We will come back to this essential point later, but

\subsubsection{Definition of the Wigner function of a density
matrix\label{secwft}}

To a density matrix $\widehat{\rho}$ on $L^{2}(\mathbb{R}^{n})$ one associates
in standard texts its Wigner function (also called Wigner distribution). It is
the function $W_{\widehat{\rho}}$ of the variables $x=(x_{1},...,x_{n})$ and
of the conjugate momenta $p=(p_{1},...,p_{n})$ usually defined in physics
texts by%
\begin{equation}
W_{\widehat{\rho}}(x,p)=\left(  \tfrac{1}{\pi\hbar}\right)  ^{n}\int
e^{-\frac{2i}{\hbar}px^{\prime}}\langle x+x^{\prime}|\widehat{\rho
}|x-x^{\prime}\rangle d^{n}x^{\prime} \label{wigner1}%
\end{equation}
where $|x\rangle$ is an eigenstate of the operator $\widehat{x}=(\widehat{x}%
_{1},...,\widehat{x}_{n})$ (where $\widehat{x}_{j}$ = multiplication by
$x_{j}$). Performing the change of variables $x\longmapsto y=2x^{\prime}$ we
can rewrite this definition in the equivalent form
\begin{equation}
W_{\widehat{\rho}}(x,p)=\left(  \tfrac{1}{2\pi\hbar}\right)  ^{n}\int
e^{-\frac{i}{\hbar}py}\langle x+\tfrac{1}{2}y|\widehat{\rho}|x-\tfrac{1}%
{2}y\rangle d^{n}y; \label{wigner2}%
\end{equation}
this has some practical advantages when one uses the Wigner--Weyl--Moyal
formalism. In spite of their formal elegance, formulas (\ref{wigner1}),
(\ref{wigner2}) are at first sight somewhat obscure and need to be clarified,
especially if one wants to work analytically with them. Assume first that
$\widehat{\rho}$ represents a pure state: $\widehat{\rho}=\widehat{\rho}%
_{\psi}=|\psi\rangle\langle\psi|$ where $\psi\in L^{2}(\mathbb{R}^{n})$ is
normalized. We get, using the relations $\psi(x)=\langle x|\psi\rangle$ and
$\psi^{\ast}(x)=\langle\psi|x\rangle$,
\begin{align*}
\langle x+x^{\prime}|\widehat{\rho}|x-x^{\prime}\rangle &  =\langle
x+x^{\prime}|\psi\rangle\langle\psi|x-x^{\prime}\rangle\\
&  =\psi(x+x^{\prime})\psi^{\ast}(x-x^{\prime})
\end{align*}
and hence $W_{\widehat{\rho}_{\psi}}(x,p)=W\psi(x,p)$ where
\begin{equation}
W\psi(x,p)=\left(  \tfrac{1}{\pi\hbar}\right)  ^{n}\int e^{-\frac{2i}{\hbar
}px^{\prime}}\psi(x+x^{\prime})\psi^{\ast}(x-x^{\prime})d^{n}x^{\prime}
\label{wigner3}%
\end{equation}
is the usual Wigner function (or Wigner distribution, or Wigner transform) of
$\psi\in L^{2}(\mathbb{R}^{n})$ (see Wigner \cite{Wigner}, Hillery \textit{et
al}. \cite{hilleryetal}; for the mathematical theory \cite{Birk,Birkbis,gowig}%
); equivalently%
\begin{equation}
W\psi(x,p)=\left(  \tfrac{1}{2\pi\hbar}\right)  ^{n}\int e^{-\frac{i}{\hbar
}py}\psi(x+\tfrac{1}{2}y)\psi^{\ast}(x-\tfrac{1}{2}y)d^{n}y. \label{wigner4}%
\end{equation}
In the general case, where $\widehat{\rho}=\sum_{j}\alpha_{j}|\psi_{j}%
\rangle\langle\psi_{j}|$ is a convex sum of operators of the type above one
immediately gets, by linearity, the expression%
\begin{equation}
W_{\widehat{\rho}}(x,p)=\sum_{j}\alpha_{j}W\psi_{j}(x,p). \label{wigner5}%
\end{equation}

A very important result we will prove later on (Theorem \ref{ThmWL2}), but use
immediately, is the following:

\begin{quote}
The Wigner function of a mixed state is square integrable: $W_{\widehat{\rho}%
}\in L^{2}(\mathbb{R}^{2n})$.
\end{quote}

One also often uses the cross-Wigner transform of a pair of square integrable
functions. It is given by
\begin{equation}
W(\psi,\phi)(x,p)=\left(  \tfrac{1}{2\pi\hbar}\right)  ^{n}\int e^{-\frac
{i}{\hbar}py}\psi(x+\tfrac{1}{2}y)\phi^{\ast}(x-\tfrac{1}{2}y)d^{n}y.
\label{crosswigner}%
\end{equation}
It naturally appears as an interference term when calculating the Wigner
function of a sum; in fact, using definition (\ref{wigner4}) one immediately
checks that%
\begin{equation}
W(\psi+\phi)=W\psi+W\phi+2\operatorname{Re}W(\psi,\phi). \label{winterf}%
\end{equation}
Notice that $W(\psi,\phi)$ is in general a complex number and that
\begin{equation}
W(\psi,\phi)^{\ast}=W(\phi,\psi). \label{wstar}%
\end{equation}
The cross-Wigner function has many applications; in particular it allows to
reformulate the notion of weak-value as an interference between the past and
the future in the time-symmetric approach to quantum mechanics (see de Gosson
and de Gosson \cite{Charlyne1}).

\subsubsection{The Weyl symbol of a density matrix}

In the case of density matrices we have:

\begin{theorem}
\label{Thm1}Let $\widehat{\rho}$ be a density matrix on $L^{2}(\mathbb{R}%
^{n})$:
\[
\widehat{\rho}=\sum_{j}\alpha_{j}|\psi_{j}\rangle\langle\psi_{j}|\text{ \ with
}\alpha_{j}\geq0\text{\ and }\sum_{j}\alpha_{j}=1.
\]
The Weyl symbol of $\widehat{\rho}$ is $a=(2\pi\hbar)^{n}\rho$ where
\begin{equation}
\rho(x,p)=W_{\widehat{\rho}}(x,p)=\sum_{j}\alpha_{j}W\psi_{j}(x,p)
\label{Weyl3}%
\end{equation}
is the Wigner function of $\widehat{\rho}$.
\end{theorem}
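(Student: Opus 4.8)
The plan is to read off the integral kernel of $\widehat{\rho}$, insert it into the defining formula (\ref{symbol1}) for the Weyl symbol, and recognize the resulting phase-space function as $(2\pi\hbar)^{n}$ times the Wigner function $W_{\widehat{\rho}}$ of (\ref{wigner5}). Everything reduces to the rank-one case $\widehat{\rho}_{\psi}=|\psi\rangle\langle\psi|$, and the only genuine point of care is the interchange of the sum over $j$ with the $y$-integration.

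First I would identify the kernel. Since $\widehat{\rho}\phi=\sum_{j}\alpha_{j}|\psi_{j}\rangle\langle\psi_{j}|\phi\rangle$ and $\langle\psi_{j}|\phi\rangle=\int\psi_{j}^{\ast}(y)\phi(y)\,d^{n}y$, writing $\psi_{j}(x)=\langle x|\psi_{j}\rangle$ gives
\[
\widehat{\rho}\phi(x)=\int K(x,y)\phi(y)\,d^{n}y,\qquad K(x,y)=\sum_{j}\alpha_{j}\psi_{j}(x)\psi_{j}^{\ast}(y),
\]
the series converging absolutely in $L^{2}(\mathbb{R}^{n}\times\mathbb{R}^{n})$ since $\|\psi_{j}\otimes\psi_{j}^{\ast}\|_{L^{2}}=\|\psi_{j}\|^{2}=1$ and $\sum_{j}\alpha_{j}=1$ (consistent with Theorem \ref{ThmKernel}, which already guarantees $\widehat{\rho}$ has an $L^{2}$ kernel). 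Using $K(x+\tfrac{1}{2}y,x-\tfrac{1}{2}y)=\sum_{j}\alpha_{j}\psi_{j}(x+\tfrac{1}{2}y)\psi_{j}^{\ast}(x-\tfrac{1}{2}y)$, formula (\ref{symbol1}) then yields
\begin{align*}
a(x,p)&=\int e^{-\frac{i}{\hbar}py}\sum_{j}\alpha_{j}\psi_{j}(x+\tfrac{1}{2}y)\psi_{j}^{\ast}(x-\tfrac{1}{2}y)\,d^{n}y\\
&=\sum_{j}\alpha_{j}\int e^{-\frac{i}{\hbar}py}\psi_{j}(x+\tfrac{1}{2}y)\psi_{j}^{\ast}(x-\tfrac{1}{2}y)\,d^{n}y\\
&=(2\pi\hbar)^{n}\sum_{j}\alpha_{j}W\psi_{j}(x,p),
\end{align*}
the last step being a direct comparison of the inner integral with the definition (\ref{wigner4}) of $W\psi_{j}$. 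By (\ref{wigner5}) this equals $(2\pi\hbar)^{n}W_{\widehat{\rho}}(x,p)=(2\pi\hbar)^{n}\rho(x,p)$, which is the claimed identity $a=(2\pi\hbar)^{n}\rho$.

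The step I expect to be the main obstacle is justifying the swap of $\sum_{j}$ with the integral over $y$ performed above. I would do this by Tonelli's theorem: by the Cauchy--Schwarz inequality together with the substitutions $y\mapsto x\pm\tfrac{1}{2}y$,
\[
\int\bigl|\psi_{j}(x+\tfrac{1}{2}y)\psi_{j}^{\ast}(x-\tfrac{1}{2}y)\bigr|\,d^{n}y\leq 2^{n}\|\psi_{j}\|^{2}=2^{n},
\]
so $\sum_{j}\alpha_{j}\int\bigl|\psi_{j}(x+\tfrac{1}{2}y)\psi_{j}^{\ast}(x-\tfrac{1}{2}y)\bigr|\,d^{n}y\leq 2^{n}\sum_{j}\alpha_{j}=2^{n}<\infty$ for every $x$; this legitimizes the interchange and incidentally shows that each $W\psi_{j}(x,\cdot)$ is given by an absolutely convergent integral for every $x$. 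A more structural alternative is to note that the symbol map $K\mapsto a$ of (\ref{symbol1}) is bounded on $L^{2}(\mathbb{R}^{2n})$ --- being the composition of the volume-preserving linear substitution $K(x,y)\mapsto K(x+\tfrac{1}{2}y,x-\tfrac{1}{2}y)$ with a partial $\hbar$-Fourier transform in $y$ --- so it commutes with the $L^{2}$-convergent expansion $\widehat{\rho}=\sum_{j}\alpha_{j}\widehat{\rho}_{\psi_{j}}$, reducing the computation to the rank-one case, whose kernel is simply $\psi(x)\psi^{\ast}(y)$. Either route also re-confirms the already-quoted fact $W_{\widehat{\rho}}\in L^{2}(\mathbb{R}^{2n})$ (Theorem \ref{ThmWL2}).
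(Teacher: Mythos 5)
Your proof is correct and takes essentially the same route as the paper's: identify the kernel $\psi_{j}(x)\psi_{j}^{\ast}(y)$ of the rank-one projector, insert it into the symbol formula (\ref{symbol1}), recognize the result as $(2\pi\hbar)^{n}W\psi_{j}$, and sum over $j$. The only difference is that you explicitly justify the interchange of $\sum_{j}$ with the $y$-integral via Cauchy--Schwarz and Tonelli, a point the paper subsumes under ``follows by linearity.''
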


\begin{proof}
The action of the projection $\widehat{\rho}_{j}=$ $|\psi_{j}\rangle
\langle\psi_{j}|$ on a vector $\psi\in L^{2}(\mathbb{R}^{n})$ is given by
\[
\widehat{\rho}_{j}\psi(x)=\langle\psi_{j}|\psi\rangle\psi_{j}(x)=\int\psi
_{j}^{\ast}(y)\psi(y)\psi_{j}(x)d^{n}y
\]
hence the kernel of $\widehat{\rho}_{j}$ is the function
\[
K_{j}(x,y)=\psi_{j}(x)\psi_{j}^{\ast}(y).
\]
It follows, using formula (\ref{symbol1}), that the Weyl symbol of
$\widehat{\rho}_{j}$ is%
\begin{align*}
a_{j}(x,p)  &  =\int e^{-\frac{i}{\hbar}py}\psi_{j}(x+\tfrac{1}{2}y)\psi
_{j}^{\ast}(x-\tfrac{1}{2}y)d^{n}y\\
&  =(2\pi\hbar)^{n}W\psi_{j}(x,p).
\end{align*}
Formula (\ref{Weyl3}) follows by linearity.
\end{proof}

\subsubsection{Statistical interpretation of the Wigner function}

The importance of the Wigner function of a density matrix comes from the fact
that we can use it as a substitute for an ordinary probability density for
calculating averages (it is precisely for this purpose Wigner introduced his
eponymous transform in \cite{Wigner}). For all $\psi\in L^{2}(\mathbb{R}^{n})$
such that\footnote{The conditions $\psi\in L^{1}(\mathbb{R}^{n})$ and
$F\psi\in L^{1}(\mathbb{R}^{n})$ are necessary to ensure the convergence of
the $x$ and $p$ integrals. The condition $\psi\in L^{2}(\mathbb{R}^{n})$ is
not sufficient for the marginal properties to hold (see \textit{e.g.}
Daubechies \cite{daube})} $\psi\in L^{1}(\mathbb{R}^{n})$ and $F\psi\in
L^{1}(\mathbb{R}^{n})$ the marginal properties are
\begin{equation}
\int W\psi(x,p)d^{n}p=|\psi(x)|^{2}\text{ \ \textit{,} \ }\int W\psi
(x,p)d^{n}x=|F\psi(p)|^{2} \label{marginal1}%
\end{equation}
and hence, in particular,
\begin{equation}
\iint W\psi(x,p)d^{n}pd^{n}x=1\text{ \ \textit{if} \ }||\psi||=1.
\label{norm1}%
\end{equation}
In the second equality (\ref{marginal1})
\begin{equation}
F\psi(p)=\left(  \tfrac{1}{2\pi\hbar}\right)  ^{n/2}\int e^{-\frac{i}{\hbar
}px}\psi(x)d^{n}x \label{hft}%
\end{equation}
is the $\hbar$-Fourier transform of $\psi$. One should be aware of the fact
that while $W\psi$ is always real (and hence so is $\rho=W_{\widehat{\rho}}$)
as can be easily checked by taking the complex conjugates of both sides of the
equality (\ref{wigner4}), it takes negative values for all $\psi$ which are
not Gaussian functions. (This is the celebrated \textquotedblleft Hudson
theorem\textquotedblright\ \cite{Hudson}; also see Janssen \cite{Janssen} for
the multidimensional case.) A \textit{caveat}: this result is only true for
the Wigner function $W\psi$ of a single function $\psi$; the case of a general
distribution $\rho=\sum_{j}\alpha_{j}W\psi_{j}$ is much subtler, and will be
discussed later.

Let us introduce the following terminology: we call an observable
$\widehat{A}$ a \textquotedblleft good observable\textquotedblright\ for the
density matrix $\widehat{\rho}$ if its Weyl symbol $a$ (\textit{i.e.} the
corresponding classical observable) satisfies $a\rho\in L^{1}(\mathbb{R}%
^{2n})$, that is%
\begin{equation}
\int|a(z)\rho(z)|d^{2n}z<\infty\label{good1}%
\end{equation}
($\rho$ the Wigner function of $\widehat{\rho}$; we are using the shorthand
notation $z=(x,p)$, $d^{2n}z=d^{n}xd^{n}p$). We assume in addition that $a$ is
real so that $\widehat{A}$ is Hermitian. Notice that \textquotedblleft
goodness\textquotedblright\ is guaranteed if the symbol $a$ is square
integrable, because the Cauchy--Schwarz inequality then implies that%
\[
\left(  \int|a(z)\rho(z)|d^{2n}z\right)  ^{2}\leq\int\rho(z)^{2}d^{2n}%
z\int|a(z)|^{2}d^{2n}z<\infty
\]
since $\rho$ is square integrable (as mentioned above, see Theorem
\ref{ThmWL2}).

\begin{theorem}
\label{Thm2}Let $\widehat{\rho}$ be a density matrix on $L^{2}(\mathbb{R}%
^{n})$ and $\rho$ its Wigner function. The average value of every good
observable $\widehat{A}$ with respect to $\widehat{\rho}$ is then finite and
given by the formula%
\begin{equation}
\langle\widehat{A}\rangle_{\widehat{\rho}}=\int a(z)\rho(z)d^{2n}z.
\label{average2}%
\end{equation}

\end{theorem}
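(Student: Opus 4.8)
The plan is to reduce the assertion to the trace formula $\langle\widehat{A}\rangle_{\widehat{\rho}}=\operatorname{Tr}(\widehat{\rho}\widehat{A})$ of Theorem~\ref{ThmTC} and then to express that trace as a phase-space integral using the Weyl-symbol representation. First I would recall from Theorem~\ref{Thm1} that the Weyl symbol of $\widehat{\rho}$ is $(2\pi\hbar)^{n}\rho$, and that (by Theorem~\ref{ThmKernel}) $\widehat{\rho}$ is Hilbert--Schmidt, so $\rho\in L^{2}(\mathbb{R}^{2n})$; write $\widehat{A}=\operatorname{Op}_{\mathrm{W}}(a)$ with $a$ real. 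The key identity to invoke is the Hilbert--Schmidt/Moyal formula expressing the trace of a product of two Weyl operators as an integral of the product of their symbols, namely
\[
\operatorname{Tr}(\operatorname{Op}_{\mathrm{W}}(a)\operatorname{Op}_{\mathrm{W}}(b))=\left(\tfrac{1}{2\pi\hbar}\right)^{n}\int a(z)b(z)\,d^{2n}z,
\]
which is the operator-theoretic counterpart of the Plancherel identity (\ref{Plancherelsig}) for the symplectic Fourier transform. Applying this with $b=(2\pi\hbar)^{n}\rho$ immediately gives $\operatorname{Tr}(\widehat{A}\widehat{\rho})=\int a(z)\rho(z)\,d^{2n}z$, which combined with (\ref{trc3}) and (\ref{trab1}) yields (\ref{average2}).

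There is, however, a subtlety that must be addressed rather than glossed over: the observable $\widehat{A}$ is only assumed ``good'', i.e.\ $a\rho\in L^{1}(\mathbb{R}^{2n})$, and in general $\widehat{A}$ need not be bounded, so neither $\widehat{\rho}\widehat{A}$ nor the Moyal trace formula is available off the shelf. The cleanest way around this is to first treat the pure-state case $\widehat{\rho}=|\psi\rangle\langle\psi|$ directly: here the defining formula for the Weyl operator (\ref{Weyl1}) together with the Wigner transform (\ref{wigner4}) gives, by a Fourier-inversion computation,
\[
\langle\psi|\widehat{A}\psi\rangle=\int a(z)W\psi(z)\,d^{2n}z,
\]
the integral being absolutely convergent precisely because $aW\psi\in L^{1}$. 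One then passes to a general mixed state $\widehat{\rho}=\sum_{j}\alpha_{j}|\psi_{j}\rangle\langle\psi_{j}|$ by linearity and (\ref{wigner5}), using $\sum_{j}\alpha_{j}=1$ together with the dominated-convergence/Fubini bookkeeping that the goodness hypothesis $\int|a\rho|<\infty$ is designed to license, so that $\sum_{j}\alpha_{j}\int a\,W\psi_{j}$ converges to $\int a\,\rho$.

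The main obstacle I expect is exactly this interchange of summation and integration in the mixed-state step: one needs $\sum_{j}\alpha_{j}|aW\psi_{j}|$ to be integrable in order to split $\int a\rho=\sum_{j}\alpha_{j}\int aW\psi_{j}$, and the hypothesis only gives $\int|a\sum_{j}\alpha_{j}W\psi_{j}|<\infty$, which is weaker because the $W\psi_{j}$ are not of one sign. The remedy is to control things via the spectral decomposition of Theorem~\ref{ThmB}, so that the $W\psi_{j}$ arise from an orthonormal eigenbasis; then $\int|W\psi_{j}(z)|\,|a(z)|\,d^{2n}z$ can be estimated uniformly by Cauchy--Schwarz using $\|W\psi_{j}\|_{L^{2}}$ (a fixed constant, by the Moyal identity) against the local $L^{2}$ norm of $a$, reducing the claim to a monotone-convergence argument on the nonnegative quantities $\alpha_{j}$. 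Once absolute convergence is secured the rest is the routine Fourier computation sketched above.
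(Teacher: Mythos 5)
Your proposal is correct and its working core coincides with the paper's own proof: the paper likewise reduces to the pure-state case $\widehat{\rho}=|\psi\rangle\langle\psi|$, substitutes the integral expression (\ref{wigner4}) for $W\psi$ into $\int a\,W\psi$, invokes Fubini under the goodness hypothesis (\ref{good1}), and changes variables to recognize $\langle\widehat{A}\psi|\psi\rangle$ from the Weyl formula (\ref{Weyl1}). Your opening route via $\operatorname{Tr}(\widehat{\rho}\widehat{A})$ and the Hilbert--Schmidt trace formula of Theorem \ref{ThmHSbis} is rightly discarded, for exactly the reason you give ($\widehat{A}$ need not be bounded, let alone Hilbert--Schmidt). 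Where you genuinely go beyond the paper is in taking seriously the interchange $\int a\rho=\sum_{j}\alpha_{j}\int a\,W\psi_{j}$, which the paper dispatches with the single word ``linearity''; your observation that goodness of $a\rho$ does not by itself control $\sum_{j}\alpha_{j}\int|a\,W\psi_{j}|$, because the $W\psi_{j}$ change sign, is a real point that the paper leaves unaddressed. Be aware, though, that your proposed repair only closes the gap under a strictly stronger hypothesis: the Cauchy--Schwarz estimate $\int|a\,W\psi_{j}|\leq\|a\|_{L^{2}}\,\|W\psi_{j}\|_{L^{2}}$ requires $a\in L^{2}(\mathbb{R}^{2n})$ globally (a condition the paper notes is \emph{sufficient} for goodness but is not implied by it), so under bare goodness the termwise absolute convergence is still not secured. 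A similar joint-integrability caveat in fact already afflicts the Fubini step of the pure-state computation --- one needs absolute integrability in $(x,p,y)$ jointly, not merely $aW\psi\in L^{1}$ --- but that is a defect of the paper's proof as much as of yours.
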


\begin{proof}
By linearity it suffices to consider the case where $\widehat{\rho}%
=|\psi\rangle\langle\psi|$ so that $\rho=W\psi$; this reduces the proof of
formula (\ref{average2}) to that of the simpler equality%
\begin{equation}
\langle\widehat{A}\rangle_{\psi}=\int a(x,p)W\psi(x,p)d^{n}xd^{n}p.
\label{average3}%
\end{equation}
Replacing in the equality above $W\psi(x,p)$ by its integral expression
(\ref{wigner4}) yields%
\[
\langle\widehat{A}\rangle_{\psi}=\left(  \tfrac{1}{2\pi\hbar}\right)
^{n}\iint a(x,p)\left(  \int e^{-\frac{i}{\hbar}py}\psi(x+\tfrac{1}{2}%
y)\psi^{\ast}(x-\tfrac{1}{2}y)d^{n}y\right)  d^{n}xd^{n}p.
\]
Since we assume that the \textquotedblleft goodness\textquotedblright%
\ assumption (\ref{good1}) is satisfied, we can use Fubini's theorem and
rewrite this equality as a double integral:%
\[
\langle\widehat{A}\rangle_{\psi}=\left(  \tfrac{1}{2\pi\hbar}\right)  ^{n}%
{\displaystyle\iint}
a(x,p)e^{-\frac{i}{\hbar}py}\psi(x+\tfrac{1}{2}y)\psi^{\ast}(x-\tfrac{1}%
{2}y)d^{n}yd^{n}xd^{n}p.
\]
Let us perform the change of variables $x^{\prime}=x+\tfrac{1}{2}y$ and
$y^{\prime}=x-\tfrac{1}{2}y$; we have $x=\frac{1}{2}(x^{\prime}+y^{\prime})$
and $y=x^{\prime}-y^{\prime}$ and hence, using definition (\ref{Weyl1}) of the
Weyl operator $\widehat{A}$,
\begin{align*}
\langle\widehat{A}\rangle_{\psi}  &  =\left(  \tfrac{1}{2\pi\hbar}\right)
^{n}%
{\displaystyle\iint}
e^{-\frac{i}{\hbar}p(x^{\prime}-y^{\prime})}a(\tfrac{1}{2}(x^{\prime
}+y^{\prime}),p)\psi(x^{\prime})\psi^{\ast}(y^{\prime})d^{n}y^{\prime}%
d^{n}x^{\prime}d^{n}p\\
&  =\left(  \tfrac{1}{2\pi\hbar}\right)  ^{n}\int\left(  \iint e^{-\frac
{i}{\hbar}p(x^{\prime}-y^{\prime})}a(\tfrac{1}{2}(x^{\prime}+y^{\prime
}),p)\psi^{\ast}(y^{\prime})d^{n}y^{\prime}d^{n}p\right)  \psi(x^{\prime
})d^{n}x^{\prime}%
\end{align*}
and hence%
\[
\langle\widehat{A}\rangle_{\psi}=\int\widehat{A}\psi^{\ast}(x^{\prime}%
)\psi(x^{\prime})d^{n}x^{\prime}=\langle\widehat{A}\psi|\psi\rangle
\]
which we set out to prove.
\end{proof}

We remark that the identity (\ref{average3}) can be extended to the
cross-Wigner function (\ref{crosswigner}); in fact, adapting the proof of
(\ref{average3}) one sees that if $\psi$ and $\phi$ are square integrable,
then%
\begin{equation}
\langle\psi|\widehat{A}\phi\rangle=\int a(x,p)W(\psi,\phi)(x,p)d^{n}xd^{n}p.
\label{average4}%
\end{equation}

\subsection{The displacement operator and the ambiguity function}

In this subsection we review a few properties of the Wigner function which are
perhaps not all so well-known in quantum mechanics; these properties are
important because they give an insight into some of the subtleties of the
Weyl--Wigner--Moyal transform. We also define a related transform, the
ambiguity function.

\subsubsection{Redefinition of the Wigner function}

Recall that the reflection operator (\ref{parity}) is explicitly given by the
formula
\[
\widehat{\Pi}(x_{0},p_{0})\psi(x)=e^{\frac{2i}{\hbar}p_{0}(x-x_{0})}%
\psi(2x_{0}-x).
\]
It can be used to define the Wigner function in a very concise way. In fact:
for every $\psi\in L^{2}(\mathbb{R}^{n})$ we have%
\begin{equation}
W\psi(x_{0},p_{0})=\left(  \tfrac{1}{\pi\hbar}\right)  ^{n}\langle
\psi|\widehat{\Pi}(x_{0},p_{0})\psi\rangle. \label{wgr}%
\end{equation}
This is easy to verify: we have, by definition (\ref{parity}) of
$\widehat{\Pi}(x_{0},p_{0})$,
\[
\langle\psi|\widehat{\Pi}(x_{0},p_{0})\psi\rangle=\int e^{\frac{2i}{\hbar
}p_{0}(x-x_{0})}\psi(2x_{0}-x)\psi^{\ast}(x)d^{n}x;
\]
setting $y=2(x_{0}-x)$ we have $x=x_{0}-\frac{1}{2}y$, $2x_{0}-x=x_{0}%
+\frac{1}{2}y$, and $d^{n}x=2^{-n}d^{n}y$ hence%
\[
\langle\psi|\widehat{\Pi}(x_{0},p_{0})\psi\rangle=2^{-n}\int e^{-\frac
{i}{\hbar}p_{0}y}\psi(x_{0}+\tfrac{1}{2}y)\psi^{\ast}(x_{0}-\tfrac{1}%
{2}y)d^{n}y
\]
which proves (\ref{wgr}), taking definition (\ref{wigner4}) of the Wigner
function into account. Formula (\ref{wgr}) shows quite explicitly that, up to
the factor $(\pi\hbar)^{-n}$, the Wigner function is the probability amplitude
for the state $|\psi\rangle$ to be in the state $|\widehat{\Pi}(x_{0}%
,p_{0})\psi\rangle$; this was actually already observed by Grossmann
\cite{Grossmann} and Royer \cite{Royer} in the mid 1970s.

\subsubsection{The Moyal identity\label{secmoyal}}

An important equality satisfied by the Wigner function is \emph{Moyal's
identity}\footnote{It is somtimes also called the \textquotedblleft
orthogonality relation\textquotedblright.}
\begin{equation}
\int W\psi(z)W\phi(z)d^{2n}z=\left(  \tfrac{1}{2\pi\hslash}\right)
^{n}|\langle\psi|\phi\rangle|^{2} \label{Moyal}%
\end{equation}
(see \cite{Birkbis,gowig} for a proof); it is valid for all square-integrable
functions $\psi$ and $\phi$ on $\mathbb{R}^{n}$. In particular:%
\begin{equation}
\int W\psi(z)^{2}d^{2n}z=\left(  \tfrac{1}{2\pi\hslash}\right)  ^{n}%
||\psi||^{4}. \label{Moyalpart}%
\end{equation}
This formula implies the following interesting fact which is not immediately
obvious: consider the spectral decomposition (\ref{speck1}) of a density
operator in Theorem \ref{ThmB}:%
\[
\widehat{\rho}\psi=\sum_{j}\lambda_{j}\langle\psi_{j}|\psi\rangle\psi_{j}%
\]
here the $\lambda_{j}$ are the eigenvalues of $\widehat{\rho}$ and the
corresponding eigenvectors $\psi_{j}$ form an orthonormal basis of
$\mathcal{H}$. When $\mathcal{H}=L^{2}(\mathbb{R}^{n})$ the corresponding
Wigner function is therefore%
\begin{equation}
\widehat{\rho}\psi=\sum_{j}\lambda_{j}W\psi_{j}. \label{lambdawigner}%
\end{equation}
It follows from Moyal's identity (\ref{Moyal}) that the $W\psi_{j}$ form an
orthonormal system of vectors in the Hilbert space $L^{2}(\mathbb{R}^{2n})$
(but not a basis as is easily seen by \textquotedblleft dimension
count\textquotedblright).

As a consequence of the Moyal identity we prove the fact, mentioned in Section
\ref{secwft}, that the Wigner function of a density matrix is square integrable.

\begin{theorem}
\label{ThmWL2}Let $\{(\psi_{j},\alpha_{j})\}$ be a mixed state ($\psi_{j}\in
L^{2}(\mathbb{R}^{n})$, $\alpha_{j}\geq0$, $\sum_{j}\alpha_{j}=1$). The Wigner
function $\rho=W_{\widehat{\rho}}$ is square integrable: $\rho\in
L^{2}(\mathbb{R}^{2n})$.
\end{theorem}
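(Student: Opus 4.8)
The plan is to exploit the fact that $\widehat{\rho}$ is trace class, hence by Theorem \ref{ThmHS}(ii) a Hilbert--Schmidt operator, and to combine this with Moyal's identity, which relates the $L^2(\mathbb{R}^{2n})$-norm of the Wigner functions $W\psi_j$ to the inner products $\langle\psi_j|\psi_k\rangle$. The cleanest route is to work from the \emph{spectral decomposition} (\ref{speck1}) rather than the original mixture $\{(\psi_j,\alpha_j)\}$: write $\widehat{\rho}=\sum_j\lambda_j|\psi_j\rangle\langle\psi_j|$ with $(\psi_j)$ an \emph{orthonormal} system, $\lambda_j\geq0$, $\sum_j\lambda_j=1$. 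Then by linearity $\rho=W_{\widehat{\rho}}=\sum_j\lambda_j W\psi_j$, and the orthonormality of the $\psi_j$ makes the cross-terms in $\|\rho\|_{L^2}^2$ vanish.

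First I would record the three consequences of Moyal's identity (\ref{Moyal}) that are needed: for an orthonormal family, $\langle W\psi_j\,|\,W\psi_k\rangle_{L^2(\mathbb{R}^{2n})}=\left(\tfrac{1}{2\pi\hbar}\right)^n|\langle\psi_j|\psi_k\rangle|^2=\left(\tfrac{1}{2\pi\hbar}\right)^n\delta_{jk}$; in particular each $W\psi_j$ lies in $L^2(\mathbb{R}^{2n})$ with $\|W\psi_j\|^2=\left(\tfrac{1}{2\pi\hbar}\right)^n$, and distinct ones are orthogonal (this is exactly the observation already made after (\ref{lambdawigner})). Next, I would estimate the partial sums: for any finite $N$, by orthogonality,
\[
\Big\|\sum_{j\leq N}\lambda_j W\psi_j\Big\|_{L^2}^2=\sum_{j\leq N}\lambda_j^2\,\|W\psi_j\|_{L^2}^2=\left(\tfrac{1}{2\pi\hbar}\right)^n\sum_{j\leq N}\lambda_j^2\leq\left(\tfrac{1}{2\pi\hbar}\right)^n\Big(\sum_j\lambda_j\Big)^2=\left(\tfrac{1}{2\pi\hbar}\right)^n,
\]
using $\lambda_j^2\leq\lambda_j$ and $\sum_j\lambda_j=1$ (equivalently, $\sum_j\lambda_j^2=\operatorname{Tr}(\widehat{\rho}^2)\leq1$, as shown after Theorem \ref{ThmB}). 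Since $L^2(\mathbb{R}^{2n})$ is complete and the partial sums of the orthogonal series $\sum_j\lambda_j W\psi_j$ have uniformly bounded norm, the series converges in $L^2(\mathbb{R}^{2n})$; its limit is an $L^2$ function with $\|\rho\|_{L^2}^2=\left(\tfrac{1}{2\pi\hbar}\right)^n\sum_j\lambda_j^2\leq\left(\tfrac{1}{2\pi\hbar}\right)^n<\infty$. Finally I would check that this $L^2$-limit agrees with the pointwise/distributional object $W_{\widehat{\rho}}$ defined by (\ref{wigner5}): this is immediate since $L^2$-convergence implies convergence in $\mathcal{S}'(\mathbb{R}^{2n})$, and $W_{\widehat{\rho}}$ was defined as precisely that (tempered-distribution) sum.

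The main obstacle is a bookkeeping one rather than a deep one: one must be careful that the decomposition used is the orthonormal spectral one from Theorem \ref{ThmB}, not an arbitrary mixture $\{(\psi_j,\alpha_j)\}$ in which the $\psi_j$ need \emph{not} be orthogonal --- for a general mixture the cross-terms $\langle W\psi_j|W\psi_k\rangle$ do not vanish, and while one can still bound $\|\rho\|_{L^2}\leq\sum_j\alpha_j\|W\psi_j\|_{L^2}=\left(\tfrac{1}{2\pi\hbar}\right)^{n/2}\sum_j\alpha_j=\left(\tfrac{1}{2\pi\hbar}\right)^{n/2}$ by the triangle inequality in $L^2$ (which already suffices for the statement!), the sharp identity $\|\rho\|_{L^2}^2=\left(\tfrac{1}{2\pi\hbar}\right)^n\operatorname{Tr}(\widehat{\rho}^2)$ only drops out cleanly from the spectral form. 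Since the theorem only asserts $\rho\in L^2$, the triangle-inequality argument is the most economical; I would present that as the proof and remark that the spectral decomposition additionally yields $\|\rho\|_{L^2}^2=\left(\tfrac{1}{2\pi\hbar}\right)^n\operatorname{Tr}(\widehat{\rho}^2)$, hence $\|\rho\|_{L^2}=\left(\tfrac{1}{2\pi\hbar}\right)^{n/2}$ exactly when $\widehat{\rho}$ is pure.
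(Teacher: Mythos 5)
Your proposal is correct, and it rests on the same key lemma as the paper --- Moyal's identity (\ref{Moyalpart}), which gives $\|W\psi\|_{L^{2}}^{2}=(2\pi\hbar)^{-n}\|\psi\|^{4}$ and hence settles the pure case --- but you go noticeably further than the paper does on the reduction step. The paper's proof simply says that ``since $L^{2}(\mathbb{R}^{2n})$ is a vector space it is sufficient to consider the pure case,'' which literally covers only \emph{finite} mixtures; for an infinite sum $\sum_{j}\alpha_{j}W\psi_{j}$ one must also show the series converges in $L^{2}(\mathbb{R}^{2n})$. You supply exactly this missing piece, in two ways: the economical one, absolute convergence via the triangle inequality, $\sum_{j}\alpha_{j}\|W\psi_{j}\|_{L^{2}}=(2\pi\hbar)^{-n/2}\sum_{j}\alpha_{j}<\infty$, together with completeness of $L^{2}$; and the sharper one via the spectral decomposition of Theorem \ref{ThmB}, where orthonormality of the eigenvectors makes the $W\psi_{j}$ mutually orthogonal in $L^{2}(\mathbb{R}^{2n})$ and yields the exact identity $\|\rho\|_{L^{2}}^{2}=(2\pi\hbar)^{-n}\sum_{j}\lambda_{j}^{2}=(2\pi\hbar)^{-n}\operatorname{Tr}(\widehat{\rho}^{2})$, tying the $L^{2}$-norm of the Wigner function to the purity. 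Your remark that one must not apply the orthogonality argument to an arbitrary (non-orthogonal) mixture is also well taken. In short: same engine as the paper, but your write-up is the more complete argument, and the purity identity is a worthwhile bonus.
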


\begin{proof}
Since $L^{2}(\mathbb{R}^{2n})$ is a vector space it is sufficient to consider
the pure case, that is to prove that $W\psi\in L^{2}(\mathbb{R}^{2n})$ if
$\psi\in L^{2}(\mathbb{R}^{2n})$. But this immediately follows from Moyal's
identity (\ref{Moyalpart}).
\end{proof}

The Moyal identity can be extended to the cross-Wigner function
(\ref{crosswigner}); recall that for $\psi,\phi\in L^{2}(\mathbb{R}^{n})$ it
is defined by
\[
W(\psi,\phi)(x,p)=\left(  \tfrac{1}{2\pi\hbar}\right)  ^{n}\int e^{-\frac
{i}{\hbar}py}\psi(x+\tfrac{1}{2}y)\phi^{\ast}(x-\tfrac{1}{2}y)d^{n}y.
\]
In fact, for all $\psi,\psi^{\prime},\phi,\phi^{\prime}\in L^{2}%
(\mathbb{R}^{n})$ we have%
\begin{equation}
\int W(\psi,\psi^{\prime})^{\ast}(z)W(\phi,\phi^{\prime})(z)d^{2n}z=\left(
\tfrac{1}{2\pi\hslash}\right)  ^{n}\langle\psi|\phi\rangle\langle\psi^{\prime
}|\phi^{\prime}\rangle^{\ast} \label{CrossMoyal}%
\end{equation}
(see for instance de Gosson \cite{Birkbis,gowig}). Denoting by $\langle
\langle\cdot|\cdot\rangle\rangle$ the inner product on $L^{2}(\mathbb{R}%
^{2n})$ this identity can be written in the form%
\begin{equation}
\langle\langle W(\psi,\psi^{\prime})|W(\phi,\phi^{\prime})\rangle
\rangle=\left(  \tfrac{1}{2\pi\hslash}\right)  ^{n}\langle\psi|\phi
\rangle\langle\psi^{\prime}|\phi^{\prime}\rangle^{\ast} \label{moyal2}%
\end{equation}
In particular%
\begin{equation}
\int|W(\psi,\psi^{\prime})(z)|^{2}d^{2n}z=\left(  \tfrac{1}{2\pi\hslash
}\right)  ^{n}||\psi||^{2}\,||\psi^{\prime}||^{2}. \label{CrossMoyalbis}%
\end{equation}

An important remark: one can prove \cite{Birkbis,gowig}, using this
generalized Moyal identity, that if vectors $\psi_{j}$ form an orthonormal
basis of $L^{2}(\mathbb{R}^{n})$ then the vectors $(2\pi\hslash)^{n/2}%
W(\psi_{j},\psi_{k})$ form an orthonormal basis of the space $L^{2}%
(\mathbb{R}^{2n})$ (that these vectors are orthonormal is clear from
(\ref{moyal2})).

\subsubsection{The ambiguity function}

A transform closely related to the Wigner function and well-known from signal
analysis (especially radar theory) is the \emph{ambiguity function}
$\operatorname*{Amb}\psi$ (it is also called the \textquotedblleft
auto-correlation function\textquotedblright). It can be introduced in several
equivalent ways; we begin by defining it explicitly by a formula: for $\psi\in
L^{2}(\mathbb{R}^{n})$
\begin{equation}
\operatorname*{Amb}\psi(x,p)=\left(  \tfrac{1}{2\pi\hbar}\right)  ^{n}\int
e^{-\frac{i}{\hbar}py}\psi(y+\tfrac{1}{2}x)\psi^{\ast}(y-\tfrac{1}{2}x)d^{n}y.
\label{defamb1}%
\end{equation}
Comparing with the definition
\[
W\psi(x,p)=\left(  \tfrac{.1}{2\pi\hbar}\right)  ^{n}\int e^{-\frac{i}{\hbar
}py}\psi(x+\tfrac{1}{2}y)\psi^{\ast}(x-\tfrac{1}{2}y)d^{n}y
\]
of the Wigner function one cannot help being surprised by the similarity of
both definitions. In fact, it is easy to show by performing an elementary
change of variables that if $\psi$ is an even function (that is $\psi
(-x)=\psi(x)$ for all $x\in\mathbb{R}^{n}$) then $W\psi$ and
$\operatorname*{Amb}\psi$ are related by%
\begin{equation}
\operatorname*{Amb}\psi(x,p)=2^{-n}W\psi(\tfrac{1}{2}x,\tfrac{1}{2}p).
\label{ambeven}%
\end{equation}
There are two complementary \textquotedblleft natural\textquotedblright\ ways
to define the ambiguity function. The first is to observe that the Wigner
function and the ambiguity function are symplectic Fourier transforms of each
other:
\begin{equation}
\operatorname*{Amb}\psi=F_{\sigma}W\psi\text{ \ and \ }W\psi=F_{\sigma
}\operatorname*{Amb}\psi; \label{ambsft}%
\end{equation}
they are of course equivalent since $F_{\sigma}^{-1}=F_{\sigma}$. For a proof,
see de Gosson \cite{Birkbis,gowig}. There is still another way to define the
ambiguity function. Let $\widehat{D}(z_{0})=\widehat{D}(x_{0},p_{0})$ be the
\emph{Weyl displacement operator} (it is also called the Glauber--Sudarshan
displacement operator, or the Heisenberg operator, or the Heisenberg--Weyl
operator). It is defined by%
\begin{equation}
\widehat{D}(z_{0})\psi(x)=e^{\frac{i}{\hbar}(p_{0}x-\frac{1}{2}p_{0}x_{0}%
)}\psi(x-x_{0}). \label{HW1}%
\end{equation}
This operator is the time-one propagator for the Schr\"{o}dinger equation
associated with the classical translation Hamiltonian $\sigma(z,z_{0}%
)=x_{0}p-p_{0}x$ (see the discussions in de Gosson
\cite{Birk,Birkbis,SPRINGER} and Littlejohn \cite{Littlejohn}); this
observation motivates the notation%
\begin{equation}
\widehat{D}(z_{0})=e^{-\frac{i}{\hbar}\sigma(\widehat{z},z_{0})}=e^{-\frac
{i}{\hbar}(x_{0}\widehat{p}-p_{0}\widehat{x})} \label{dquant}%
\end{equation}
often found in the literature. We are using here the coordinate expression of
the displacement operator; we leave it to the reader as an exercise to check
that $\widehat{D}(z_{0})$ coincides with the operator
\[
D(\alpha)=\exp\left[  \frac{i}{\hbar}(\alpha a^{\dag}-\alpha^{\ast}a)\right]
\]
commonly used in quantum optics ($a$ and $a^{\dag}$ are the annihilation and
creation operators; see Poto\v{c}ek \cite{potocek} for a discussion of these
notational issues). The displacement operator is related to the reflection
operator $\widehat{\Pi}(z_{0})$ by the simple formula%
\begin{equation}
\widehat{\Pi}(z_{0})=\widehat{D}(z_{0})\Pi\widehat{D}(z_{0})^{\dag}
\label{deltapi}%
\end{equation}
where $\Pi$ is the parity operator $\Pi\psi(x)=\psi(-x)$. That the operators
$\widehat{D}(z_{0})$ correspond to translations in phase space quantum
mechanics is illustrated by the following important relation satisfied by the
Wigner transform:%
\begin{equation}
W(\widehat{D}(z_{0})\psi)(z)=W\psi(z-z_{0}) \label{transwig}%
\end{equation}
(it is easily proven by a direct computation, see de Gosson
\cite{Birk,Birkbis,gowig}, Littlejohn \cite{Littlejohn}). Using the
displacement operator, the ambiguity function is given by%
\begin{equation}
\operatorname*{Amb}\psi(z_{0})=\left(  \tfrac{1}{2\pi\hbar}\right)
^{n}\langle\widehat{D}(z_{0})\psi|\psi\rangle; \label{ambhw}%
\end{equation}
one verifies by a direct calculation using (\ref{HW1}) that one recovers the
first analytical definition (\ref{defamb1}).

The displacement operators play a very important role, not only in quantum
mechanics, but also in related disciplines such as harmonic analysis, signal
theory, and time-frequency analysis. They can be viewed as a representation of
the canonical commutation relations (the Schr\"{o}dinger representation of the
Heisenberg group); this is related to the fact these operators satisfy%
\begin{equation}
\widehat{D}(z_{0})\widehat{D}(z_{1})=e^{\frac{i}{\hslash}\sigma(z_{0},z_{1}%
)}\widehat{D}(z_{1})\widehat{D}(z_{0}) \label{commhw}%
\end{equation}
and also
\begin{equation}
\widehat{D}(z_{0}+z_{1})=e^{-\frac{i}{2\hslash}\sigma(z_{0},z_{1})}%
\widehat{D}(z_{0})\widehat{D}(z_{1}). \label{commhwbis}%
\end{equation}
The second formula shows that the displacement operators form a projective
representation of the phase space translation group. In addition to being used
to define the ambiguity function, the displacement operators allow one to
define Weyl operators in terms of their \textquotedblleft twisted
symbol\textquotedblright\ (sometimes also called \textquotedblleft covariant
symbol\textquotedblright), which is by definition the symplectic Fourier
transform%
\begin{equation}
a_{\sigma}(z)=F_{\sigma}a(z) \label{covariantsymbol}%
\end{equation}
of the ordinary symbol $a$. Let in fact $\widehat{A}=\operatorname*{Op}%
_{\mathrm{W}}(a)$, that is
\[
\widehat{A}=\left(  \tfrac{1}{\pi\hbar}\right)  ^{n}\int a(z_{0})\widehat{\Pi
}(z_{0})d^{2n}z_{0}%
\]
(formula \ref{Weyl2}). Using the displacement operator $\widehat{D}(z_{0})$ in
place of the reflection operator $\widehat{\Pi}(z_{0})$ we have%
\begin{equation}
\widehat{A}=\left(  \tfrac{1}{2\pi\hbar}\right)  ^{n}\int a_{\sigma}%
(z_{0})\widehat{D}(z_{0})d^{2n}z_{0} \label{atzo}%
\end{equation}
(see \cite{Birk,Birkbis,gowig,Littlejohn}). This formula has many
applications; it is essential in the study of the positivity properties of
trace class operators as we will see in a moment. Notice that formula
(\ref{atzo}) is Weyl's original definition \cite{Weyl} in disguise: making the
change of variables $z_{0}\longmapsto-Jz_{0}$ in this formula one gets, noting
that $a_{\sigma}(-Jz_{0})=Fa(z_{0})$ and $\widehat{D}(-Jz_{0})=e^{-\frac
{i}{\hbar}(x_{0}\widehat{x}+p_{0}\widehat{p})}e^{-\frac{i}{\hbar}%
(x_{0}\widehat{p}-p_{0}\widehat{x})}$
\[
\widehat{A}=\left(  \tfrac{1}{2\pi\hbar}\right)  ^{n}\iint Fa(x,p)e^{\frac
{i}{\hbar}(x\widehat{x}+p\widehat{p})}d^{n}pd^{n}x
\]
which is the formula originally proposed by Weyl \cite{Weyl}, in analogy with
the Fourier inversion formula (see the discussion in de Gosson
\cite{SPRINGER,gowig}).

\subsection{Calculating traces: rigorous theory\label{secalcul}}

\subsubsection{Kernels and symbols}

It is tempting to redefine the trace of a Weyl operator $\widehat{A}%
=\operatorname*{Op}\nolimits_{\mathrm{W}}(a)$ by the formula
\begin{equation}
\operatorname*{Tr}(\widehat{A})=\left(  \tfrac{1}{2\pi\hbar}\right)  ^{n}\int
a(z)d^{2n}z. \label{tralala}%
\end{equation}
But doing this one should not forget that even if the operator $\widehat{A}$
is of trace class, formula (\ref{tralala}) need not give the actual trace.
First, the integral in the right-hand side might not be convergent; secondly
even if it is we have to prove that it really yields the right result. We will
discuss the validity of formula (\ref{tralala}) and of other similar formulas
below, but let us first prove some intermediary results.

We begin by discussing the Weyl symbols of Hilbert--Schmidt and trace class operators.

\begin{theorem}
\label{ThmSymbol}Let $\widehat{A}=\operatorname*{Op}^{\mathrm{W}}(a)$ be a
Hilbert--Schmidt operator. Then $a\in L^{2}(\mathbb{R}^{2n})$ and we have%
\begin{equation}
\int|a(z)|^{2}d^{2n}z=\left(  2\pi\hbar\right)  ^{n/2}\iint K(x,y)d^{n}%
xd^{n}y\text{.} \label{aknorm}%
\end{equation}
Conversely, every Weyl operator with symbol $a\in L^{2}(\mathbb{R}^{2n})$ is a
Hilbert--Schmidt operator.
\end{theorem}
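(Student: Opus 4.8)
The plan is to reduce the statement to the Plancherel theorem for the partial $\hbar$-Fourier transform in the momentum variable, using the kernel characterisation of Hilbert--Schmidt operators from Theorem \ref{ThmKernel}(i).

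First I would rewrite the symbol formula (\ref{symbol1}) as a genuine Fourier transform. Given $\widehat{A}=\operatorname*{Op}^{\mathrm{W}}(a)$ with kernel $K$, put $\widetilde{K}(x,y)=K(x+\tfrac{1}{2}y,x-\tfrac{1}{2}y)$. The substitution $(x,y)\mapsto(u,v)=(x+\tfrac{1}{2}y,x-\tfrac{1}{2}y)$ has Jacobian of modulus $1$, so $K\mapsto\widetilde{K}$ is an isometry of $L^{2}(\mathbb{R}^{n}\times\mathbb{R}^{n})$ onto itself; with this notation (\ref{symbol1}) becomes $a(x,p)=\int e^{-\frac{i}{\hbar}py}\widetilde{K}(x,y)\,d^{n}y$, that is, for each fixed $x$ the function $a(x,\cdot)$ equals $(2\pi\hbar)^{n/2}$ times the unitary $\hbar$-Fourier transform of $\widetilde{K}(x,\cdot)$, while (\ref{symbol2}) recovers $\widetilde{K}(x,\cdot)$ from $a(x,\cdot)$ by the inverse partial $\hbar$-Fourier transform.

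Now assume $\widehat{A}$ is Hilbert--Schmidt. By Theorem \ref{ThmKernel}(i) we have $K\in L^{2}(\mathbb{R}^{n}\times\mathbb{R}^{n})$, hence $\widetilde{K}\in L^{2}(\mathbb{R}^{n}\times\mathbb{R}^{n})$, and by Fubini $\widetilde{K}(x,\cdot)\in L^{2}(\mathbb{R}^{n})$ for almost every $x$. For each such $x$, Plancherel's identity in the variable $y$ gives $\int|a(x,p)|^{2}\,d^{n}p=(2\pi\hbar)^{n}\int|\widetilde{K}(x,y)|^{2}\,d^{n}y$; integrating in $x$ (Tonelli, the integrand being nonnegative) yields $\int|a(z)|^{2}\,d^{2n}z=(2\pi\hbar)^{n}\iint|\widetilde{K}(x,y)|^{2}\,d^{n}x\,d^{n}y=(2\pi\hbar)^{n}\iint|K(x,y)|^{2}\,d^{n}x\,d^{n}y<\infty$, which is simultaneously the assertion $a\in L^{2}(\mathbb{R}^{2n})$ and the norm identity (\ref{aknorm}). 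Conversely, given $a\in L^{2}(\mathbb{R}^{2n})$, I would define $\widetilde{K}(x,\cdot)$ as the inverse partial $\hbar$-Fourier transform of $a(x,\cdot)$ as in (\ref{symbol2}), set $K(u,v)=\widetilde{K}(\tfrac{1}{2}(u+v),u-v)$, and run the same Plancherel-plus-change-of-variables computation to obtain $K\in L^{2}(\mathbb{R}^{n}\times\mathbb{R}^{n})$; since $K$ was built precisely so that (\ref{symbol2}) holds, the operator with kernel $K$ has Weyl symbol $a$, and by Theorem \ref{ThmKernel}(i) it is Hilbert--Schmidt.

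The only point requiring care is the bookkeeping of the integral exchanges: the pointwise Plancherel step is licit for almost every $x$ since $\widetilde{K}(x,\cdot)\in L^{2}(\mathbb{R}^{n})$ there (but not necessarily in $L^{1}$, so the inner integral in (\ref{symbol1}) must be read as an $L^{2}$-Fourier transform), and the subsequent interchange of $\int d^{n}p$ with $\int d^{n}x$ is justified by Tonelli because the integrand is nonnegative; equivalently, one may simply quote that the partial Fourier transform is unitary on $L^{2}(\mathbb{R}^{2n})$. An alternative, essentially equivalent route avoiding any Fubini discussion is to expand $K$ in the tensor orthonormal basis $(\psi_{k}\otimes\psi_{j}^{\ast})$, observe that the Weyl symbol of $|\psi_{k}\rangle\langle\psi_{j}|$ is $(2\pi\hbar)^{n}W(\psi_{k},\psi_{j})$ (as in the proof of Theorem \ref{Thm1}), and read both conclusions off the orthonormality of the system $(2\pi\hbar)^{n/2}W(\psi_{k},\psi_{j})$ in $L^{2}(\mathbb{R}^{2n})$ recorded after (\ref{CrossMoyalbis}).
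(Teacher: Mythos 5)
Your proof is correct and follows essentially the same route as the paper's: Plancherel's theorem in the momentum variable applied to the partial Fourier relation (\ref{symbol1}) between symbol and kernel, followed by the unimodular change of variables $(x,y)\mapsto(x+\tfrac{1}{2}y,x-\tfrac{1}{2}y)$, the only difference being that the paper secures the rigor by first taking $K\in\mathcal{S}(\mathbb{R}^{n}\times\mathbb{R}^{n})$ and extending by density, whereas you use a.e.\ Plancherel plus Tonelli. Note that what you (correctly) prove is $\int|a(z)|^{2}d^{2n}z=(2\pi\hbar)^{n}\iint|K(x,y)|^{2}d^{n}xd^{n}y$, which is exactly what the paper's own computation (\ref{a2k2}) establishes; the displayed identity (\ref{aknorm}) in the statement of the theorem is a typo.
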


\begin{proof}
In view of Theorem \ref{ThmKernel} we have $K\in L^{2}(\mathbb{R}^{n}%
\times\mathbb{R}^{n})$. Let us prove formula (\ref{aknorm}) when
$K\in\mathcal{S}(\mathbb{R}^{n}\times\mathbb{R}^{n})$; it will then hold by
continuity for arbitrary $K\in L^{2}(\mathbb{R}^{n}\times\mathbb{R}^{n})$ in
view of the density of $\mathcal{S}(\mathbb{R}^{n}\times\mathbb{R}^{n})$ in
$L^{2}(\mathbb{R}^{n}\times\mathbb{R}^{n})$. In view of formula (\ref{symbol1}%
) relating the kernel and the symbol of a Weyl operator we have
\begin{equation}
\int|a(x,p)|^{2}dp=\left(  2\pi\hbar\right)  ^{n}\int|K(x+\tfrac{1}%
{2}y,x-\tfrac{1}{2}y)|^{2}d^{n}y. \label{planche10}%
\end{equation}
Integrating this equality with respect to the $x$ variables we get, using
Fubini's theorem%
\begin{align*}
\int|a(z)|^{2}d^{2n}z  &  =\left(  2\pi\hbar\right)  ^{n}\int\left(
\int|K(x+\tfrac{1}{2}y,x-\tfrac{1}{2}y)|^{2}d^{n}y\right)  d^{n}x\\
&  =\left(  2\pi\hbar\right)  ^{n}%
{\displaystyle\iint}
|K(x+\tfrac{1}{2}y,x-\tfrac{1}{2}y)|^{2}d^{n}xd^{n}y.
\end{align*}
Set now $x^{\prime}=x+\tfrac{1}{2}y$ and $y^{\prime}=x-\tfrac{1}{2}y$; we have
$d^{n}x^{\prime}d^{n}y^{\prime}=d^{n}xd^{n}y$ and hence%
\begin{equation}
\int|a(z)|^{2}d^{2n}z=\left(  2\pi\hbar\right)  ^{n}%
{\displaystyle\iint}
|K(x^{\prime},y^{\prime})|^{2}d^{n}x^{\prime}d^{n}y^{\prime} \label{a2k2}%
\end{equation}
which we set out to prove. The converse is obvious since the condition $a\in
L^{2}(\mathbb{R}^{2n})$\ is equivalent to $K\in L^{2}(\mathbb{R}^{n}%
\times\mathbb{R}^{n})$ in view of the inequality above.
\end{proof}

\subsubsection{Rigorous results}

We now specialize our discussion to the case\ $\mathcal{H}=L^{2}%
(\mathbb{R}^{n})$. Recall that we showed in Section \ref{sectionHS} that the
formula%
\begin{equation}
\langle\widehat{A}|\widehat{B}\rangle_{\mathrm{HS}}=\operatorname{Tr}%
(\widehat{A}^{\dag}\widehat{B}) \label{trab}%
\end{equation}
defines an inner product on the ideal $\mathcal{L}_{2}(\mathcal{H)}$ of
Hilbert--Schmidt operators in $\mathcal{H}$ the associated \textquotedblleft
trace norm\textquotedblright\ being defined by
\begin{equation}
||\widehat{A}||_{\mathrm{HS}}=\operatorname{Tr}(\widehat{A}^{\dag}%
\widehat{A})^{1/2} \label{tranorm}%
\end{equation}

Let us state and prove the following result:

\begin{theorem}
\label{ThmHSbis}Let $\widehat{A}=\operatorname*{Op}^{\mathrm{W}}(a)$ and
$\widehat{B}=\operatorname*{Op}^{\mathrm{W}}(a)$ be Hilbert--Schmidt
operators: $\widehat{A},\widehat{B}\in\mathcal{L}_{2}(\mathcal{H)}$. (i) The
trace class operator $\widehat{A}\widehat{B}$ has trace
\begin{equation}
\operatorname*{Tr}(\widehat{A}\widehat{B})=\left(  \tfrac{1}{2\pi\hbar
}\right)  ^{n}\int a(z)b(z)d^{2n}z; \label{trab12}%
\end{equation}
(ii) The Hilbert--Schmidt inner product is given by the convergent integral%
\begin{equation}
\langle\widehat{A}|\widehat{B}\rangle_{\mathrm{HS}}=\left(  \tfrac{1}%
{2\pi\hbar}\right)  ^{n}\int a^{\ast}(z)b(z)d^{2n}z \label{trab13}%
\end{equation}
and hence $||\widehat{A}||_{\mathrm{HS}}^{2}=\operatorname*{Tr}(\widehat{A}%
^{\dag}\widehat{A})$ is given by
\begin{equation}
||\widehat{A}||_{\mathrm{HS}}^{2}=\left(  \tfrac{1}{2\pi\hbar}\right)
^{n}\int|a(z)|^{2}d^{2n}z. \label{traa}%
\end{equation}

\end{theorem}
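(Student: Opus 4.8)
The plan is to reduce everything to Theorem \ref{ThmSymbol}, which already identifies the Hilbert--Schmidt norm of a Weyl operator with the $L^2$-norm of its symbol (up to the factor $(2\pi\hbar)^{-n}$), together with the kernel calculus of Section \ref{sectionHS}. The key observation is that all three formulas (\ref{trab12}), (\ref{trab13}), (\ref{traa}) are instances of one identity, obtained by polarization; so the real content is a single bilinear computation. First I would prove (\ref{trab13}): by definition (\ref{HSAB}) of the Hilbert--Schmidt inner product, $\langle\widehat{A}|\widehat{B}\rangle_{\mathrm{HS}}=\operatorname{Tr}(\widehat{A}^{\dag}\widehat{B})$, and since $\widehat{A},\widehat{B}$ are Hilbert--Schmidt, Theorem \ref{ThmKernel} gives them kernels $K_A,K_B\in L^2(\mathbb{R}^n\times\mathbb{R}^n)$. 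The composition $\widehat{A}^{\dag}\widehat{B}$ is trace class (Theorem \ref{ThmHS}) with kernel $\int K_A^{*}(y,x)K_B(y,\cdot)\,d^ny$ in the first slot \emph{suitably interpreted}; integrating along the diagonal — which \emph{is} legitimate here because a product of two Hilbert--Schmidt kernels is of the Mercer/positive-type form discussed in the Precautions subsection, or more cleanly because $\operatorname{Tr}(\widehat{A}^\dag\widehat B)=\langle K_A|K_B\rangle_{L^2(\mathbb{R}^{2n})}$ directly from (\ref{HSAB}) expanded in the product basis $\psi_j\otimes\psi_i^{*}$ as in the proof of Theorem \ref{ThmKernel} — yields
\[
\langle\widehat{A}|\widehat{B}\rangle_{\mathrm{HS}}=\iint K_A^{*}(x,y)K_B(x,y)\,d^nx\,d^ny .
\]

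Next I would convert the right-hand side from kernels to symbols. The change of variables $x'=x+\tfrac12 y$, $y'=x-\tfrac12 y$ (Jacobian $1$, exactly the substitution used in the proof of Theorem \ref{ThmSymbol}) rewrites this diagonal-kernel pairing as a pairing of the ``half-sum/difference'' kernels, and then formula (\ref{symbol1}) expresses each of these as an inverse Fourier transform in the momentum variable of the symbol. Applying the ordinary Plancherel identity in the $p$ variable converts $\iint K_A^{*}K_B$ into $(2\pi\hbar)^{-n}\int a^{*}(z)b(z)\,d^{2n}z$; this is precisely the polarized form of (\ref{a2k2}) from Theorem \ref{ThmSymbol}, so in fact I would simply invoke that theorem and polarize rather than redo the Fourier computation. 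That establishes (\ref{trab13}); taking $\widehat B=\widehat A$ gives (\ref{traa}) immediately, and the convergence of the integral is guaranteed since $a\in L^2(\mathbb{R}^{2n})$ by Theorem \ref{ThmSymbol}.

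For part (i), formula (\ref{trab12}), I would note that $\widehat{A}\widehat{B}$ is trace class by Theorem \ref{ThmHS}(i) since both factors are Hilbert--Schmidt, and then write $\widehat A\widehat B=(\widehat A^\dag)^\dag\widehat B$ so that $\operatorname{Tr}(\widehat A\widehat B)=\operatorname{Tr}((\widehat{A}^{\dag})^{\dag}\widehat B)=\langle \widehat{A}^{\dag}|\widehat{B}\rangle_{\mathrm{HS}}$ by definition (\ref{HSAB})/(\ref{trab}). Since $\widehat A=\operatorname{Op}_{\mathrm W}(a)$ has adjoint $\widehat{A}^{\dag}=\operatorname{Op}_{\mathrm W}(a^{*})$, applying (\ref{trab13}) with $a$ replaced by $a^{*}$ gives $\operatorname{Tr}(\widehat A\widehat B)=(2\pi\hbar)^{-n}\int (a^{*})^{*}(z)b(z)\,d^{2n}z=(2\pi\hbar)^{-n}\int a(z)b(z)\,d^{2n}z$, which is (\ref{trab12}).

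The step I expect to be the main obstacle is the very first one: justifying rigorously that $\operatorname{Tr}(\widehat A^\dag\widehat B)$ equals the $L^2$ inner product of the two kernels, i.e.\ that ``integrating the composed kernel along the diagonal'' is valid here. The excerpt's own Precautions subsection warns exactly against careless use of such diagonal-integration formulas. The clean way around it — and the way I would actually present it — is to avoid kernels entirely in that step: expand both operators in an orthonormal basis $(\psi_j)$, use $\langle\widehat A|\widehat B\rangle_{\mathrm{HS}}=\sum_j\langle\widehat A\psi_j|\widehat B\psi_j\rangle=\sum_{i,j}\langle\psi_i|\widehat A\psi_j\rangle^{*}\langle\psi_i|\widehat B\psi_j\rangle$, recognize the matrix entries $\langle\psi_i|\widehat A\psi_j\rangle$ as the Fourier coefficients of $K_A$ in the orthonormal basis $(\psi_i\otimes\psi_j^{*})$ of $L^2(\mathbb{R}^{2n})$ (exactly as in the proof of Theorem \ref{ThmKernel}), and then apply Parseval in that basis to get $\langle K_A|K_B\rangle_{L^2}$. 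Every series here is absolutely convergent by Cauchy--Schwarz since both operators are Hilbert--Schmidt, so no diagonal-integration subtlety arises, and the rest is the symbol-vs-kernel Plancherel identity already packaged in Theorem \ref{ThmSymbol}.
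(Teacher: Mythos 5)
Your proposal is correct, and it reaches the result by a genuinely different route from the paper's. The paper proves (i) first and directly in phase space: it writes $\operatorname{Tr}(\widehat{A}\widehat{B})=\sum_j\langle\widehat{A}^{\dag}\psi_j|\widehat{B}\psi_j\rangle$, expands by Bessel, converts each matrix element $\langle\widehat{A}^{\dag}\psi_j|\psi_k\rangle$ into $\int a(z)W(\psi_j,\psi_k)(z)\,d^{2n}z$ via formula (\ref{average4}), and then invokes the extended Moyal identity (\ref{CrossMoyal}) --- specifically the fact that the rescaled cross-Wigner transforms $(2\pi\hbar)^{n/2}W(\psi_j,\psi_k)$ form an orthonormal basis of $L^{2}(\mathbb{R}^{2n})$ --- to apply Parseval and land on $(2\pi\hbar)^{-n}\langle\langle a^{\ast}|b\rangle\rangle$; part (ii) is then read off from (i). You instead prove (ii) first: Parseval in the product basis $\psi_i\otimes\psi_j^{\ast}$ identifies $\operatorname{Tr}(\widehat{A}^{\dag}\widehat{B})$ with the $L^{2}$ inner product of the kernels (all double series converging absolutely by Cauchy--Schwarz, exactly as you say), and the polarized form of the kernel--symbol Plancherel identity (\ref{a2k2}) of Theorem \ref{ThmSymbol} converts that to $(2\pi\hbar)^{-n}\int a^{\ast}b$; you then obtain (i) from (ii) by writing $\operatorname{Tr}(\widehat{A}\widehat{B})=\langle\widehat{A}^{\dag}|\widehat{B}\rangle_{\mathrm{HS}}$ and using $\widehat{A}^{\dag}=\operatorname{Op}_{\mathrm{W}}(a^{\ast})$. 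Both arguments are sound; yours is the more economical, since it needs only the $L^{2}$ kernel calculus of Theorems \ref{ThmKernel} and \ref{ThmSymbol} and bypasses the cross-Wigner/Moyal machinery entirely, while the paper's version exhibits the phase-space structure (the Wigner orthogonality relations) that the rest of the review builds on. One presentational point: your final paragraph is the actual proof --- the earlier aside suggesting that one could also ``integrate the composed kernel along the diagonal'' should be dropped or clearly flagged as heuristic, precisely for the reasons the Precautions subsection gives; the Parseval argument you substitute for it is the rigorous one and needs no such caveat.
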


\begin{proof}
(i) We first observe that in view of Theorem \ref{ThmSymbol} we have $a\in
L^{2}(\mathbb{R}^{n})$ and $b\in L^{2}(\mathbb{R}^{n})$ hence the integrals in
(\ref{trab12}) and (\ref{trab13}) are absolutely convergent. Let $(\psi_{j})$
be an orthonormal basis of $L^{2}(\mathbb{R}^{n})$; by definition of the trace
we have%
\[
\operatorname*{Tr}(\widehat{A}\widehat{B})=\sum_{j}\langle\psi_{j}%
|\widehat{A}\widehat{B}\psi_{j}\rangle=\sum_{j}\langle\widehat{A}^{\dag}%
\psi_{j}|\widehat{B}\psi_{j}\rangle.
\]
Expanding $\widehat{B}\psi_{j}$ and $\widehat{A}^{\dag}\psi_{j}$ in the basis
$(\psi_{j})$ we get
\[
\widehat{B}\psi_{j}=\sum_{k}\langle\psi_{k}|\widehat{B}\psi_{j}\rangle\psi
_{k}\text{ \ , \ }\widehat{A}^{\dag}\psi_{j}=\sum_{\ell}\langle\widehat{A}%
\psi_{\ell}|\psi_{j}\rangle\psi_{\ell}%
\]
\ and hence, using the Bessel equality (\ref{Besseleq}),
\begin{equation}
\langle\widehat{A}^{\dag}\psi_{j}|\widehat{B}\psi_{j}\rangle=\sum_{k}%
\langle\widehat{A}^{\dag}\psi_{j}|\psi_{k}\rangle\langle\widehat{B}\psi
_{j}|\psi_{k}\rangle^{\ast} \label{abba}%
\end{equation}
In view of formula (\ref{average4}) we have
\begin{gather*}
\langle\widehat{A}^{\dag}\psi_{j}|\psi_{k}\rangle=\int a(z)W(\psi_{j},\psi
_{k})(z)d^{2n}z\\
\langle\widehat{B}\psi_{j}|\psi_{k}\rangle=\int b^{\ast}(z)W(\psi_{j},\psi
_{k})(z)d^{2n}z
\end{gather*}
where $W(\psi_{j},\psi_{k})$ is the cross-Wigner transform (\ref{crosswigner})
of $\psi_{j},\psi_{k}$; denoting by $\langle\langle\cdot|\cdot\rangle\rangle$
the inner product on $L^{2}(\mathbb{R}^{2n})$ these equalities can be
rewritten
\begin{gather*}
\langle\widehat{A}^{\dag}\psi_{j}|\psi_{k}\rangle=\langle\langle a^{\ast
}|W(\psi_{j},\psi_{k})\rangle\rangle\\
\langle\widehat{B}\psi_{j}|\psi_{k}\rangle=\langle\langle b|W(\psi_{j}%
,\psi_{k})\rangle\rangle
\end{gather*}
and hence It follows from the extended Moyal identity (\ref{CrossMoyal}) that%
\[
\operatorname*{Tr}(\widehat{A}\widehat{B})=\sum_{j,k}\langle\langle a^{\ast
}|W(\psi_{j},\psi_{k})\rangle\rangle\langle\langle b|W(\psi_{j},\psi
_{k})\rangle\rangle^{\ast}.
\]
Since $(\psi_{j})$ is an orthonormal basis the vectors $(2\pi\hbar
)^{n/2}W(\psi_{j},\psi_{k})$ also form an orthonormal basis (see the remark
following formula (\ref{CrossMoyalbis})), hence the Bessel identity
(\ref{Besseleq}) allows us to write the equality above as
\[
\operatorname*{Tr}(\widehat{A}\widehat{B})=\left(  \tfrac{1}{2\pi\hbar
}\right)  ^{n}\langle\langle a^{\ast}|b\rangle\rangle
\]
which is formula (\ref{trab12}). (ii) It immediately follows from
formula\ (\ref{trab12}) using (\ref{trab}) and (\ref{tranorm}), recalling that
if $\widehat{A}=\operatorname*{Op}^{\mathrm{W}}(a)$ then $\widehat{A}^{\dag
}=\operatorname*{Op}^{\mathrm{W}}(a^{\ast})$.
\end{proof}

Part (i) of the result above allows us -- at last! -- to express the trace of
a Weyl operator in terms of its symbol provided that the latter is absolutely integrable:

\begin{corollary}
Let $\widehat{A}=\operatorname*{Op}^{\mathrm{W}}(a)$ be a trace class
operator. If in addition we have $a\in L^{1}(\mathbb{R}^{n})$ then%
\begin{equation}
\operatorname*{Tr}(\widehat{A})=\left(  \tfrac{1}{2\pi\hbar}\right)  ^{n}\int
a(z)d^{2n}z. \label{TraceA}%
\end{equation}

\end{corollary}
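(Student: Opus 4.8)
The plan is to reduce the claim to part (i) of Theorem \ref{ThmHSbis} by factoring the trace-class operator $\widehat{A}$ as a product of two Hilbert--Schmidt operators and then matching the symbol of the product with the product of the symbols. First I would invoke Theorem \ref{ThmHS}(i): since $\widehat{A}$ is of trace class, there exist Hilbert--Schmidt operators $\widehat{B},\widehat{C}$ with $\widehat{A}=\widehat{B}\widehat{C}$; concretely one may take $\widehat{B}=\widehat{U}(\widehat{A}^{\dag}\widehat{A})^{1/4}$ and $\widehat{C}=(\widehat{A}^{\dag}\widehat{A})^{1/4}$ from the polar decomposition, exactly as in that proof. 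Write $\widehat{B}=\operatorname*{Op}^{\mathrm{W}}(b)$ and $\widehat{C}=\operatorname*{Op}^{\mathrm{W}}(c)$; by Theorem \ref{ThmSymbol} we have $b,c\in L^{2}(\mathbb{R}^{2n})$. Then Theorem \ref{ThmHSbis}(i) gives
\[
\operatorname*{Tr}(\widehat{A})=\operatorname*{Tr}(\widehat{B}\widehat{C})=\left(\tfrac{1}{2\pi\hbar}\right)^{n}\int b(z)c(z)\,d^{2n}z .
\]

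It then remains to identify $\int b(z)c(z)\,d^{2n}z$ with $\int a(z)\,d^{2n}z$, using that $a$ is the Weyl symbol of $\widehat{B}\widehat{C}$, i.e. $a=b\times c$ is the twisted product. Here I would appeal to the integral formula for the twisted product together with the modified Plancherel identity (\ref{Plancherelsig}) for the symplectic Fourier transform. The cleanest route is via the twisted symbols: by (\ref{cecomp}) the twisted symbol of $\widehat{A}=\widehat{B}\widehat{C}$ is $a_{\sigma}(z)=\left(\tfrac{1}{2\pi\hbar}\right)^{n}\int e^{\frac{i}{2\hbar}\sigma(z,z')}b_{\sigma}(z-z')c_{\sigma}(z')\,d^{2n}z'$. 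Now $\int a(z)\,d^{2n}z=(2\pi\hbar)^{n}F_{\sigma}a(0)=(2\pi\hbar)^{n}a_{\sigma}(0)$, since $F_{\sigma}a(0)=\left(\tfrac{1}{2\pi\hbar}\right)^{n}\int a(z')\,d^{2n}z'$ directly from definition (\ref{sft}). Evaluating the convolution formula at $z=0$ kills the exponential and gives $a_{\sigma}(0)=\left(\tfrac{1}{2\pi\hbar}\right)^{n}\int b_{\sigma}(-z')c_{\sigma}(z')\,d^{2n}z'$, and by the Plancherel identity (\ref{Plancherelsig}) the right-hand side is $\left(\tfrac{1}{2\pi\hbar}\right)^{n}\int b(z)c(z)\,d^{2n}z$. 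Combining, $\int a(z)\,d^{2n}z=\int b(z)c(z)\,d^{2n}z$, and substituting into the displayed trace formula yields (\ref{TraceA}).

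The main obstacle is a matter of care rather than depth: one must make sure the integral $\int a(z)\,d^{2n}z$ appearing in (\ref{TraceA}) is interpreted consistently, and in particular that the identity $\int a(z)\,d^{2n}z=(2\pi\hbar)^{n}a_{\sigma}(0)$ makes sense — this is why the hypothesis $a\in L^{1}(\mathbb{R}^{2n})$ is imposed, since it guarantees $a_{\sigma}=F_{\sigma}a$ is a genuine (continuous, bounded) function and the formal evaluation at $0$ is legitimate. A secondary point is that the Plancherel step is applied to $b,c\in L^{2}$, which is exactly the setting in which (\ref{Plancherelsig}) holds, so no extra justification is needed there. Alternatively, and perhaps more simply, one can bypass the twisted-symbol computation entirely: from Theorem \ref{ThmHSbis}(i) one already has $\operatorname*{Tr}(\widehat{A})=\left(\tfrac{1}{2\pi\hbar}\right)^{n}\int b(z)c(z)\,d^{2n}z$, and one then observes that the twisted product satisfies $\int (b\times c)(z)\,d^{2n}z=\int b(z)c(z)\,d^{2n}z$ — which is itself an immediate consequence of (\ref{star1}) by integrating in $z$ first (the $z$-integral of $e^{\frac{i}{2\hbar}\sigma(z',z'')}$-free terms produces the delta identifying $z'=z''=0$ after the change of variables), giving the same conclusion. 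Either way the argument is short once Theorems \ref{ThmHS}, \ref{ThmSymbol}, and \ref{ThmHSbis} are in hand.
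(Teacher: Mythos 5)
Your proposal is correct and follows essentially the same route as the paper's own proof: factor $\widehat{A}=\widehat{B}\widehat{C}$ into Hilbert--Schmidt operators, apply Theorem \ref{ThmHSbis}(i), and identify $\int a(z)\,d^{2n}z$ with $\int b(z)c(z)\,d^{2n}z$ by evaluating the twisted-symbol composition formula (\ref{cecomp}) at $z=0$ and invoking the Plancherel identity (\ref{Plancherelsig}). Your added remarks on why $a\in L^{1}$ is needed to make $a_{\sigma}(0)$ well defined are a sensible supplement to what the paper leaves implicit.
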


\begin{proof}
(\textit{Cf}. Du and Wong \cite{duwong}, Theorem 2.4.) It is equivalent to
prove that the symplectic Fourier transform $a_{\sigma}=F_{\sigma}a$
satisfies
\begin{equation}
\operatorname*{Tr}(\widehat{A})=a_{\sigma}(0). \label{duwong2}%
\end{equation}
Writing $\widehat{A}=\widehat{B}\widehat{C}$ where $\widehat{B}$ and
$\widehat{C}$ are Hilbert--Schmidt operators we have%
\[
\operatorname*{Tr}(\widehat{A})=\left(  \tfrac{1}{2\pi\hbar}\right)  ^{n}\int
b(z)c(z)d^{2n}z
\]
hence it suffices to show that%
\[
a_{\sigma}(0)=\left(  \tfrac{1}{2\pi\hbar}\right)  ^{n}\int b(z)c(z)d^{2n}z.
\]
We have, in view of formula (\ref{cecomp}) giving the twisted symbol of the
product of two Weyl operators%
\[
a_{\sigma}(z)=\left(  \tfrac{1}{2\pi\hbar}\right)  ^{n}\int e^{\frac{i}%
{2\hbar}\sigma(z,z^{\prime})}b_{\sigma}(z-z^{\prime})c_{\sigma}(z^{\prime
})d^{2n}z^{\prime}%
\]
and hence, using the Plancherel identity (\ref{Plancherelsig}) for $F_{\sigma
}$,%
\begin{align*}
a_{\sigma}(0)  &  =\left(  \tfrac{1}{2\pi\hbar}\right)  ^{n}\int b_{\sigma
}(-z^{\prime})c_{\sigma}(z^{\prime})d^{2n}z^{\prime}\\
&  =\left(  \tfrac{1}{2\pi\hbar}\right)  ^{n}\int b(z)c(z)d^{2n}z
\end{align*}
which proves the formula (\ref{duwong2}).
\end{proof}

The following result is very much in the spirit of the $C^{\ast}$-algebraic
approach outlined in Section \ref{secalgebraic} (\textit{cf.}
Gracia-Bond\'{\i}a and Varilly \cite{Gracia1,Gracia2}). Let us denote by
$\operatorname*{Op}^{\mathrm{W}}$ denote the Weyl transform: $a\longmapsto
\widehat{A}=\operatorname*{Op}^{\mathrm{W}}(a)$; it associates to every symbol
the corresponding Weyl operator.

\begin{theorem}
(i) $\operatorname*{Op}^{\mathrm{W}}$ is an isomorphism of the Hilbert space
$L^{2}(\mathbb{R}^{n})$ onto the algebra $\mathcal{L}_{2}(L^{2}(\mathbb{R}%
^{n})\mathcal{)}$ of Hilbert--Schmidt operators on $L^{2}(\mathbb{R}^{n})$;
(ii) ...
\end{theorem}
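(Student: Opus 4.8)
The plan is to obtain part (i) as a repackaging of the kernel--symbol correspondence together with the Hilbert--Schmidt results already in hand, and to read the suppressed part (ii) as the assertion that $\operatorname*{Op}^{\mathrm{W}}$ carries the twisted product of phase-space symbols into operator composition and complex conjugation into the adjoint, so that the rescaled map $a\mapsto(2\pi\hbar)^{n/2}\operatorname*{Op}^{\mathrm{W}}(a)$ is a unitary $\ast$-algebra isomorphism of $L^{2}(\mathbb{R}^{2n})$ (the square-integrable symbols on phase space) onto $\mathcal{L}_{2}(L^{2}(\mathbb{R}^{n}))$.

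For (i): linearity of $\operatorname*{Op}^{\mathrm{W}}$ is immediate from the integral formula (\ref{Weyl1}). That it maps $L^{2}(\mathbb{R}^{2n})$ into $\mathcal{L}_{2}(L^{2}(\mathbb{R}^{n}))$ is the converse half of Theorem \ref{ThmSymbol}: the symbol $a$ and the kernel $K$ determine each other through (\ref{symbol1})--(\ref{symbol2}), namely a partial Fourier transform after the invertible linear change $(x,y)\mapsto(x+\tfrac{1}{2}y,x-\tfrac{1}{2}y)$, so $a\in L^{2}(\mathbb{R}^{2n})$ is equivalent to $K\in L^{2}(\mathbb{R}^{n}\times\mathbb{R}^{n})$, and by Theorem \ref{ThmKernel}(i) such kernels are exactly those of Hilbert--Schmidt operators. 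Injectivity then comes for free from Theorem \ref{ThmHSbis}(ii): formula (\ref{traa}) reads $||\operatorname*{Op}^{\mathrm{W}}(a)||_{\mathrm{HS}}^{2}=(2\pi\hbar)^{-n}\int|a(z)|^{2}d^{2n}z$, so $\operatorname*{Op}^{\mathrm{W}}$ is $(2\pi\hbar)^{-n/2}$ times a linear isometry, hence injective with closed range and bounded inverse on its range. For surjectivity, given $\widehat{A}\in\mathcal{L}_{2}(L^{2}(\mathbb{R}^{n}))$, Theorem \ref{ThmKernel}(i) produces a kernel $K\in L^{2}(\mathbb{R}^{n}\times\mathbb{R}^{n})$; defining $a$ from $K$ by (\ref{symbol1}) and applying Plancherel gives $a\in L^{2}(\mathbb{R}^{2n})$ with $\widehat{A}=\operatorname*{Op}^{\mathrm{W}}(a)$ by (\ref{symbol2}). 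Thus $\operatorname*{Op}^{\mathrm{W}}$ is a bijection, and together with the norm identity it is, up to the fixed constant, a unitary of $L^{2}(\mathbb{R}^{2n})$ onto $\mathcal{L}_{2}(L^{2}(\mathbb{R}^{n}))$, the latter being an algebra under composition by Section \ref{sectionHS}. This proves (i).

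For the algebra statement (ii) the engine is the composition law (\ref{cecomp})--(\ref{twiprod}): if $\widehat{A}=\operatorname*{Op}^{\mathrm{W}}(a)$ and $\widehat{B}=\operatorname*{Op}^{\mathrm{W}}(b)$ then $\widehat{A}\widehat{B}=\operatorname*{Op}^{\mathrm{W}}(a\times b)$, and combined with $\operatorname*{Op}^{\mathrm{W}}(a)^{\dagger}=\operatorname*{Op}^{\mathrm{W}}(a^{\ast})$ this makes $\operatorname*{Op}^{\mathrm{W}}$ intertwine $(\times,\,{}^{\ast})$ with $(\text{composition},\,{}^{\dagger})$. The one genuine subtlety -- and the step I expect to be the real obstacle -- is that for two merely $L^{2}$ symbols the twisted product $a\times b$ is a priori only an oscillatory integral (\ref{star1}), and it is not transparent that $a\times b\in L^{2}(\mathbb{R}^{2n})$, i.e. that the right-hand side even names an element of the symbol Hilbert space, since the coboundary phase in (\ref{star1}) is not integrable and a naive Fubini/Plancherel manipulation fails. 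I would sidestep this by operator theory: $\widehat{A},\widehat{B}$ are Hilbert--Schmidt, so $\widehat{A}\widehat{B}$ is a product of two Hilbert--Schmidt operators, hence trace class (Theorem \ref{ThmHS}(i)), hence again Hilbert--Schmidt (Theorem \ref{ThmHS}(ii)), whence by Theorem \ref{ThmSymbol} its Weyl symbol -- which is $a\times b$ -- lies in $L^{2}(\mathbb{R}^{2n})$, with the submultiplicative bound $||a\times b||_{L^{2}}=(2\pi\hbar)^{n/2}||\widehat{A}\widehat{B}||_{\mathrm{HS}}\leq(2\pi\hbar)^{-n/2}||a||_{L^{2}}\,||b||_{L^{2}}$. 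Together with the $\ast$-compatibility this exhibits $(L^{2}(\mathbb{R}^{2n}),\times)$ as a Banach $\ast$-algebra and $\operatorname*{Op}^{\mathrm{W}}$ as the asserted algebra isomorphism onto $\mathcal{L}_{2}(L^{2}(\mathbb{R}^{n}))$.
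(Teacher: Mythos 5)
Your part (i) is correct and takes essentially the same route as the paper, whose entire proof is the one-line observation that, by Theorem \ref{ThmSymbol}, a bounded operator is Hilbert--Schmidt if and only if its Weyl symbol is square integrable, plus the linearity and injectivity of the symbol--operator correspondence; you simply make explicit the pieces the paper compresses (surjectivity through the kernel characterization of Theorem \ref{ThmKernel} and the kernel--symbol bijection, injectivity through the isometry (\ref{traa})), and you correctly read the domain as $L^{2}(\mathbb{R}^{2n})$, silently fixing a typo in the paper's statement. Part (ii) of the theorem is left literally as ``...'' in the paper, with no statement and no proof, so there is nothing to compare your second half against; your reconstruction of it as the $\ast$-algebra morphism property is the natural reading, and your argument is sound: the only delicate point, namely that the twisted product $a\times b$ of two merely square-integrable symbols again lies in $L^{2}(\mathbb{R}^{2n})$ even though the oscillatory integral (\ref{star1}) is not absolutely convergent, is correctly disposed of by the operator-theoretic detour (product of two Hilbert--Schmidt operators is trace class, hence Hilbert--Schmidt by Theorem \ref{ThmHS}, hence has $L^{2}$ Weyl symbol by Theorem \ref{ThmSymbol}), and the resulting submultiplicative bound $\Vert a\times b\Vert_{L^{2}}\leq(2\pi\hbar)^{-n/2}\Vert a\Vert_{L^{2}}\Vert b\Vert_{L^{2}}$ is the standard one.
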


\begin{proof}
(i) In view of Theorem \ref{ThmSymbol} a bounded operator $\widehat{A}$ on
$L^{2}(\mathbb{R}^{n})$ is Hilbert--Schmidt if and only if its Weyl symbol $a$
is in $L^{2}(\mathbb{R}^{n})$. The Weyl correspondence being linear and
one-to-one the statement follows.
\end{proof}

\section{Metaplectic Group and Symplectic Covariance}

For a complete study of the metaplectic group in quantum mechanics see our
book \cite{ICP}; on a slightly more general and technical level see
\cite{Birk,Birkbis}. An excellent introduction to the symplectic group is
given in Arvind \textit{et al.} \cite{Arvind}, also see Garc\'{\i}a-Bull\'{e}
\textit{et al}. \cite{Garcia}.

\subsection{The metaplectic representation}

\subsubsection{The generators of $\operatorname*{Sp}(n)$}

Recall that the symplectic form on phase space $\mathbb{R}^{2n}$ can be
defined by $\sigma(z,z^{\prime})=(z^{\prime})^{T}Jz$ where $J=%
\begin{pmatrix}
0_{n\times n} & I_{n\times n}\\
-I_{n\times n} & 0_{n\times n}%
\end{pmatrix}
$ is the standard symplectic matrix (we use the notation $z=(x,p)$,
$z^{\prime}=(x^{\prime},p^{\prime})$). By definition, the symplectic group
$\operatorname*{Sp}(n)$ consists of all real $2n\times2n$ matrices $S$ such
that $\sigma(Sz,Sz^{\prime})=\sigma(z,z^{\prime})$ for all vectors
$z,z^{\prime}$; such a matrix is called a \textit{symplectic matrix}.
Rewriting this condition as $(Sz^{\prime})^{T}JSz=(z^{\prime})^{T}Jz$ we thus
have $S\in\operatorname*{Sp}(n)$ if and only if $S^{T}JS=J$. It is an easy
exercise to show that if $S$ is symplectic then $S^{-1}$ and $S^{T}$ are
symplectic as well, hence this defining relation is equivalent to $SJS^{T}=J$.
The symplectic group plays an essential role in classical mechanics in its
Hamiltonian formulation; its role in quantum mechanics is no less important,
in association with its double covering, the metaplectic group
$\operatorname*{Mp}(n)$ which we briefly describe now.

There are several ways to introduce the metaplectic group. We begin by giving
a definition using the notion of free symplectic matrix and its generating
function (we are following here our presentation in \cite{Birk,Birkbis}). Let
\begin{equation}
S=%
\begin{pmatrix}
A & B\\
C & D
\end{pmatrix}
\label{block1}%
\end{equation}
be a symplectic matrix, where the \textquotedblleft blocks\textquotedblright%
\ $A,B,C,D$ are $n\times n$ matrices. It is easy to show that the relations
$SJS^{T}=S^{T}JS=J$ are equivalent to the two groups of conditions
\begin{align}
A^{T}C\text{, }B^{T}D\text{ \ \textit{are symmetric, and} }A^{T}D-C^{T}B  &
=I\label{cond12}\\
AB^{T}\text{, }CD^{T}\text{ \ \textit{are\ symmetric, and} }AD^{T}-BC^{T}  &
=I\text{.} \label{cond22}%
\end{align}
One says that the block-matrix (\ref{block1}) is a \emph{free symplectic
matrix }if $B$ is invertible, i.e. $\det B\neq0$. To a free symplectic matrix
is associated a generating function: it is the quadratic form%
\begin{equation}
\mathcal{A}(x,x^{\prime})=\frac{1}{2}DB^{-1}x\cdot x-B^{-1}x\cdot x^{\prime
}+\frac{1}{2}B^{-1}Ax^{\prime}\cdot x^{\prime}. \label{wfree}%
\end{equation}
The terminology comes from the fact that the knowledge of $\mathcal{A}%
(x,x^{\prime})$ uniquely determines the free symplectic matrix $S$: we have%
\[%
\begin{pmatrix}
x\\
p
\end{pmatrix}
=%
\begin{pmatrix}
A & B\\
C & D
\end{pmatrix}%
\begin{pmatrix}
x^{\prime}\\
p^{\prime}%
\end{pmatrix}
\Longleftrightarrow\left\{
\begin{array}
[c]{c}%
p=\nabla_{x}\mathcal{A}(x,x^{\prime})\\
p^{\prime}=-\nabla_{x^{\prime}}\mathcal{A}(x,x^{\prime})
\end{array}
\right.
\]
as can be verified by a direct calculation. The interest of the notion of free
symplectic matrix comes from the fact that such matrices generate the
symplectic group $\operatorname*{Sp}(n)$. More precisely every $S\in
\operatorname*{Sp}(n)$ can be written as a product $S=S_{\mathcal{A}%
}S_{\mathcal{A}^{\prime}}$ (we place the corresponding generating functions
$\mathcal{A}$ and $\mathcal{A}^{\prime}$ as subscripts).

Defining, for symmetric $P$ and invertible $L$, the symplectic matrices
$V_{-P}$ and $M_{L}$ by
\begin{equation}
V_{-P}=%
\begin{pmatrix}
I & 0\\
P & I
\end{pmatrix}
\text{ \ , \ }M_{L}=%
\begin{pmatrix}
L^{-1} & 0\\
0 & L^{T}%
\end{pmatrix}
\label{vpml}%
\end{equation}
a straightforward calculations shows that the free symplectic matrix
$S_{\mathcal{A}}$ can be factored as
\begin{equation}
S_{\mathcal{A}}=V_{-DB^{-1}}M_{B^{-1}}JV_{-B^{-1}A}. \label{savpml}%
\end{equation}
This implies that the symplectic group $\operatorname*{Sp}(n)$ is generated by
the set of all matrices $V_{-P}$ and $M_{L}$ together with $J$. It is easy to
deduce from this that the determinant of a symplectic matrix always is equal
to one since we obviously have $\det V_{-P}=\det M_{L}=\det J$.

\subsubsection{Generalized Fourier transforms}

Now, to every free symplectic matrix $S_{\mathcal{A}}$ we associate two
operators $\widehat{S}_{\mathcal{A},m}$ by the formula%
\begin{equation}
\widehat{S}_{\mathcal{A},m}\psi(x)=\left(  \tfrac{1}{2\pi\hbar}\right)
^{n/2}i^{m-n/2}\sqrt{|\det B^{-1}|}\int e^{\frac{i}{\hbar}\mathcal{A}%
(x,x^{\prime})}\psi(x^{\prime})d^{n}x^{\prime} \label{qft1}%
\end{equation}
where $m$ corresponds to a choice of argument for $\det B^{-1}$: $m=0$
$\operatorname{mod}2$ if $\det B^{-1}>0$ and $m=1$ $\operatorname{mod}2$ if
$\det B^{-1}<0$. It is not difficult to prove that the generalized Fourier
transforms $\widehat{S}_{\mathcal{A},m}$ are unitary operators on
$L^{2}(\mathbb{R}^{n})$. These operators generate a group: the metaplectic
group $\operatorname*{Mp}(n)$. One shows that, as for the symplectic group,
every $\widehat{S}\in\operatorname*{Mp}(n)$ can be written (non uniquely) as a
product $\widehat{S}_{\mathcal{A},m}\widehat{S}_{\mathcal{A}^{\prime
},m^{\prime}}$. This group is a double covering of $\operatorname*{Sp}(n)$,
the covering projection being simply defined by
\begin{equation}
\pi_{\operatorname*{Mp}}:\operatorname*{Mp}(n)\longrightarrow
\operatorname*{Sp}(n)\text{ \ , \ }\pi_{\operatorname*{Mp}}(\widehat{S}%
_{\mathcal{A},m})=S_{\mathcal{A}}. \label{pimp}%
\end{equation}
Here are three examples of free symplectic matrices and of their metaplectic
counterparts; these can be used to give an alternative definition of the
metaplectic group:

\begin{itemize}
\item \textit{The standard symplectic matrix} $J$ has as generating function
$\mathcal{A}(x,x^{\prime})=-x\cdot x^{\prime}$ and hence the two corresponding
metaplectic operators are $\pm\widehat{J}$ with%
\[
\widehat{J}\psi(x)=\left(  \tfrac{1}{2\pi\hbar i}\right)  ^{n/2}\int
e^{-\frac{i}{\hbar}x\cdot x^{\prime}}\psi(x^{\prime})d^{n}x^{\prime};
\]
observe that $\widehat{J}=i^{-n/2}F$ where $F$ is the usual Fourier transform;

\item \textit{The symplectic shear}\ $V_{-P}=%
\begin{pmatrix}
I & 0\\
P & I
\end{pmatrix}
$ ($P=P^{T}$) is not free, but
\[
U_{-P}=JV_{-P}J^{-1}=%
\begin{pmatrix}
I & -P\\
0 & I
\end{pmatrix}
\]
is if $\det P$ $\neq0$. In this case we have $\mathcal{A}(x,x^{\prime}%
)=-\frac{1}{2}P^{-1}x\cdot x+P^{-1}x\cdot x^{\prime}$ and the corresponding
metaplectic operators are hence $\pm\widehat{U}_{-P}$ with%
\[
\widehat{U}_{-P}=\left(  \tfrac{1}{2\pi\hbar}\right)  ^{n/2}i^{n/2}|\det
P|^{-1}e^{-\frac{i}{2\hbar}P^{-1}x\cdot x}\int e^{\frac{i}{\hbar}P^{-1}x\cdot
x^{\prime}}\psi(x^{\prime})d^{n}x^{\prime}.
\]

\item \textit{The symplectic rescaling matrix} $M_{L}=%
\begin{pmatrix}
L^{-1} & 0\\
0 & L^{T}%
\end{pmatrix}
$ is not free but the product
\[
M_{L}J=%
\begin{pmatrix}
0 & L^{-1}\\
L^{T} & 0
\end{pmatrix}
\]
is and has $\mathcal{A}(x,x^{\prime})=Lx\cdot x^{\prime}$ as generating
function; the corresponding metaplectic operator are $\widehat{M}%
_{L,m}\widehat{J}$ where%
\[
\widehat{M}_{L,m}\psi(x)=i^{m}\sqrt{|\det L|}\psi(Lx)
\]
the integer $m$ (the \textquotedblleft Maslov index\textquotedblright)
corresponding to a choice of $\arg\det L$.
\end{itemize}

It turns out that an easy calculation shows that, similarly to the
factorization (\ref{savpml}) of free symplectic matrices, the quadratic
Fourier transform (\ref{qft1}) can be written%
\[
\widehat{S}_{\mathcal{A},m}=\widehat{V}_{-B^{-1}A}\widehat{M}_{B^{-1}%
,m}\widehat{J}\widehat{V}_{-DB^{-1}}%
\]
where $\widehat{M}_{L,m}$ and $\widehat{J}$ are defined as above and%
\[
\widehat{V}_{-P}\psi(x)=e^{\frac{i}{2\hbar}Px^{2}}\psi(x)
\]
when $P=P^{T}$. It follows that the elementary operators $\widehat{V}%
_{-P},\widehat{M}_{L,m}$ and $\widehat{J}$ generate $\operatorname*{Mp}(n)$
(these operators are used in many texts to define the metaplectic group; our
approach using (\ref{qft1}) has some advantages since among other things it
makes immediately clear that metaplectic operators are generalized Fourier transforms).

The factorization $\widehat{S}=\widehat{S}_{\mathcal{A},m}\widehat{S}%
_{\mathcal{A}^{\prime},m^{\prime}}$ of a metaplectic operator is by no means
unique; for instance we can write the identity operator $I$ as $\widehat{S}%
_{\mathcal{A},m}\widehat{S}_{\mathcal{A},m}^{-1}$ $=\widehat{S}_{\mathcal{A}%
,m}\widehat{S}_{\mathcal{A}^{\ast},m^{\ast}}$ for every quadratic Fourier
transform $\widehat{S}_{\mathcal{A},m}$. There is however an invariant
attached to $\widehat{S}$: the Maslov index. Denoting by
$\operatorname*{Inert}R$ the index of inertia (= the number of negative
eigenvalues) of the real symmetric matrix $R$ we have:

\begin{proposition}
Let $\widehat{S}=\widehat{S}_{\mathcal{A},m}\widehat{S}_{\mathcal{A}^{\prime
},m^{\prime}}=\widehat{S}_{\mathcal{A}^{\prime\prime},m^{\prime\prime}%
}\widehat{S}_{\mathcal{A}^{^{\prime\prime\prime}},m^{\prime\prime\prime}}$. We
have%
\begin{equation}
m+m^{\prime}-\operatorname*{Inert}(P^{\prime}+Q)\equiv m^{\prime\prime
}+m^{\prime\prime\prime}-\operatorname*{Inert}(P^{\prime\prime\prime
}+Q^{\prime\prime})\text{ \ }\operatorname{mod}4.\tag{A8}\label{mm1}%
\end{equation}

\end{proposition}

\begin{proof}
See Leray \cite{Leray}, de Gosson \cite{AIF,Birk}.
\end{proof}

It follows from formula (\ref{mm1}) that the class modulo $4$ of the integer
$m+m^{\prime}-\operatorname*{Inert}(P^{\prime}+Q)$ does not depend on the way
we write $\widehat{S}\in\operatorname*{Mp}(n)$ as a product $\widehat{S}%
_{\mathcal{A},m}\widehat{S}_{\mathcal{A}^{\prime},m^{\prime}}$ of quadratic
Fourier transforms; this class is denoted by $m(\widehat{S})$ and called the
Maslov index of $\widehat{S}$. The mapping
\[
m:\operatorname*{Mp}(n)\in\widehat{S}\longrightarrow m(\widehat{S}%
)\in\mathbb{Z}_{4}\text{ \ }%
\]
is called the Maslov index on $\operatorname*{Mp}(n)$. We have $m(\widehat{S}%
_{\mathcal{A},m})=m$, $\operatorname{mod}4$ (\cite{Leray,AIF}). The theory of
the Maslov index has been further developed by Arnol'd, Leray, and by the
author (see the review \cite{CLM} by Cappell \textit{et al.}). To the Maslov
index $m$ is associated another integer index which plays an essential role in
quantum holography 

\subsubsection{The Weyl symbol of a metaplectic operator }

We define the following subset of $\operatorname*{Sp}(n)$:
\begin{equation}
\operatorname*{Sp}\nolimits_{0}(n)=\{S\in\operatorname*{Sp}(n):\det
(S-I)\neq0\}.\label{A9}%
\end{equation}

To $S\in\operatorname*{Sp}\nolimits_{0}(n)$ we associate the family of
operators $\widehat{R}_{\nu}(S)$ defined, for $\nu\in\mathbb{R}$, by
\begin{equation}
\widehat{R}_{\nu}(S)=\left(  \tfrac{1}{2\pi\hbar}\right)  ^{n}i^{\nu}%
\sqrt{|\det(S-I)|}\int\widehat{T}(Sz_{0})\widehat{T}(-z_{0})d^{2n}%
z_{0}.\label{rus1}%
\end{equation}
One verifies that for all $S\in\operatorname*{Sp}\nolimits_{0}(n)$ and $\nu
\in\mathbb{R}$ the operators $\widehat{R}_{\nu}(S)$ satisfy the intertwining
formula
\[
\widehat{T}(Sz_{0})=\widehat{R}_{\nu}(S)\widehat{T}(z_{0})\widehat{R}_{\nu
}(S)^{-1}.
\]
It follows, using the irreducibility of the Schr\"{o}dinger representation of
the Heisenberg group \cite{Folland}, that there exists a constant $c(S,\nu
)\in\mathbb{C}$ such that $\widehat{R}_{\nu}(S)=c(S,\nu)\widehat{S}$ where
$\pi^{\operatorname*{Mp}}(\widehat{S})=S$. It is moreover easy to check that
the operators are $\widehat{R}_{\nu}(S)$ unitary, hence $|c(S,\nu)|=1$. The
following result connects the integer $\nu$ in (\ref{rus1}) to the
Conley--Zehnder index when $\widehat{R}_{\nu}(S)$ is a true metaplectic operator:

\begin{proposition}
\label{propcz}Let $\Sigma=(S_{t})_{t\in I}$ be symplectic isotopy in
$\operatorname*{Sp}(n)$ leading from the identity to $S\notin%
\operatorname*{Sp}\nolimits_{0}(n)$. Let $\widehat{\Sigma}=(\widehat{S}%
_{t})_{t\in I}$ be the metaplectic isotopy covering $\Sigma$ and
$\widehat{S}\in\operatorname*{Mp}(n)$ be its endpoint (thus $S=\pi
^{\operatorname*{Mp}}(\widehat{S})$). We have $\widehat{S}=\widehat{R}%
_{\nu(\widehat{\Sigma})}(S)$ where $\nu(\widehat{\Sigma})=\nu(\Sigma)$
$\operatorname{mod}4$.
\end{proposition}

\begin{proof}
This results from the identity (\ref{iczmod4}) (see \cite{LMP} and
\cite{RMPCZ,Birk}).
\end{proof}

The statement above has the following consequences when the endpoint of the
symplectic isotopy $\Sigma$ is a free symplectic matrix $S_{\mathcal{A}}$ (a
symplectic isotopy is a $C^{1}$ mapping $\Sigma:t\longmapsto S_{t}%
\in\operatorname*{Sp}(n)$ such that $S_{0}=I_{\mathrm{d}}$):

\begin{corollary}
Let $\widehat{S}_{\mathcal{A},m}\in\operatorname*{Mp}(n)$ be such that
$S_{\mathcal{A}}=\pi^{\operatorname*{Mp}}(\widehat{S}_{\mathcal{A},m}%
)\notin\operatorname*{Sp}\nolimits_{0}(n)$. We then have
\begin{equation}
\widehat{S}_{\mathcal{A},m}=\widehat{R}_{m-\operatorname*{Inert}%
\mathcal{A}_{xx}}(S)\label{swmnu}%
\end{equation}
where $\operatorname*{Inert}\mathcal{A}_{xx}$ is the index of inertia of the
matrix $\mathcal{A}_{xx}$ of second derivatives of the quadratic form
$x\longmapsto\mathcal{A}(x,x)$ on $\mathbb{R}^{n}$.
\end{corollary}

This allows us to give a rigorous explicit formula for the twisted Weyl symbol
of $\widehat{S}_{\mathcal{A},m}$:

\begin{corollary}
The twisted Weyl symbol of $\widehat{S}_{\mathcal{A},m}$ with $S_{\mathcal{A}%
}\notin\operatorname*{Sp}\nolimits_{0}(n)$ is given by%
\begin{equation}
(s_{\mathcal{A}})_{\sigma}(z)=\frac{i^{m-\operatorname*{Inert}\mathcal{A}%
_{xx}}}{\sqrt{|\det(S_{\mathcal{A}}-I)|}}\exp\left(  \frac{i}{2\hbar
}M_{\mathcal{A}}z\cdot z\right)  \label{weylmp1}%
\end{equation}
where $M_{\mathcal{A}}$ is the symplectic Cayley transform of $S_{\mathcal{A}%
}$.
\end{corollary}

\begin{proof}
See de Gosson \cite{LMP,Birk}.
\end{proof}

Proposition \ref{propcz} and formula (\ref{weylmp1}) suggest that the
Conley--Zehnder index is related to a choice of argument of the square root of
the determinant of $S-I$. This is indeed the case:

\begin{proposition}
\label{propsw}Let $\widehat{S}_{\mathcal{A},m}\in\operatorname*{Mp}(n)$ have
projection $S_{\mathcal{A}}\notin\operatorname*{Sp}_{0}(n)$. We have%
\begin{equation}
\nu(\widehat{S}_{\mathcal{A},m})=n+\frac{1}{\pi}\arg\det(S_{\mathcal{A}%
}-I)\text{ \ }\operatorname{mod}2.\label{argdet1}%
\end{equation}
that is%
\begin{equation}
\nu(\widehat{S}_{\mathcal{A},m})=\left\{
\begin{array}
[c]{c}%
n\text{ \ }\operatorname{mod}2\text{ \ if }S_{\mathcal{A}}\in
\operatorname*{Sp}\nolimits_{+}(n)\\
n+2\text{ \ }\operatorname{mod}2\text{ \ if }S_{\mathcal{A}}\in
\operatorname*{Sp}\nolimits_{-}(n)
\end{array}
\right.  .\label{argdet2}%
\end{equation}

\end{proposition}

\begin{proof}
The projection $S_{\mathcal{A}}=\pi^{\operatorname*{Mp}}(\widehat{S}%
_{\mathcal{A},m})$ is a free symplectic matrix, in block-matrix form%
\[
S_{\mathcal{A}}=%
\begin{pmatrix}
A & B\\
C & D
\end{pmatrix}
\text{ \ , \ }\det B\neq0.
\]
A straightforward calculation yields the factorization
\[
S_{\mathcal{A}}-I=%
\begin{pmatrix}
0 & B\\
I & D-I
\end{pmatrix}%
\begin{pmatrix}
C-(D-I)B^{-1}(A-I) & 0\\
B^{-1}(A-I) & I
\end{pmatrix}
.
\]
Since $S_{\mathcal{A}}\in\operatorname*{Sp}(n)$ we have $C-DB^{-1}%
A=-(B^{T})^{-1}$ and hence
\[
C-(D-I)B^{-1}(A-I)=B^{-1}A+DB^{-1}-(B^{T})^{-1}=\mathcal{A}_{xx}%
\]
so that
\[
S_{\mathcal{A}}-I=%
\begin{pmatrix}
0 & B\\
I & D-I
\end{pmatrix}%
\begin{pmatrix}
\mathcal{A}_{xx} & 0\\
B^{-1}(A-I) & I
\end{pmatrix}
.
\]
It follows that
\[
\det(S_{\mathcal{A}}-I)=(-1)^{n}\det B\det\mathcal{A}_{xx}%
\]
and hence%
\[
\arg\det(S_{\mathcal{A}}-I)=n\pi+\arg\det B+\arg\det\mathcal{A}_{xx}\text{
\ }\operatorname{mod}2\pi.
\]
Noticing that $\arg\det\mathcal{A}_{xx}=\pi\operatorname{Inert}\mathcal{A}%
_{xx}$ and that this is%
\[
\arg\det(S_{\mathcal{A}}-I)=n\pi+\arg\det B+\pi\operatorname{Inert}%
\mathcal{A}_{xx}\text{ \ }\operatorname{mod}2\pi.
\]
In view of formula (\ref{mod4}) and (\ref{weylmp1}) we have $\arg\det(B)=m\pi$
(see Appendix A) and hence%
\[
\arg\det(S_{\mathcal{A}}-I)=(n+m-\operatorname{Inert}\mathcal{A}_{xx}%
)\pi\text{ \ }\operatorname{mod}2\pi
\]
that is
\[
\arg\det(S_{\mathcal{A}}-I)=(n+\nu(\widehat{S}_{\mathcal{A},m}))\pi\text{
\ }\operatorname{mod}2\pi
\]
which yields (\ref{argdet1}).
\end{proof}

\subsection{Products of metaplectic operators}

Each $\widehat{S}\in\operatorname*{Mp}(n)$ can be written as a product
$\widehat{S}_{\mathcal{A},m}\widehat{S}_{\mathcal{A}^{\prime},m^{\prime}}$
(Appendix A, Proposition \ref{propA1}). It turns out that $\widehat{S}%
_{\mathcal{A},m}$ and $\widehat{S}_{\mathcal{A}^{\prime},m^{\prime}}$ can be
chosen so that their projections $S_{\mathcal{A}}$ and $S_{\mathcal{A}%
^{\prime}}$ have no eigenvalue equal to one. This fact, together with the
composition formula\ (\ref{cecomp}) leads to a complete characterization of
the symbol of a metaplectic operator. When $\widehat{S}$ has projection
$S\notin\operatorname*{Sp}\nolimits_{0}(n)$ we have the following explicit result:

\begin{proposition}
Let $\widehat{S}\in\operatorname*{Mp}(n)$ be such that $\pi
^{\operatorname*{Mp}}(\widehat{S})\notin\operatorname*{Sp}\nolimits_{0}(n)$.

(i) There exist $\widehat{S}_{\mathcal{A},m}$ and $\widehat{S}_{\mathcal{A}%
^{\prime},m^{\prime}}$ such that $\widehat{S}=\widehat{S}_{\mathcal{A}%
,m}\widehat{S}_{\mathcal{A}^{\prime},m^{\prime}}$; moreover these operators
can be chosen so that $S_{\mathcal{A}}=\pi^{\operatorname*{Mp}}(\widehat{S}%
_{\mathcal{A},m})\notin\operatorname*{Sp}\nolimits_{0}(n)$ and $S_{\mathcal{A}%
^{\prime}}=\pi^{\operatorname*{Mp}}(\widehat{S}_{\mathcal{A}^{\prime
},m^{\prime}})\notin\operatorname*{Sp}\nolimits_{0}(n)$.

(ii) We have%
\begin{equation}
\widehat{S}=\widehat{R}_{\nu+\nu^{\prime}+\frac{1}{2}\operatorname*{sign}%
(M)}(S)=\widehat{R}_{\nu(\widehat{S})}(S)\label{sr}%
\end{equation}
where $M=M_{\mathcal{A}}+M_{\mathcal{A}^{\prime}}$ ($M_{\mathcal{A}}$ and
$M_{\mathcal{A}^{\prime}}$ the symplectic Cayley transforms of $S_{\mathcal{A}%
}$ and $S_{\mathcal{A}^{\prime}}$), and%
\begin{equation}
\nu=m-\operatorname*{Inert}\mathcal{A}_{xx}\text{ \ , \ }\nu^{\prime
}=m^{\prime}-\operatorname*{Inert}\mathcal{A}_{xx}^{\prime}\label{smm}%
\end{equation}
are the Conley--Zehnder indices of $\widehat{S}_{\mathcal{A},m}$ and
$\widehat{S}_{\mathcal{A}^{\prime},m^{\prime}}$;

(iii) The twisted Weyl symbol of $\widehat{S}$ is given by
\begin{equation}
s_{\sigma}(z)=\frac{i^{\nu(\widehat{S})}}{\sqrt{|\det(S-I)|}}\exp\left(
\frac{i}{2\hbar}Mz\cdot z\right)  \label{productmw}%
\end{equation}
with
\begin{equation}
\nu(\widehat{S})=\nu+\nu^{\prime}+\tfrac{1}{2}\operatorname*{sign}M.
\label{nus}%
\end{equation}

\end{proposition}

\begin{proof}
See Proposition 10 in \cite{LMP} or \cite{Birk}, \S 7.4 for detailed proofs.
That $\widehat{S}$ can always be factored as $\widehat{S}_{\mathcal{A}%
,m}\widehat{S}_{\mathcal{A}^{\prime},m^{\prime}}$ where $\widehat{S}%
_{\mathcal{A},m}$ and $\widehat{S}_{\mathcal{A}^{\prime},m^{\prime}}$ have
projections $S_{\mathcal{A}}$ and $S_{\mathcal{A}^{\prime}}$ not in
$\operatorname*{Sp}\nolimits_{0}(n)$ was proven in \cite{LMP}. For formula
(\ref{productmw}) the idea is to apply formula (\ref{cecomp}) to
(\ref{weylmp1}) and to use the Fresnel formula (\ref{Fresnel}), which yields,
after some calculations%
\[
c_{\sigma}(z)=\frac{i^{\nu+\nu^{\prime}+\frac{1}{2}\operatorname*{sign}(M)}%
}{\sqrt{|\det[(S_{\mathcal{A}}-I)(S_{\mathcal{A}^{\prime}}-I)M]|}}e^{\frac
{i}{2\hbar}Mz\cdot z}.
\]
A simple calculation taking into account the definition of the symplectic
Cayley transforms shows that%
\begin{equation}
(S_{\mathcal{A}}-I)(S_{\mathcal{A}^{\prime}}-I)M=S-I\label{identity}%
\end{equation}
($M$ is invertible in view of Lemma \ref{leminv}).
\end{proof}

We have seen in Proposition \ref{propsw} that the Conley--Zehnder index of a
quadratic Fourier transform $\widehat{S}_{\mathcal{A},m}$ is simply related to
a choice of argument for $\det(S_{\mathcal{A}}-I)$. Using the result above,
this observation can be generalized to the case of an arbitrary $\widehat{S}%
\in\operatorname*{Mp}(n)$ with projection $S\notin\operatorname*{Sp}%
\nolimits_{0}(n)$:

\begin{corollary}
Let $\widehat{S}\in\operatorname*{Mp}(n)$ with $S=\pi^{\operatorname*{Mp}%
}(\widehat{S})\notin\operatorname*{Sp}\nolimits_{0}(n)$. We have%
\begin{equation}
\nu(\widehat{S})=n+\frac{1}{\pi}\operatorname*{Arg}\det(S-I)\text{
\ }\operatorname{mod}2. \label{argdet3}%
\end{equation}

\end{corollary}

\begin{proof}
Writing $\widehat{S}=\widehat{S}_{\mathcal{A},m}\widehat{S}_{\mathcal{A}%
^{\prime},m^{\prime}}$ with $S_{\mathcal{A}}$ and $S_{\mathcal{A}^{\prime}}$
not in $\operatorname*{Sp}\nolimits_{0}(n)$ it follows from the identity
(\ref{identity}) that
\[
\det\left[  (S_{\mathcal{A}}-I)(S_{\mathcal{A}^{\prime}}-I)M\right]
=\det(S-I)
\]
with $M=M_{\mathcal{A}}+M_{\mathcal{A}^{\prime}}$ and hence
\[
\arg\det(S-I)=\arg\det(S_{\mathcal{A}}-I)+\arg\det(S_{\mathcal{A}^{\prime}%
}-I)+\arg\det M.
\]
Since $M=M_{\mathcal{A}}+M_{\mathcal{A}^{\prime}}$ is invertible (Lemma
\ref{leminv}) we have
\[
\arg\det M=\pi\operatorname{Inert}M=-\pi\operatorname{Inert}M\text{
\ }\operatorname{mod}2\pi
\]
and hence, using formulas (\ref{argdet1}) and (\ref{modprod}) together with
the relation $\operatorname*{sign}M=2(n-\operatorname{Inert}M)$,
\begin{align*}
\arg\det(S-I) &  =\nu(\widehat{S}_{\mathcal{A},m})\pi+\nu(\widehat{S}%
_{\mathcal{A},m})-\pi(n-\tfrac{1}{2}\operatorname*{sign}M)\text{\ \ }%
\operatorname{mod}2\pi\\
&  =\nu(\widehat{S}_{\mathcal{A},m})\pi+\nu(\widehat{S}_{\mathcal{A},m}%
)-n\pi+\tfrac{1}{2}\pi\operatorname*{sign}M\text{\ \ }\operatorname{mod}2\pi\\
&  =\nu(\widehat{S})\pi-n\pi\text{ \ }\operatorname{mod}2\pi
\end{align*}
proving formula (\ref{argdet3}).
\end{proof}

\subsection{Symplectic covariance properties}

\subsubsection{A conjugation property for the displacement and reflection
operators}

Everything here stems from the following observation: let $\widehat{S}%
\in\operatorname*{Mp}(n)$ have projection $S\in\operatorname*{Sp}(n)$ (the
symplectic matrix $S$ is thus \textquotedblleft covered\textquotedblright\ by
the two metaplectic operators $\pm\widehat{S}$). Then for every phase space
point $z_{0}=(x_{0},p_{0})$ the displacement operators $\widehat{D}(Sz_{0})$
and $\widehat{D}(z_{0})$ are related by the conjugation formula
\begin{equation}
\widehat{D}(Sz_{0})=\widehat{S}\widehat{D}(z_{0})\widehat{S}^{\dag}
\label{sycov1}%
\end{equation}
(recall that $\widehat{S}^{\dag}=\widehat{S}^{-1}$ since metaplectic operators
are unitary). This formula is most easily proven using the generators of
$\operatorname*{Sp}(n)$ and the corresponding generators of
$\operatorname*{Mp}(n)$; for a complete proof see for instance de Gosson
\cite{Birkbis}, \S 8.1.3, also \cite{Littlejohn}. It easily follows from
(\ref{sycov1}) that the reflection operator (\ref{parity}) satisfies a similar
relation:
\begin{equation}
\widehat{\Pi}(Sz_{0})=\widehat{S}\widehat{\Pi}(z_{0})\widehat{S}^{\dag}.
\label{sycov2}%
\end{equation}
In fact, recalling (formula (\ref{deltapi})) that $\widehat{\Pi}%
(z_{0})=\widehat{D}(z_{0})\Pi\widehat{D}(z_{0})^{\dag}$ we have%
\[
\widehat{\Pi}(Sz_{0})=\widehat{D}(Sz_{0})\Pi\widehat{D}(Sz_{0})^{\dag
}=\widehat{S}\widehat{D}(z_{0})(\widehat{S}^{\dag}\Pi\widehat{S}%
)\widehat{D}(z_{0})^{\dag}\widehat{S}^{\dag};
\]
to get (\ref{sycov2}) we have to show that $\widehat{S}^{\dag}\Pi
\widehat{S}=\Pi$ since we will then have%
\[
\widehat{\Pi}(Sz_{0})=\widehat{S}\widehat{D}(z_{0})\Pi\widehat{D}(z_{0}%
)^{\dag}\widehat{S}^{\dag}=\widehat{S}\widehat{\Pi}(z_{0})\widehat{S}^{\dag}.
\]
It suffices for that purpose to show that $\widehat{S}_{\mathcal{A},m}^{\dag
}\Pi\widehat{S}_{\mathcal{A},m}=\Pi$ since the generalized Fourier transforms
$\widehat{S}_{\mathcal{A},m}$ generate the metaplectic group. Now,
$\Pi\widehat{S}_{\mathcal{A},m}\psi(x)=\widehat{S}_{\mathcal{A},m}\psi(-x)$
hence, by (\ref{qft1}),
\[
\Pi\widehat{S}_{\mathcal{A},m}\psi(x)=\left(  \tfrac{1}{2\pi\hbar}\right)
^{n/2}i^{m-n/2}\sqrt{|\det B^{-1}|}\int e^{\frac{i}{\hbar}\mathcal{A}%
(-x,x^{\prime})}\psi(x^{\prime})d^{n}x^{\prime}.
\]
Noting that $\mathcal{A}(-x,x^{\prime})=\mathcal{A}(x,-x^{\prime})$ (cf.
formula (\ref{wfree})) we get, making the change of variables $x^{\prime
}\longmapsto-x^{\prime}$,%
\[
\Pi\widehat{S}_{\mathcal{A},m}\psi(x)=\left(  \tfrac{1}{2\pi\hbar}\right)
^{n/2}i^{m-n/2}\sqrt{|\det B^{-1}|}\int e^{\frac{i}{\hbar}\mathcal{A}%
(x,x^{\prime})}\psi(-x^{\prime})d^{n}x^{\prime}%
\]
that is $\Pi\widehat{S}_{\mathcal{A},m}=\widehat{S}_{\mathcal{A},m}\Pi$; it
follows that we have
\[
\widehat{S}_{\mathcal{A},m}^{\dag}\Pi\widehat{S}_{\mathcal{A},m}%
=\widehat{S}_{\mathcal{A},m}^{\dag}\widehat{S}_{\mathcal{A},m}\Pi=\Pi.
\]

\subsubsection{Symplectic covariance}

Collecting the facts above we have:

\begin{theorem}
\label{Thmsyco}Let $z=(x,p)$ be a point in the phase space $\mathbb{R}^{2n}$
and $\widehat{S}$ a metaplectic operator with projection $\pi
_{\operatorname*{Mp}}(\widehat{S})=S$ in $\operatorname*{Sp}(n)$. (i) We have%
\begin{equation}
W\psi(Sz)=W(\widehat{S}^{-1}\psi)(z)\text{ \ , \ }\operatorname*{Amb}%
\psi(Sz)=\operatorname*{Amb}(\widehat{S}^{-1}\psi)(z). \label{sycov3}%
\end{equation}
(ii) For every symbol $a$ we have
\begin{equation}
\operatorname*{Op}\nolimits_{\mathrm{W}}(a\circ S^{-1})=\widehat{S}%
\operatorname*{Op}\nolimits_{\mathrm{W}}(a)\widehat{S}^{\dag} \label{sycov4}%
\end{equation}
where $a\circ S^{-1}(z)=a(S^{-1}z)$.
\end{theorem}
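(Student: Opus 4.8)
The statement is a direct consequence of the conjugation formulas (\ref{sycov1}) and (\ref{sycov2}) for the displacement and reflection operators, combined with the concise operator representations (\ref{wgr}), (\ref{ambhw}) of the Wigner and ambiguity functions and the representation (\ref{Weyl2}) of a Weyl operator in terms of reflections. No genuinely hard analysis is involved; the only points requiring care are distinguishing $\widehat{S}^{\dag}$ from $\widehat{S}^{-1}$ (they coincide since metaplectic operators are unitary) and invoking the fact, established via the generator factorization (\ref{savpml}), that $\det S=1$ for every $S\in\operatorname*{Sp}(n)$.

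For part (i), I would start from formula (\ref{wgr}), $W\psi(z_{0})=(\pi\hbar)^{-n}\langle\psi|\widehat{\Pi}(z_{0})\psi\rangle$, evaluate it at $Sz_{0}$, and substitute the covariance relation (\ref{sycov2}): $W\psi(Sz_{0})=(\pi\hbar)^{-n}\langle\psi|\widehat{S}\widehat{\Pi}(z_{0})\widehat{S}^{\dag}\psi\rangle$. Moving the unitary $\widehat{S}$ across to the left factor of the inner product turns it into $\widehat{S}^{\dag}=\widehat{S}^{-1}$, giving $(\pi\hbar)^{-n}\langle\widehat{S}^{-1}\psi|\widehat{\Pi}(z_{0})\widehat{S}^{-1}\psi\rangle=W(\widehat{S}^{-1}\psi)(z_{0})$, which is the first identity in (\ref{sycov3}). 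The statement for $\operatorname*{Amb}\psi$ is proved identically, starting instead from (\ref{ambhw}), $\operatorname*{Amb}\psi(z_{0})=(2\pi\hbar)^{-n}\langle\widehat{D}(z_{0})\psi|\psi\rangle$, and using the displacement covariance (\ref{sycov1}) in place of (\ref{sycov2}). (Alternatively, once the Wigner part is known, the ambiguity part follows from $\operatorname*{Amb}\psi=F_{\sigma}W\psi$ together with the elementary observation that $F_{\sigma}(a\circ S)=(F_{\sigma}a)\circ S$ for $S\in\operatorname*{Sp}(n)$, which in turn uses $\sigma(Sz,Sz')=\sigma(z,z')$ and $\det S=1$; I would mention this as a remark but carry out the direct computation.)

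For part (ii), I would use the reflection-operator form (\ref{Weyl2}) of a Weyl operator, $\operatorname*{Op}\nolimits_{\mathrm{W}}(a)=(\pi\hbar)^{-n}\int a(z_{0})\widehat{\Pi}(z_{0})\,d^{2n}z_{0}$, and conjugate by $\widehat{S}$ under the integral sign:
\[
\widehat{S}\operatorname*{Op}\nolimits_{\mathrm{W}}(a)\widehat{S}^{\dag}=\left(\tfrac{1}{\pi\hbar}\right)^{n}\int a(z_{0})\,\widehat{S}\widehat{\Pi}(z_{0})\widehat{S}^{\dag}\,d^{2n}z_{0}=\left(\tfrac{1}{\pi\hbar}\right)^{n}\int a(z_{0})\,\widehat{\Pi}(Sz_{0})\,d^{2n}z_{0}
\]
by (\ref{sycov2}). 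The change of variables $z_{0}'=Sz_{0}$ has Jacobian $|\det S|=1$, so $d^{2n}z_{0}'=d^{2n}z_{0}$ and the integral becomes $(\pi\hbar)^{-n}\int a(S^{-1}z_{0}')\widehat{\Pi}(z_{0}')\,d^{2n}z_{0}'=\operatorname*{Op}\nolimits_{\mathrm{W}}(a\circ S^{-1})$, which is exactly (\ref{sycov4}).

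The "main obstacle" is really only bookkeeping: one must be consistent about whether $\widehat{S}$ is pushed left or right in the inner products (and hence whether $\widehat{S}$ or $\widehat{S}^{\dag}$ appears), and one must justify interchanging $\widehat{S}(\cdot)\widehat{S}^{\dag}$ with the integral in (ii) — this is legitimate on the dense subspace $\mathcal{S}(\mathbb{R}^{n})$ (or in the weak sense, since both sides are bounded when $a\in L^{2}$ by Theorem \ref{ThmSymbol}) and then extends by continuity. Everything else is a one-line substitution.
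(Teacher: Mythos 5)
Your proof is correct and follows essentially the same route as the paper: part (i) via the representation (\ref{wgr}) (resp. (\ref{ambhw})) combined with the conjugation formula (\ref{sycov2}) (resp. (\ref{sycov1})) and unitarity, and part (ii) by conjugating the reflection-operator form (\ref{Weyl2}) under the integral and changing variables with $\det S=1$. The extra remarks on the Fourier-transform alternative for the ambiguity function and on justifying the interchange of conjugation and integration are harmless additions not present in the paper's proof.
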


\begin{proof}
(i) To prove the first identity (\ref{sycov3}) we recall (formula (\ref{wgr}))
that%
\[
W\psi(z)=\left(  \tfrac{1}{\pi\hbar}\right)  ^{n}\langle\psi|\widehat{\Pi
}(z)\psi\rangle
\]
and hence, using (\ref{sycov2}) and the unitarity of metaplectic operators,
\[
W\psi(Sz)=\left(  \tfrac{1}{\pi\hbar}\right)  ^{n}\langle\psi|\widehat{S}%
\widehat{\Pi}(z)\widehat{S}^{\dag}\psi\rangle=\left(  \tfrac{1}{\pi\hbar
}\right)  ^{n}\langle\widehat{S}^{\dag}\psi|\widehat{\Pi}(z)\widehat{S}^{\dag
}\psi\rangle
\]
which is precisely (\ref{sycov3}). The proof of the second identity
(\ref{sycov3}) is similar using the definition (\ref{ambhw}) of the ambiguity
function together with property (\ref{sycov1}). (ii) Recall (formula
(\ref{Weyl2})) that the Weyl operator $\widehat{A}=\operatorname*{Op}%
\nolimits_{\mathrm{W}}(a)$ can be written
\[
\widehat{A}=\left(  \tfrac{1}{\pi\hbar}\right)  ^{n}\int a(z)\widehat{\Pi
}(z)d^{2n}z
\]
and hence, using (\ref{sycov2}),%
\begin{align*}
\widehat{S}\widehat{A}\widehat{S}^{-1}  &  =\left(  \tfrac{1}{\pi\hbar
}\right)  ^{n}\int a(z)\widehat{S}\widehat{\Pi}(z)\widehat{S}^{-1}d^{2n}z\\
&  =\left(  \tfrac{1}{\pi\hbar}\right)  ^{n}\int a(z)\widehat{\Pi}(Sz)d^{2n}z;
\end{align*}
performing the change of variables $z^{\prime}=Sz$ we have, since $\det S=1$,%
\[
\widehat{S}\widehat{A}\widehat{S}^{-1}=\left(  \tfrac{1}{\pi\hbar}\right)
^{n}\int a(S^{-1}z)\widehat{\Pi}(z)d^{2n}z=\operatorname*{Op}%
\nolimits_{\mathrm{W}}(a\circ S^{-1})
\]
as claimed.
\end{proof}

Applying the machinery above to the density matrix we get:

\begin{corollary}
Let $\{(\psi_{j},\alpha_{j})\}$ be a mixed state with density matrix
$\widehat{\rho}$ and Wigner function $\rho$. Let $\widehat{S}\in
\operatorname*{Mp}(n)$. The mixed state $\{(\widehat{S}\psi_{j},\alpha_{j})\}$
has density matrix $\widehat{S}\widehat{\rho}\widehat{S}^{\dag}$ and Wigner
function $\rho(S^{-1}z)$, where $S=\pi_{\operatorname*{Mp}}(\widehat{S})$.
\end{corollary}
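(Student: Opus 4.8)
The plan is to obtain the statement by directly combining the symplectic covariance results of Theorem \ref{Thmsyco} with the elementary algebra of rank-one projections. First I would check that $\{(\widehat{S}\psi_{j},\alpha_{j})\}$ genuinely defines a mixed state: since $\widehat{S}\in\operatorname*{Mp}(n)$ is unitary on $L^{2}(\mathbb{R}^{n})$, each $\widehat{S}\psi_{j}$ is again a normalized vector, while the weights $\alpha_{j}\geq 0$ with $\sum_{j}\alpha_{j}=1$ are untouched, so the associated density matrix is well defined via (\ref{trc1}).

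Next I would identify that density matrix. The one nontrivial input is the projection identity $|\widehat{S}\psi\rangle\langle\widehat{S}\psi|=\widehat{S}\,|\psi\rangle\langle\psi|\,\widehat{S}^{\dag}$, valid for any unitary $\widehat{S}$: applying both sides to an arbitrary $\phi\in L^{2}(\mathbb{R}^{n})$ and using $\langle\widehat{S}\psi|\phi\rangle=\langle\psi|\widehat{S}^{\dag}\phi\rangle$ produces $\langle\widehat{S}\psi|\phi\rangle\,\widehat{S}\psi$ in both cases. Summing against the weights $\alpha_{j}$ and pulling $\widehat{S}$ and $\widehat{S}^{\dag}$ outside the sum by linearity gives
\[
\sum_{j}\alpha_{j}|\widehat{S}\psi_{j}\rangle\langle\widehat{S}\psi_{j}|=\widehat{S}\Big(\sum_{j}\alpha_{j}|\psi_{j}\rangle\langle\psi_{j}|\Big)\widehat{S}^{\dag}=\widehat{S}\widehat{\rho}\widehat{S}^{\dag},
\]
which is the claimed density matrix. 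That $\widehat{S}\widehat{\rho}\widehat{S}^{\dag}$ is again a positive trace class operator of trace one is moreover covered by the earlier theorem on invariance of the trace under unitary conjugation.

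For the Wigner function I would invoke Theorem \ref{Thmsyco}. By linearity of the Wigner transform and (\ref{wigner5}), the Wigner function of the new state is $\sum_{j}\alpha_{j}W(\widehat{S}\psi_{j})$. Substituting $\psi\mapsto\widehat{S}\psi$ and then $z\mapsto S^{-1}z$ in the first identity of (\ref{sycov3}) gives $W(\widehat{S}\psi)(z)=W\psi(S^{-1}z)$, hence $\sum_{j}\alpha_{j}W(\widehat{S}\psi_{j})(z)=\sum_{j}\alpha_{j}W\psi_{j}(S^{-1}z)=\rho(S^{-1}z)$. Equivalently one can work with Weyl symbols: by Theorem \ref{Thm1} the symbol of $\widehat{\rho}$ is $(2\pi\hbar)^{n}\rho$, so by the covariance identity (\ref{sycov4}) the symbol of $\widehat{S}\widehat{\rho}\widehat{S}^{\dag}$ is $(2\pi\hbar)^{n}(\rho\circ S^{-1})$, and Theorem \ref{Thm1} then reads off its Wigner function as $\rho(S^{-1}z)$.

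There is essentially no real obstacle here: the corollary merely repackages Theorem \ref{Thmsyco} and the conjugation identity for rank-one projections. The only point demanding a little care is the bookkeeping of the symplectic matrix — the state is transformed by $\widehat{S}$ while the Wigner function is pulled back by $S^{-1}$ — so one must substitute the arguments consistently in whichever form of (\ref{sycov3}) or (\ref{sycov4}) one uses.
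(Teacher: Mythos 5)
Your proof is correct and follows essentially the same route as the paper: the Wigner function is obtained exactly as in the paper's proof, by applying the first identity of (\ref{sycov3}) termwise to $\sum_{j}\alpha_{j}W(\widehat{S}\psi_{j})$. The only (harmless) difference is that your primary identification of the new density matrix as $\widehat{S}\widehat{\rho}\widehat{S}^{\dag}$ uses the direct conjugation identity for rank-one projectors --- precisely the alternative the paper mentions in the remark following its proof --- whereas the paper's own proof deduces it from the symbol covariance (\ref{sycov4}); since you also include that second route, nothing is missing.
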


\begin{proof}
The Wigner distribution of $\{(\widehat{S}\psi_{j},\alpha_{j})\}$ is
\[
\sum_{j}\alpha_{j}W(\widehat{S}\psi_{j})(z)=\sum_{j}\alpha_{j}W\psi_{j}%
(S^{-1}z)
\]
because of the first formula (\ref{sycov3}). It follows that the Weyl symbol
of the density matrix corresponding to $\{(\widehat{S}\psi_{j},\alpha_{j})\}$
is $a(S^{-1}z)$ where $a=(2\pi\hbar)^{n}\rho$ is the Weyl symbol of
$\widehat{\rho}$; that $\widehat{S}\widehat{\rho}\widehat{S}^{\dag}$ is the
density matrix of $\{(\widehat{S}\psi_{j},\alpha_{j})\}$ follows from formula
(\ref{sycov4}).
\end{proof}

Note that the fact that $\widehat{S}\widehat{\rho}\widehat{S}^{\dag}$ is the
density matrix of $\{(\widehat{S}\psi_{j},\alpha_{j})\}$ can also be proven
directly using the definition (\ref{trc1}) of the density matrix in term of projectors.

\section{Variable Planck Constant}

We now address one of the central themes of this Review, namely the
mathematical consequences of possible changes in the value of Planck's
constant $h$.

\subsection{A consequence of Moyal's identity}

We begin by a few straightforward observations involving the Moyal identity
introduced in Section \ref{secmoyal}. Let $\eta$ be a real parameter; we
assume for the moment that $\eta>0$. This parameter will play the role of a
variable $\hbar=h/2\pi$. For any square integrable $\psi$ we define the $\eta
$-Wigner transform (or distribution) of $\psi$ by replacing $\hbar$ by $\eta$
in the usual definition:
\begin{equation}
W_{\eta}\psi(x,p)=\left(  \tfrac{1}{2\pi\eta}\right)  ^{n}\int e^{-\frac
{i}{\eta}py}\psi(x+\tfrac{1}{2}y)\psi^{\ast}(x-\tfrac{1}{2}y)d^{n}y.
\label{etawig}%
\end{equation}
Of course $W_{\hbar}\psi=W\psi$ (the usual Wigner transform). The mathematical
properties of $W_{\eta}\psi$ are of course the same as those of $W\psi$,
replacing $\hbar$ everywhere with $\eta$. In particular, replacing the $\hbar
$-Fourier transform (\ref{hft}) with the $\eta$-Fourier transform
\begin{equation}
F_{\eta}\psi(p)=\left(  \tfrac{1}{2\pi\eta}\right)  ^{n/2}\int e^{-\frac
{i}{\eta}px}\psi(x)d^{n}x \label{etaft}%
\end{equation}
the marginal properties (\ref{marginal1}) become%
\begin{equation}
\int W_{\eta}\psi(x,p)d^{n}p=|\psi(x)|^{2}\text{ \ , \ }\int W_{\eta}%
\psi(x,p)d^{n}x=|F_{\eta}\psi(p)|^{2} \label{marginal2}%
\end{equation}
($\psi\in L^{1}(\mathbb{R}^{n})\cap L^{2}(\mathbb{R}^{n})$).

An important equality satisfied by the Wigner function is Moyal's
identity\footnote{It is somtimes also called the \textquotedblleft
orthogonality relation\textquotedblright\ for the Wigner function.}
\begin{equation}
\int W_{\eta}\psi(z)W_{\eta}\phi(z)d^{2n}z=\left(  \tfrac{1}{2\pi\eta}\right)
^{n}|\langle\psi|\phi\rangle|^{2} \label{Moyaleta}%
\end{equation}
which is valid for all square integrable functions $\psi$ and $\phi$ (see de
Gosson \cite{gowig}). In particular%
\begin{equation}
\int W_{\eta}\psi(z)^{2}d^{2n}z=\left(  \tfrac{1}{2\pi\eta}\right)  ^{n}%
||\phi||^{4}. \label{Moyal2eta}%
\end{equation}

Let us now address the following question: for a given $\psi$, can we find
$\phi$ such that $W_{\eta}\phi=W\psi$ for $\eta\neq\hbar$? The answer is
\textquotedblleft no\textquotedblright! More generally:

\begin{theorem}
\label{Thm3}(i) A pure state $|\psi\rangle$ does not remain a pure state if we
vary $\hbar$: let $W\psi$ be the Wigner function of $|\psi\rangle$. There does
not exist any state $|\phi\rangle$ such that $W_{\eta}\phi=W\psi$ if $\eta
\neq\hbar$. (ii) Assume that $|\psi\rangle$ becomes a mixed state when $\hbar$
is replaced with $\eta.$ Then we must have $\eta\leq\hbar$.
\end{theorem}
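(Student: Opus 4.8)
The plan is to reduce both statements to a single numerical comparison: the $L^{2}(\mathbb{R}^{2n})$-norm of the function $W\psi$, computed first through the $\hbar$-Moyal identity (\ref{Moyalpart}) and then through the $\eta$-Moyal identity (\ref{Moyal2eta}). Since $\psi$ is a normalized pure state, (\ref{Moyalpart}) gives $\int W\psi(z)^{2}\,d^{2n}z=(2\pi\hbar)^{-n}$; this value is an ``$\hbar$-fingerprint'' of $W\psi$ with which any attempt to re-read $W\psi$ as an $\eta$-Wigner object will have to be reconciled.

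For part (i) I would argue by contradiction. If $W_{\eta}\phi=W\psi$ for some normalized $\phi\in L^{2}(\mathbb{R}^{n})$, then applying the $\eta$-Moyal identity (\ref{Moyal2eta}) to the left-hand side gives $\int W_{\eta}\phi(z)^{2}\,d^{2n}z=(2\pi\eta)^{-n}$, while (\ref{Moyalpart}) applied to the right-hand side gives $(2\pi\hbar)^{-n}$. Equating the two forces $\eta=\hbar$, contrary to the hypothesis $\eta\neq\hbar$. This is a two-line computation once Moyal's identity is in hand.

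For part (ii) the first task is to pin down what ``$|\psi\rangle$ becomes a mixed state when $\hbar$ is replaced with $\eta$'' means: the function $W\psi$, now read as an $\eta$-Wigner function, is the $\eta$-Wigner function of some density matrix $\widehat{\rho}$ (the density operator itself being a fixed operator on $L^{2}(\mathbb{R}^{n})$; only its Wigner function depends on the chosen value of the constant). Invoking the spectral theorem (Theorem \ref{ThmB}) I would write $\widehat{\rho}=\sum_{j}\lambda_{j}|\psi_{j}\rangle\langle\psi_{j}|$ with $(\psi_{j})$ orthonormal, $\lambda_{j}\geq0$, $\sum_{j}\lambda_{j}=1$, so that, reading (\ref{lambdawigner}) with $\eta$ in place of $\hbar$, $W\psi=\sum_{j}\lambda_{j}W_{\eta}\psi_{j}$. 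The generalized $\eta$-Moyal identity makes the $W_{\eta}\psi_{j}$ pairwise orthogonal in $L^{2}(\mathbb{R}^{2n})$ with common squared norm $(2\pi\eta)^{-n}$, so $\int W\psi(z)^{2}\,d^{2n}z=(2\pi\eta)^{-n}\sum_{j}\lambda_{j}^{2}$. Comparing with the $\hbar$-fingerprint $(2\pi\hbar)^{-n}$ and using $\sum_{j}\lambda_{j}^{2}\leq\sum_{j}\lambda_{j}=1$ gives $(2\pi\hbar)^{-n}\leq(2\pi\eta)^{-n}$, i.e. $\eta\leq\hbar$ (with strict inequality as soon as the state is genuinely mixed, i.e. $\sum_{j}\lambda_{j}^{2}<1$).

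The only point requiring a little care --- a mild obstacle rather than a real one --- is the term-by-term identity $\int(\sum_{j}\lambda_{j}W_{\eta}\psi_{j})^{2}\,d^{2n}z=\sum_{j}\lambda_{j}^{2}\int(W_{\eta}\psi_{j})^{2}\,d^{2n}z$: this uses the $L^{2}$-orthogonality of the $W_{\eta}\psi_{j}$ (furnished by the $\eta$-version of (\ref{Moyal})) together with $\sum_{j}\lambda_{j}^{2}<\infty$ to guarantee that the spectral series converges in $L^{2}(\mathbb{R}^{2n})$, and it tacitly relies on $W\psi\in L^{2}(\mathbb{R}^{2n})$ (Theorem \ref{ThmWL2}). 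Beyond fixing the interpretation of ``becomes a mixed state'', the whole argument is just bookkeeping of the constant $(2\pi\hbar)^{-n}$ against $(2\pi\eta)^{-n}$.
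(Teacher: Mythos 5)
Your proof is correct and rests on the same engine as the paper's: computing $\int W\psi(z)^{2}\,d^{2n}z$ once via the $\hbar$-Moyal identity and once via the $\eta$-Moyal identity, and comparing the constants. Part (i) is essentially identical to the paper's (the paper derives $||\phi||=||\psi||$ from the marginal properties instead of assuming normalization of the state $\phi$, but the two-line Moyal computation is the same). In part (ii) you take a slightly different route: you pass through the spectral theorem to get an \emph{orthonormal} family $(\psi_{j})$, so that the $W_{\eta}\psi_{j}$ are exactly orthogonal in $L^{2}(\mathbb{R}^{2n})$ and the norm computation becomes the clean Pythagorean identity $\int W\psi^{2}=(2\pi\eta)^{-n}\sum_{j}\lambda_{j}^{2}\leq(2\pi\eta)^{-n}$. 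The paper instead works with an arbitrary convex decomposition $W\psi=\sum_{j}\alpha_{j}W_{\eta}\phi_{j}$ (the $\phi_{j}$ not assumed orthogonal) and controls the cross terms $(2\pi\eta)^{-n}\alpha_{j}\alpha_{k}|\langle\phi_{j}|\phi_{k}\rangle|^{2}$ by Cauchy--Schwarz, together with the marginal identity $||\psi||^{2}=\sum_{j}\alpha_{j}||\phi_{j}||^{2}$. The outcomes agree: your version buys the sharper observation that the inequality is strict as soon as $\sum_{j}\lambda_{j}^{2}<1$, i.e.\ as soon as the state is genuinely mixed, while the paper's version applies directly to any claimed mixture without first reducing it to the spectral one. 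Your closing remarks on $L^{2}$-convergence of the spectral series and the implicit use of Theorem \ref{ThmWL2} are exactly the right points to flag.
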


\begin{proof}
(i) We have $\psi,\phi\in L^{2}(\mathbb{R}^{n})$. Assume that $W_{\eta}%
\phi=W\psi$; then
\[
\int W_{\eta}\phi(x,p)d^{n}p=\int W\psi(x,p)d^{n}p
\]
hence, using the marginal properties (\ref{marginal1}), $|\phi(x)|^{2}%
=|\psi(x)|^{2}$ so that $\phi$ and $\psi$ have same norm: $||\phi||=||\psi||$.
On the other hand, using the Moyal identity (\ref{Moyal2eta}),\ the equality
$W\psi=W_{\eta}\phi$ implies that%
\begin{align*}
\int W\psi(z)^{2}d^{2n}z  &  =\left(  \tfrac{1}{2\pi\hbar}\right)  ^{n}%
||\psi||^{4}\\
\int W_{\eta}\phi(z)d^{2n}z  &  =\left(  \tfrac{1}{2\pi\eta}\right)
^{n}||\phi||^{4}%
\end{align*}
hence $\eta=\hbar$. (ii) Assume that there exists a sequence $(\phi_{j})$ of
(normalized) functions in $L^{2}(\mathbb{R}^{n})$ and a sequence of positive
constants $\alpha_{j}$ summing up to one such that $W\psi=\sum_{j}\alpha
_{j}W_{\eta}\phi_{j}$. Proceeding as above we get, using again the marginal
properties,
\begin{equation}
||\psi||^{2}=\sum_{j}\alpha_{j}||\phi_{j}||^{2}. \label{psisqr}%
\end{equation}
On the other hand, squaring $W\psi$ we get%
\[
(W\psi)^{2}=\sum_{j,k}\alpha_{j}\alpha_{k}W_{\eta}\phi_{j}W_{\eta}\phi_{k}%
\]
hence, integrating and using respectively the Moyal identity for $(W\psi)^{2}$
and $W_{\eta}\phi_{j}W_{\eta}\phi_{k}$, and the Cauchy--Schwarz inequality we
get
\begin{align*}
\left(  \tfrac{1}{2\pi\hbar}\right)  ^{n}||\psi||^{4}  &  =\left(  \tfrac
{1}{2\pi\eta}\right)  ^{n}\sum_{j,k}\alpha_{j}\alpha_{k}|\langle\phi_{j}%
|\phi_{k}\rangle|^{2}\\
&  \leq\left(  \tfrac{1}{2\pi\eta}\right)  ^{n}\sum_{j,k}\alpha_{j}\alpha
_{k}||\phi_{j}||^{2}||\phi_{k}||^{2}\\
&  =\left(  \tfrac{1}{2\pi\eta}\right)  ^{n}\left(
{\textstyle\sum\nolimits_{j}}
\alpha_{j}||\phi_{j}||^{2}\right)  ^{2}\\
&  =\left(  \tfrac{1}{2\pi\eta}\right)  ^{n}||\psi||^{4}%
\end{align*}
which implies, using (\ref{psisqr}), that $\left(  \tfrac{1}{2\pi\hbar
}\right)  ^{n}\leq\left(  \tfrac{1}{2\pi\eta}\right)  ^{n}$, that is $\eta
\leq\hbar$ as claimed.
\end{proof}

A \textit{caveat}: property (ii) in the theorem above does not say that a pure
state automatically becomes a mixed state if we decrease Planck's constant. It
merely says that if a pure state becomes mixed, it can only happen if Planck's
constant has decreased. We will see later that this is related to the
uncertainty principle.

\subsection{The Quantum Bochner Theorem}

\subsubsection{Bochner's theorem}

We begin by recalling Bochner's theorem about the Fourier transform of a
probability density. That theorem says that a (complex valued) function $f$ on
$\mathbb{R}^{m}$, continuous at the origin and such that $f(0)=1$ is the
characteristic function of a probability density on $\mathbb{R}^{m}$ if and
only if it is of \textit{positive type}, that is, if for all choices of points
$z_{1},...,z_{N}\in\mathbb{R}^{m}$ the $N\times N$ matrix
\begin{equation}
F_{(N)}=(f(z_{j}-z_{k}))_{1\leq j,k\leq N}%
\end{equation}
is positive semidefinite (that is, the eigenvalues of $F_{(N)}$ are all
$\geq0$).

Let us introduce two modifications of the symplectic Fourier transform
$F_{\sigma}$. First we allow the latter to depend on an arbitrary parameter
$\eta\neq0$ and set
\begin{equation}
F_{\sigma,\eta}a(z)=a_{\sigma,\eta}(z)=\left(  \tfrac{1}{2\pi\eta}\right)
^{n}\int e^{-\frac{i}{\eta}\sigma(z,z^{\prime})}a(z^{\prime})d^{2n}z^{\prime}.
\label{etasft}%
\end{equation}
It coincides with $F_{\sigma}$ when $\eta=\hbar$. We next define the reduced
symplectic Fourier transform $F_{\Diamond}$ is by%
\begin{equation}
a_{\Diamond}(z)=F_{\Diamond}a(z)=\int e^{-i\sigma(z,z^{\prime})}a(z^{\prime
})d^{2n}z^{\prime}. \label{adiam}%
\end{equation}
Obviously $F_{\Diamond}a$ and $F_{\sigma,\eta}a=a_{\sigma,\eta}$ are related
by the simple formula
\begin{equation}
a_{\Diamond}(z)=(2\pi\eta)^{n}a_{\sigma,\eta}(\eta z). \label{diasig12}%
\end{equation}
With this notation Bochner's theorem on Fourier transforms of probability
measures can be restated in the following way: a real function $\rho$ on
$\mathbb{R}^{2n}$ is a probability density if and only if its reduced
symplectic Fourier transform $\rho_{\Diamond}$ is continuous, $\rho_{\Diamond
}(0)=1$, and for all choices of $z_{1},...,z_{N}\in\mathbb{R}^{2n}$ the
$N\times N$ matrix $\Lambda$ whose entries are the complex numbers
$\rho_{\Diamond}(z_{j}-z_{k})$ is positive semidefinite:
\begin{equation}
\Lambda=(\rho_{\Diamond}(z_{j}-z_{k}))_{1\leq j,k\leq N}\geq0 \label{bochner}%
\end{equation}
(A matrix is said to be positive semidefinite if all its eigenvalues are
$\geq0$).

When condition (\ref{bochner}) is satisfied one says that the reduced
symplectic Fourier transform $\rho_{\Diamond}$ is of \textit{positive type}.

\subsubsection{The notion of $\eta$-positivity}

The notion of $\eta$-positivity, due to Kastler \cite{Kastler}, generalizes
Bochner's notion: let $a$ $\in\mathcal{S}^{\prime}(\mathbb{R}^{2n})$ and
$\eta$ a real number; we say that $a_{\Diamond}$\textit{\ }is of $\eta
$\textit{-positive type }if for every integer $N$ the\textit{ }$N\times N$
matrix $\Lambda_{(N)}$ with entries
\[
\Lambda_{jk}=e^{-\frac{i\eta}{2}\sigma(z_{j},z_{k})}a_{\Diamond}(z_{j}-z_{k})
\]
is positive semidefinite (which we write \textquotedblleft$\geq0$%
\textquotedblright\ for short) for all choices of $(z_{1},z_{2},...,z_{N}%
)\in(\mathbb{R}^{2n})^{N}$:%
\begin{equation}
\Lambda_{(N)}=(\Lambda_{jk})_{1\leq j,k\leq N}\geq0. \label{fzjfzk}%
\end{equation}

The condition (\ref{fzjfzk}) is equivalent to the polynomial inequalities
\begin{equation}
\sum_{1\leq j,k\leq N}\lambda_{j}\lambda_{k}^{\ast}e^{-\frac{i\eta}{2}%
\sigma(z_{j},z_{k})}a_{\Diamond}(z_{j}-z_{k})\geq0 \label{polynomial1}%
\end{equation}
for all $N\in\mathbb{N}$, $\lambda_{j},\lambda_{k}\in\mathbb{C}$, and
$z_{j},z_{k}\in\mathbb{R}^{2n}$. If $a$ is of $\eta$-positive type then it is
also of of $(-\eta)$-positive type as is immediately seen by taking the
complex conjugate of the left-hand side of (\ref{polynomial1}).

When $\eta\neq0$ we can rewrite conditions (\ref{fzjfzk})--(\ref{polynomial1})
using the symplectic $\eta$-Fourier transform: replacing $(z_{j},z_{k})$ with
$\eta^{-1}(z_{k},z_{j})$ and noting that $\sigma(z_{k},z_{j})=-\sigma
(z_{j},z_{k})$ the conditions (\ref{fzjfzk}) are equivalent to%
\[
\Lambda_{(N)}^{\prime}=(\Lambda_{jk}^{\prime}(z_{j},z_{k}))_{1\leq j,k\leq
N}\geq0
\]
where
\begin{equation}
\Lambda_{jk}^{\prime}(z_{j},z_{k})=e^{\frac{i}{2\eta}\sigma(z_{j},z_{k}%
)}a_{\sigma,\eta}(z_{j}-z_{k}). \label{fzjfzkprime}%
\end{equation}
The polynomial conditions (\ref{polynomial1}) become in this case%
\begin{equation}
\sum_{1\leq j,k\leq N}\lambda_{j}\lambda_{k}^{\ast}e^{\frac{i}{2\eta}%
\sigma(z_{j},z_{k})}a_{\sigma,\eta}(z_{j}-z_{k})\geq0 \label{polynomial2}%
\end{equation}

\subsubsection{KLM condition and the quantum Bochner theorem}

We are now going to prove an essential result (the \textquotedblleft quantum
Bochner theorem\textquotedblright) originally due to Kastler \cite{Kastler},
and Loupias and Miracle-Sole \cite{LouMiracle1,LouMiracle2}; also see
Parthasarathy \cite{partha1,partha2} and Parthasarathy and Schmidt
\cite{parthaschmidt} for different points of view. The proof we will give is
simpler than that in \cite{Kastler,LouMiracle1,LouMiracle2}, which uses the
theory of $C^{\ast}$-algebras; our proof is partially based on the discussions
in \cite{Narcow3,Narconnell,Werner}. For this we will need a technical result
from linear algebra (\textquotedblleft Schur's Lemma\textquotedblright), which
says that the entrywise product of two positive semidefinite matrices is also
positive semidefinite:

\begin{lemma}
[Schur]\label{lemmaschur}Let $A=(A_{jk})_{1\leq j,k\leq N}$ and $B=(B_{jk}%
)_{1\leq j,k\leq N}$ be two symmetric matrices with the same finite dimension
$N$. Defining the Hadamard product of these matrices by%
\[
A\circ B=(A_{jk}B_{jk})_{1\leq j,k\leq N}%
\]
then if $A$ and $B$ both are positive semidefinite, then so is $A\circ B$.
\end{lemma}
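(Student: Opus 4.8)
The plan is to reduce the statement to the elementary fact that every positive semidefinite matrix is a Gram matrix. Since $B\geq0$ it is Hermitian with nonnegative spectrum, so it admits a factorization $B=V^{\ast}V$ for some $N\times N$ complex matrix $V$ (for instance $V=B^{1/2}$, or a Cholesky factor). Writing $v_j\in\mathbb{C}^N$ for the $j$-th column of $V$, this reads
\[
B_{jk}=\sum_{\ell=1}^{N}\overline{(v_j)_\ell}\,(v_k)_\ell=\langle v_j\,|\,v_k\rangle .
\]
Only the existence of vectors $v_1,\dots,v_N$ with this property will be used. Observe also that $A\circ B$ is Hermitian (since $A$ and $B$ are), so it suffices to prove that its associated quadratic form is nonnegative.

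Next I would fix an arbitrary $\lambda=(\lambda_1,\dots,\lambda_N)\in\mathbb{C}^N$, insert the factorization of $B$ into the form $\sum_{j,k}\overline{\lambda_j}\,\lambda_k\,A_{jk}B_{jk}$, and interchange the finite sums to obtain
\[
\sum_{1\leq j,k\leq N}\overline{\lambda_j}\,\lambda_k\,A_{jk}B_{jk}
=\sum_{\ell=1}^{N}\ \sum_{1\leq j,k\leq N}\overline{\nu^{(\ell)}_j}\,\nu^{(\ell)}_k\,A_{jk},
\qquad\nu^{(\ell)}_j:=\lambda_j\,(v_j)_\ell .
\]
For each fixed $\ell$ the inner sum is the quadratic form of $A$ evaluated at the vector $\nu^{(\ell)}\in\mathbb{C}^N$, hence is $\geq0$ because $A\geq0$; summing over $\ell$ yields $\sum_{j,k}\overline{\lambda_j}\lambda_k(A\circ B)_{jk}\geq0$. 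Since $\lambda$ was arbitrary, this is exactly $A\circ B\geq0$.

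I do not anticipate a genuine obstacle: once $B$ is written as a Gram matrix the argument is a two-line manipulation of finite sums. The only point requiring attention is the placement of complex conjugates, so that the factor $(v_j)_\ell$ can be absorbed into $\lambda_j$ to produce a bona fide vector $\nu^{(\ell)}$ on which the (positive semidefinite) form of $A$ acts. It is worth noting explicitly that the proof uses nothing about the entrywise structure of $A$ or $B$ beyond positive semidefiniteness — which is precisely what will allow the lemma to be applied, in the proof of the quantum Bochner theorem, to the two matrices $\bigl(e^{-\frac{i\eta}{2}\sigma(z_j,z_k)}\bigr)_{1\leq j,k\leq N}$ and $\bigl(a_{\Diamond}(z_j-z_k)\bigr)_{1\leq j,k\leq N}$.
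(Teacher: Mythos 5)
Your argument is correct and complete. Note, however, that the paper itself offers no proof of this lemma at all: it simply refers the reader to Bapat's book, so there is nothing to compare against line by line. What you have written is the standard Gram-matrix proof of the Schur product theorem: factor $B=V^{\ast}V$, read off $B_{jk}=\langle v_j|v_k\rangle$, and observe that the quadratic form of $A\circ B$ at $\lambda$ decomposes as a finite sum over $\ell$ of quadratic forms of $A$ at the vectors $\nu^{(\ell)}$ with $\nu^{(\ell)}_j=\lambda_j(v_j)_\ell$; each term is nonnegative since $A\geq0$. The conjugate placement is handled correctly, and the interchange of sums is trivially justified since everything is finite. One further point in your favour: the lemma as stated speaks of \emph{symmetric} matrices, but the matrices to which it is applied in the proof of the quantum Bochner theorem (with entries $e^{\pm\frac{i\eta}{2}\sigma(z_j,z_k)}$ times a function of $z_j-z_k$) are complex Hermitian, not real symmetric; your proof is carried out in exactly the Hermitian generality that the application requires, so it supplies the version of the lemma the paper actually uses.
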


For a proof of this result see for instance Bapat \cite{Bapat}.

\begin{theorem}
[Quantum Bochner]\label{Prop2}Let $\widehat{\rho}$ be a self-adjoint trace
class operator on $L^{2}(\mathbb{R}^{n})$:%
\[
\widehat{\rho}\psi=\sum_{j}\alpha_{j}\langle\psi_{j}|\psi\rangle\psi_{j}.
\]
Let $\rho=\sum_{j}\alpha_{j}W\psi_{j}$ be the Wigner function of
$\widehat{\rho}$. We have $\widehat{\rho}\geq0$ if and only if the two
following conditions hold: (i) The reduced symplectic Fourier transform
$\rho_{\Diamond}$ is continuous and $\rho_{\Diamond}(0)=1$; (ii)
$\rho_{\Diamond}$ is of $\eta$\textit{-positive type}.
\end{theorem}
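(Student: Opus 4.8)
The plan is to reduce everything to a clean algebraic identity: the density operator $\widehat{\rho}$ can be written via its twisted symbol as $\widehat{\rho}=\left(\tfrac{1}{2\pi\hbar}\right)^{n}\int \rho_{\sigma}(z_{0})\widehat{D}(z_{0})\,d^{2n}z_{0}$ (formula (\ref{atzo}) with $a=(2\pi\hbar)^{n}\rho$, so $a_{\sigma}=(2\pi\hbar)^{n}\rho_{\sigma}$). The key point is that positivity of $\widehat{\rho}$ is equivalent to $\langle\phi|\widehat{\rho}|\phi\rangle\geq 0$ for all $\phi$, and I would test this against finite linear combinations of \emph{displaced} states $\phi=\sum_{j}\lambda_{j}\widehat{D}(z_{j})\phi_{0}$ for a fixed reference $\phi_{0}$ (e.g.\ a Gaussian). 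Using the composition law (\ref{commhwbis}) for displacement operators, $\widehat{D}(z_{j})^{\dag}\widehat{D}(z_{0})\widehat{D}(z_{k})$ collapses to a phase times a single $\widehat{D}$, and the cross terms $\langle\widehat{D}(z_{j})\phi_{0}|\widehat{\rho}\widehat{D}(z_{k})\phi_{0}\rangle$ become, after unwinding, $\left(\tfrac{1}{2\pi\hbar}\right)^{n}\int\rho_{\sigma}(z_{0})\,e^{\text{(phase in }\sigma)}\,\langle\phi_{0}|\widehat{D}(z_{0}-z_{j}+z_{k})\phi_{0}\rangle\,d^{2n}z_{0}$. Comparing the resulting phase $e^{-\frac{i}{2\hbar}\sigma(z_{j},z_{k})}$ (up to the sign convention) with the matrix $\Lambda_{jk}=e^{-\frac{i\eta}{2}\sigma(z_{j},z_{k})}\rho_{\Diamond}(z_{j}-z_{k})$ appearing in the $\eta$-positivity condition, and using $\rho_{\Diamond}(z)=(2\pi\hbar)^{n}\rho_{\sigma,\hbar}(\hbar z)$ (formula (\ref{diasig12}) with $\eta=\hbar$), one sees that the quadratic form $\sum_{j,k}\lambda_{j}\lambda_{k}^{\ast}\langle\widehat{D}(z_{j})\phi_{0}|\widehat{\rho}\widehat{D}(z_{k})\phi_{0}\rangle$ is (up to a convolution with the fixed positive-type function $z\mapsto\langle\phi_{0}|\widehat{D}(z)\phi_{0}\rangle=\operatorname{Amb}\phi_{0}$, here Schur's lemma enters) essentially $\Lambda_{(N)}$ evaluated against $\lambda$.

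Concretely, for the \emph{necessity} direction: assume $\widehat{\rho}\geq 0$. Condition (i) is the easy part — $\rho_{\Diamond}$ is continuous because $\rho_{\sigma}=F_{\sigma}\rho$ is (it is the ambiguity-type transform of an $L^{2}$ function, hence continuous and bounded), and $\rho_{\Diamond}(0)=(2\pi\hbar)^{n}\rho_{\sigma}(0)=\int\rho(z)\,d^{2n}z=\operatorname{Tr}(\widehat{\rho})=1$ by the Corollary to Theorem \ref{ThmHSbis} combined with the normalization. For (ii): fix $\phi_{0}$ with $\|\phi_{0}\|=1$, form $\phi=\sum_{k}\lambda_{k}\widehat{D}(z_{k})\phi_{0}$, and compute $0\leq\langle\phi|\widehat{\rho}|\phi\rangle=\sum_{j,k}\lambda_{j}^{\ast}\lambda_{k}\langle\widehat{D}(z_{j})\phi_{0}|\widehat{\rho}\widehat{D}(z_{k})\phi_{0}\rangle$. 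Expanding $\widehat{\rho}$ by (\ref{atzo}) and simplifying the triple product of displacement operators via (\ref{commhw})--(\ref{commhwbis}) yields that this equals $\sum_{j,k}\lambda_{j}^{\ast}\lambda_{k}\,\Lambda_{jk}\,G_{jk}$ where $G_{jk}=(\operatorname{Amb}\phi_{0})(\text{shift of }z_{j}-z_{k})$ times a phase; choosing $\phi_{0}$ to be the standard Gaussian makes $(G_{jk})$ itself a positive semidefinite matrix (it is a Gram-type matrix of coherent states). Since $\widehat{\rho}\geq 0$ forces $(\Lambda_{jk}\,G_{jk})\geq 0$ for all such choices, and one can take $\phi_{0}$ "spreading out" (a dilation limit) so that $G_{jk}\to 1$, a limiting argument — or, more cleanly, an application of Schur's Lemma in reverse combined with the fact that a positive-type Gaussian factor can be divided out — gives $\Lambda_{(N)}\geq 0$.

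For the \emph{sufficiency} direction: assume (i) and (ii). I would show $\langle\phi|\widehat{\rho}|\phi\rangle\geq 0$ for every $\phi\in L^{2}(\mathbb{R}^{n})$ by writing $\phi$ as an integral (rather than a finite sum) of displaced Gaussians — this is just the resolution of identity by coherent states, $\phi=c_{n}\int \langle\widehat{D}(z)\phi_{0}|\phi\rangle\,\widehat{D}(z)\phi_{0}\,d^{2n}z$ — and then $\langle\phi|\widehat{\rho}|\phi\rangle$ becomes a continuous analogue $\iint \overline{F(z_{j})}F(z_{k})\,\Lambda(z_{j},z_{k})\,G(z_{j},z_{k})\,dz_{j}dz_{k}$ of the positive-semidefinite quadratic form, where $F(z)=\langle\widehat{D}(z)\phi_{0}|\phi\rangle$. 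Condition (ii) says the kernel $\Lambda$ is of positive type in the continuous sense (finite-matrix positivity for all point configurations is equivalent, by a standard Riemann-sum/density argument, to positivity of the integral operator against continuous compactly supported densities), and the Gaussian kernel $G$ is manifestly of positive type, so Schur's Lemma (its continuous version, the Schur product theorem for kernels) gives that $\Lambda\cdot G$ is of positive type, whence the integral is $\geq 0$. Condition (i), $\rho_{\Diamond}(0)=1$, then upgrades "$\widehat{\rho}\geq 0$ and trace class" to "$\widehat{\rho}$ is a density matrix" by pinning the trace to $1$.

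The main obstacle I expect is the \emph{bookkeeping of the phases and the dilation/limiting argument} in the necessity direction — getting the exact phase $e^{-\frac{i\hbar}{2}\sigma(z_{j},z_{k})}$ to match the definition of $\eta$-positivity with $\eta=\hbar$ (rather than its conjugate or a rescaled version) requires carefully tracking the conventions in (\ref{commhwbis}), (\ref{atzo}), and (\ref{diasig12}), and then legitimately removing the Gaussian Gram factor $G_{jk}$ to isolate $\Lambda_{(N)}\geq 0$. The cleanest route is probably to not take a limit at all but to observe that $G_{jk}=\langle\phi_{0}|\widehat{D}(z_{k}-z_{j})\phi_{0}\rangle$ with $\phi_{0}$ Gaussian is, for the \emph{coherent-state} $\phi_{0}$, never zero and is itself a positive-definite kernel whose pointwise reciprocal is \emph{not} positive-definite — so instead one varies $\phi_{0}$ over all Gaussians (equivalently, conjugates $\widehat{\rho}$ by metaplectic operators and uses Theorem \ref{Thmsyco}/its Corollary, which shows $\eta$-positivity of $\rho_{\Diamond}$ is $\operatorname{Sp}(n)$-invariant) to conclude. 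Once the phase matching is pinned down, both directions are a two-line application of Schur's Lemma plus the coherent-state resolution of identity, so the real content is entirely in that phase computation and the density/limiting step.
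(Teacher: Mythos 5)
Your proposal has a genuine gap in the necessity direction, and it is precisely the one you flag as the ``main obstacle'': the Gram factor $G_{jk}=\langle\phi_{0}|\widehat{D}(z_{k}-z_{j})\phi_{0}\rangle$ cannot be legitimately removed. Testing $\langle\phi|\widehat{\rho}|\phi\rangle\geq0$ on superpositions of displaced copies of a \emph{fixed} reference state only yields positivity of the Hadamard product $\Lambda_{(N)}\circ G_{(N)}$; Schur's lemma does not run backwards, the pointwise reciprocal of a Gaussian Gram kernel is not of positive type, and the dilation limit $G_{jk}\to1$ is forbidden by the uncertainty principle: making $\operatorname*{Amb}\phi_{0}$ close to its maximal value $(2\pi\hbar)^{-n}$ at a nonzero displacement in the $x$-direction forces it to decay at nonzero displacements in the $p$-direction (for a wide Gaussian of width $\lambda$ one has $|\operatorname*{Amb}\phi_{0}(x,p)|\propto e^{-(x^{2}/\lambda^{2}+\lambda^{2}p^{2})/4\hbar}$), so one cannot drive all the $G_{jk}$ to $1$ simultaneously. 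Conjugating by metaplectic operators to vary $\phi_{0}$ over all Gaussians does not help, since every choice produces its own nontrivial Gram factor. The paper closes this gap by a different choice of test vectors: since $\widehat{\rho}\geq0$ is self-adjoint and trace class, the spectral decomposition gives $\rho=\sum_{j}\alpha_{j}W\psi_{j}$ with $\alpha_{j}\geq0$, so it suffices to show that $W\psi$ is of $\eta$-positive type for a single $\psi$ --- and for that one displaces $\psi$ \emph{itself}, obtaining the exact identity
\[
\sum_{1\leq j,k\leq N}\lambda_{j}\lambda_{k}^{\ast}e^{-\frac{i}{2\eta}\sigma(z_{j},z_{k})}\operatorname*{Amb}\nolimits_{\eta}\psi(z_{j}-z_{k})=\left(\tfrac{1}{2\pi\eta}\right)^{n}\Bigl\|\sum\nolimits_{1\leq j\leq N}\lambda_{j}\widehat{D}_{\eta}(z_{j})\psi\Bigr\|^{2}\geq0
\]
with no extraneous factor to divide out.

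For sufficiency your idea --- multiply the $\eta$-positive kernel by an ambiguity-type kernel and invoke Schur --- is the right one, but the coherent-state resolution of identity and a continuous Schur product theorem are detours needing their own justification. The paper's route is entirely finite-dimensional: for an arbitrary $\psi\in L^{2}(\mathbb{R}^{n})$ the matrix with entries $e^{\frac{i}{2\eta}\sigma(z_{j},z_{k})}\operatorname*{Amb}\nolimits_{\eta}\psi(z_{j}-z_{k})$ is positive semidefinite (this is exactly the necessity step applied to the pure state $|\psi\rangle\langle\psi|$, not a Gaussian); hypothesis (ii) makes $e^{-\frac{i}{2\eta}\sigma(z_{j},z_{k})}\rho_{\sigma,\eta}(z_{j}-z_{k})$ positive semidefinite; the phases cancel in the Hadamard product, so Schur's Lemma \ref{lemmaschur} and the classical Bochner theorem show that the function $b$ with $b_{\sigma,\eta}(z)=\operatorname*{Amb}\nolimits_{\eta}\psi(z)\rho_{\sigma,\eta}(-z)$ is a probability density; evaluating $b(0)\geq0$ via the Plancherel identity then gives $\int\rho(z)W_{\eta}\psi(z)d^{2n}z\geq0$, i.e.\ $\langle\psi|\widehat{\rho}|\psi\rangle\geq0$ by Theorem \ref{Thm2}. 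Your worry about phase conventions is legitimate but secondary; the structural defect is the choice of test vectors in the necessity direction, and until that is repaired the sufficiency step also loses its key input (positive semidefiniteness of the ambiguity matrix of an \emph{arbitrary} $\psi$).
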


\begin{proof}
Let us first show that the conditions (i)--(ii) are necessary. Assume that
$\widehat{\rho}\geq0$; then
\begin{equation}
\rho=\sum_{j}\alpha_{j}W_{\eta}\psi_{j}%
\end{equation}
\ for a family of normalized functions $\psi_{j}\in L^{2}(\mathbb{R}^{n})$,
the coefficients $\alpha_{j}$ being $\geq0$ and summing up to one. It is thus
sufficient to show that the Wigner transform $W_{\eta}\psi$ of an arbitrary
$\psi\in L^{2}(\mathbb{R}^{n})$ is of $\eta$-positive type. This amounts to
showing that for all $(z_{1},...,z_{N})\in(\mathbb{R}^{2n})^{N}$ and all
$(\lambda_{1},...,\lambda_{N})\in\mathbb{C}^{N}$ we have
\begin{equation}
I_{N}(\psi)=\sum_{1\leq j,k\leq N}\lambda_{j}\lambda_{k}^{\ast}e^{-\frac
{i}{2\eta}\sigma(z_{j},z_{k})}F_{\sigma,\eta}W_{\eta}\psi(z_{j}-z_{k})\geq0
\label{ineq121}%
\end{equation}
for every complex vector $(\lambda_{1},...,\lambda_{N})\in\mathbb{C}^{N}$ and
every sequence $(z_{1},...,z_{N})\in(\mathbb{R}^{2n})^{N}$ (see condition
(\ref{polynomial2})). Since the $\eta$-Wigner function $W_{\eta}\psi$ and the
$\eta$-ambiguity $\operatorname*{Amb}\nolimits_{\eta}$ function are obtained
from each other by the symplectic $\eta$-Fourier transform
\[
F_{\sigma,\eta}a(z)=\left(  \tfrac{1}{2\pi\eta}\right)  ^{n}\int e^{-\frac
{i}{\eta}\sigma(z,z^{\prime})}a(z^{\prime})d^{2n}z^{\prime}%
\]
we have%
\[
I_{N}(\psi)=\sum_{1\leq j,k\leq N}\lambda_{j}\lambda_{k}^{\ast}e^{-\frac
{i}{2\eta}\sigma(z_{j},z_{k})}\operatorname*{Amb}\nolimits_{\eta}\psi
(z_{j}-z_{k}).
\]
Let us prove that
\begin{equation}
I_{N}(\psi)=\left(  \tfrac{1}{2\pi\eta}\right)  ^{n}||%
{\textstyle\sum\nolimits_{1\leq j\leq N}}
\lambda_{j}\widehat{D}_{\eta}(z_{j})\psi||^{2}; \label{equin12}%
\end{equation}
the inequality (\ref{ineq121}) will follow. Taking into account the fact that
$\widehat{D}_{\eta}(-z_{k})^{\dag}=\widehat{D}_{\eta}(z_{k})$ and using the
relation (\ref{commhwbis}) which becomes here
\begin{equation}
\widehat{D}_{\eta}(z_{0})\widehat{D}_{\eta}(z_{1})=e^{\tfrac{i}{2\eta}%
\sigma(z_{0},z_{1})}\widehat{D}_{\eta}(z_{0}+z_{1}) \label{tzotzo12}%
\end{equation}
we have, expanding the square in the right-hand side of (\ref{equin12}),
\begin{align*}
||%
{\textstyle\sum\nolimits_{1\leq j\leq N}}
\lambda_{j}\widehat{D}_{\eta}(z_{j})\psi||^{2}  &  =\sum_{1\leq j,k\leq
N}\lambda_{j}\lambda_{k}^{\ast}\langle\widehat{D}_{\eta}(z_{k})\psi
|\widehat{D}_{\eta}(z_{j})\psi\rangle\\
&  =\sum_{1\leq j,k\leq N}\lambda_{j}\lambda_{k}^{\ast}\langle\widehat{D}%
_{\eta}(-z_{j})\widehat{D}_{\eta}(z_{k})\psi|\psi\rangle\\
&  =\sum_{1\leq j,k\leq N}\lambda_{j}\lambda_{k}^{\ast}e^{-\tfrac{i}{2\eta
}\sigma(z_{j},z_{k})}\langle\widehat{D}_{\eta}(z_{k}-z_{j})\psi|\psi\rangle;
\end{align*}
in view of formula (\ref{ambhw}), which becomes here%
\[
\operatorname*{Amb}\nolimits_{\eta}\psi(z)=\left(  \tfrac{1}{2\pi\eta}\right)
^{n}\langle\widehat{D}_{\eta}(z)\psi|\psi\rangle
\]
we thus have
\[
||%
{\textstyle\sum\nolimits_{1\leq j\leq N}}
\lambda_{j}\widehat{D}_{\eta}(z_{j})\psi||^{2}=\left(  2\pi\eta\right)
^{n}\sum_{1\leq j,k\leq N}\lambda_{j}\lambda_{k}^{\ast}e^{-\tfrac{i}{2\eta
}\sigma(z_{j},z_{k})}\operatorname*{Amb}\nolimits_{\eta}\psi(z_{j}-z_{k})
\]
proving the equality (\ref{equin12}). Let us now show that, conversely, the
conditions (i) and (ii) are sufficient, \textit{i.e.} that they imply that
$(\widehat{\rho}\psi|\psi)_{L^{2}}\geq0$ for all $\psi\in L^{2}(\mathbb{R}%
^{n})$; equivalently (see formula (\ref{average2}) in Theorem \ref{Thm2})
\begin{equation}
\int\rho(z)W_{\eta}\psi(z)d^{2n}z\geq0 \label{integraltoprove}%
\end{equation}
for $\psi\in L^{2}(\mathbb{R}^{n})$. Let us set, as above,%
\begin{equation}
\Lambda_{jk}^{\prime}=e^{\frac{i}{2\eta}\sigma(z_{j},z_{k})}a_{\sigma,\eta
}(z_{j}-z_{k})
\end{equation}
where $z_{j}$ and $z_{k}$ are arbitrary elements of $\mathbb{R}^{2n}$. To say
that $a_{\sigma,\eta}$ is of $\eta$\textit{-}positive type means that the
matrix $\Lambda^{\prime}=(\Lambda_{jk}^{\prime})_{1\leq j,k\leq N}$ is
positive semidefinite; choosing $z_{k}=0$ and setting $z_{j}=z$ this means
that every matrix $(a_{\sigma,\eta}(z))_{1\leq j,k\leq N}$ is positive
semidefinite. Setting
\begin{align*}
\Gamma_{jk}  &  =e^{\frac{i}{2\eta}\sigma(z_{j},z_{k})}F_{\sigma,\eta}W_{\eta
}\psi(z_{j}-z_{k})\\
&  =e^{\frac{i}{2\eta}\sigma(z_{j},z_{k})}\operatorname*{Amb}\nolimits_{\eta
}\psi(z_{j}-z_{k})
\end{align*}
the matrix $\Gamma_{(N)}=(\Gamma_{jk})_{1\leq j,k\leq N}$ is positive
semidefinite. Let us now write
\[
M_{jk}=\operatorname*{Amb}\nolimits_{\eta}\psi(z_{j}-z_{k})\rho_{\sigma,\eta
}(z_{j}-z_{k});
\]
we claim that the matrix $M_{(N)}=(M_{jk})_{1\leq j,k\leq N}$ is positive
semidefinite. In fact, $M$ is the Hadamard product of the positive
semidefinite matrices $M_{(N)}^{\prime}=(M_{jk}^{\prime})_{1\leq j,k\leq N}$
and $M_{(N)}^{\prime\prime}=(M_{jk}^{\prime\prime})_{1\leq j,k\leq N}$ where%
\begin{align*}
M_{jk}^{\prime}  &  =e^{\frac{i}{2\eta}\sigma(z_{j},z_{k})}\operatorname*{Amb}%
\nolimits_{\eta}\psi(z_{j}-z_{k})\\
M_{jk}^{\prime\prime}  &  =e^{-\frac{i}{2\eta}\sigma(z_{j},z_{k})}\rho
_{\sigma,\eta}(z_{j}-z_{k})\text{\ }%
\end{align*}
and Schur's Lemma \ref{lemmaschur} implies that $M_{(N)}$ is also positive
semidefinite. It follows from Bochner's theorem that the function $b$ defined
by
\[
b_{\sigma,\eta}(z)=\operatorname*{Amb}\nolimits_{\eta}\psi(z)\rho_{\sigma
,\eta}(-z)=F_{\sigma,\eta}W\psi(z)\rho_{\sigma,\eta}(-z)
\]
is a probability density; in particular we must have $b(0)\geq0$. Integrating
the equality above with respect to $z$ we get, using the Plancherel formula
(\ref{Plancherelsig})
\begin{equation}
\int F_{\sigma,\eta}a(z)F_{\sigma,\eta}b(-z)d^{2n}z=\int a(z)b(z)d^{2n}z
\label{Plancherel}%
\end{equation}
for the symplectic $\eta$-Fourier transform
\begin{align*}
(2\pi\eta)^{n}b(0)  &  =\int\operatorname*{Amb}\nolimits_{\eta}\psi
(z)\rho_{\sigma,\eta}(-z)d^{2n}z\\
&  =\int W_{\eta}\psi(z)\rho(z)d^{2n}z
\end{align*}
hence the inequality (\ref{integraltoprove}) since $b(0)\geq0$.
\end{proof}

\subsection{Application to quantum states}

\subsubsection{The covariance matrix}

Let $\rho(z)$ be a real function on phase space $\mathbb{R}^{2n}$. We assume
that this function is integrable and that
\begin{equation}
\int\rho(z)d^{2n}z=1. \label{cond15}%
\end{equation}
and the marginal identities (\ref{marginal1}) hold:
\[
\int\rho(x,p)d^{n}p=|\psi(x)|^{2}\text{ \ , \ }\int\rho(x,p)d^{n}%
x=|F\psi(p)|^{2}.
\]
We will in addition assume that $\rho$ decreases sufficiently fast at
infinity:%
\begin{equation}
\int(1+|z|^{2})|\rho(z)|d^{2n}z<\infty. \label{cond25}%
\end{equation}
Setting $z_{\alpha}=x_{\alpha}$ if $1\leq\alpha\leq n$ and $z_{\alpha
}=p_{\alpha-n}$ if $n+1\leq\alpha\leq2n$, the covariances and variances of the
variables $z=(x,p)$ associated with $\rho$ are the numbers
\begin{equation}
\Delta(z_{\alpha},z_{\beta})=\int(z_{\alpha}-\left\langle z_{\alpha
}\right\rangle )(z_{\beta}-\left\langle z_{\beta}\right\rangle )\rho(z)d^{2n}z
\label{cov1}%
\end{equation}
and%
\begin{equation}
(\Delta z_{\alpha})^{2}=\Delta(z_{\alpha},z_{\alpha})=\int(z_{\alpha
}-\left\langle z_{\alpha}\right\rangle )^{2}\rho(z)d^{2n}z. \label{cov2}%
\end{equation}
In the formulas (\ref{cov1}) and (\ref{cov2}) above $\langle z_{\alpha}%
\rangle$ is the average with respect to $\rho$ of $z_{\alpha}$; more generally
one defines for an integer $k\geq0$ the moments
\begin{equation}
\langle z_{\alpha}^{k}\rangle=\int z_{\alpha}^{k}\rho(z)d^{2n}z. \label{ave}%
\end{equation}
Notice that our condition (\ref{cond25}) guarantees the existence of both
$\langle z_{\alpha}\rangle$ and $\langle z_{\alpha}^{2}\rangle$ as follows
from the trivial inequalities%
\begin{align}
\left\vert \int z_{\alpha}\rho(z)d^{2n}z\right\vert  &  \leq\int%
(1+|z|^{2})|\rho(z)|d^{2n}z<\infty\label{conv1}\\
\left\vert \int z_{\alpha}z_{\beta}\rho(z)d^{2n}z\right\vert  &  \leq
\int(1+|z|^{2})|\rho(z)|d^{2n}z<\infty. \label{conv2}%
\end{align}
It follows that the quantities (\ref{cov1}) and (\ref{cov2}) are well-defined
in view of condition (\ref{cond25}). Since the integral of $\rho$ is equal to
one, formulae (\ref{cov1}) and (\ref{cov2}) can be rewritten as%
\begin{gather}
\Delta(z_{\alpha},z_{\beta})=\langle z_{\alpha}z_{\beta}\rangle-\langle
z_{\alpha}\rangle\langle z_{\beta}\rangle\label{cov1bis}\\
(\Delta z_{\alpha})^{2}=\Delta(z_{\alpha},z_{\alpha})=\langle z_{\alpha}%
^{2}\rangle-\langle z_{\alpha}\rangle^{2}. \label{cov2bis}%
\end{gather}

We will call the symmetric $2n\times2n$ matrix%
\[
\Sigma=\left(  \Delta(z_{\alpha},z_{\beta})\right)  _{1\leq\alpha,\beta\leq2n}%
\]
the covariance matrix associated with $\rho$. For instance, when $n=1$
\[
\Sigma=%
\begin{pmatrix}
\Delta x^{2} & \Delta(x,p)\\
\Delta(p,x) & \Delta p^{2}%
\end{pmatrix}
\]
with%
\begin{gather*}
\Delta x^{2}=\langle x^{2}\rangle-\langle x\rangle^{2}\text{ , }\Delta
p^{2}=\langle p^{2}\rangle-\langle p\rangle^{2}\\
\Delta(x,p)=\langle xp\rangle-\langle x\rangle\langle p\rangle.
\end{gather*}

Here is an example: let $\rho(z)$ be given by the formula%
\begin{equation}
\rho(z)=(2\pi)^{-n}\sqrt{\det\Sigma^{-1}}e^{-\frac{1}{2}\Sigma^{-1}z^{2}}%
\end{equation}
where $\Sigma$ is a symmetric positive definite real $2n\times2n$ matrix. A
straightforward calculation involving Gaussian integrals shows that $\Sigma$
is precisely the associated covariance matrix. We will see later that
$\rho(z)$ is the Wigner function of a quantum state if $\Sigma$ satisfies a
certain condition related to the uncertainty principle. However, $\rho(z)$ can
always be viewed as a classical probability distribution.

\subsubsection{Two lemmas}

We are going to state two preliminary results which will be used for the proof
of Theorem \ref{ThmCov} below characterizing the covariance matrix of a
quantum state. We will not prove these results here, and refer to the original
sources. That we need so much preparatory material is indicative of the
difficulty of the topic we address.

We begin with the well-known Williamson's symplectic diagonalization theorem.
Let us first recall the following terminology \cite{Birk,Birkbis}: suppose
that $\Sigma$ is a symmetric positive definite real $2n\times2n$ matrix. Then
the matrix $\Sigma J$ has the same eigenvalues as the
antisymmetric\footnote{It is antisymmetric because its transpose is
$\Sigma^{1/2}J^{T}\Sigma^{1/2}=-\Sigma^{1/2}J\Sigma^{1/2}$ since $J^{T}=-J$.}
matrix $\Sigma^{1/2}J\Sigma^{1/2}$ and they are therefore of the type $\pm
i\lambda_{1},...,\pm i\lambda_{n}$ where $\lambda_{j}>0$. These numbers
$\lambda_{j}$ are called the \emph{symplectic eigenvalues} of the matrix
$\Sigma$ (or sometimes also the \emph{Williamson invariants} of $\Sigma$).
Williamson's diagonalization result generalizes to the multidimensional case
the elementary observation that every $2\times2$ real symmetric matrix
$\Sigma=%
\begin{pmatrix}
a & b\\
b & c
\end{pmatrix}
$ with $a>0$ and $ac-b^{2}>0$ can be written $\Sigma=S^{T}DS$ where
\[
S=%
\begin{pmatrix}
\sqrt{a/d} & b/\sqrt{ad}\\
0 & \sqrt{d/a}%
\end{pmatrix}
\text{ \ , \ }D=%
\begin{pmatrix}
d & 0\\
0 & d
\end{pmatrix}
\]
and $d=\sqrt{ac-b^{2}}$.

\begin{lemma}
[Williamson]\label{LemmaWill}Let $\Sigma$ be a symmetric positive definite
real $2n\times2n$ matrix. There exists $S\in\operatorname*{Sp}(n)$ such that
$\Sigma=S^{T}DS$ where $D$ is the diagonal matrix
\[
D=%
\begin{pmatrix}
\Lambda & 0\\
0 & \Lambda
\end{pmatrix}
\]
with $\Lambda=\operatorname*{diag}(\lambda_{1},...,\lambda_{n})$, the positive
numbers $\lambda_{j}$ being the symplectic eigenvalues of $M$.
\end{lemma}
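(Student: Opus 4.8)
The plan is to reduce the statement to the classical real normal form for antisymmetric matrices. Since $\Sigma$ is symmetric and positive definite, the spectral theorem supplies a unique symmetric positive definite square root, which I write $L:=\Sigma^{1/2}$. The first step is to study the matrix $A:=LJL=\Sigma^{1/2}J\Sigma^{1/2}$, which is precisely the antisymmetric matrix occurring in the discussion preceding the lemma: from $L^{T}=L$ and $J^{T}=-J$ one gets $A^{T}=-A$, and $A$ is invertible since $L$ and $J$ are. Because $A=L^{-1}(\Sigma J)L$, the matrix $A$ is similar to $\Sigma J$, so their spectra coincide and equal $\{\pm i\lambda_{1},\dots,\pm i\lambda_{n}\}$ with all $\lambda_{j}>0$; these $\lambda_{j}$ are by definition the symplectic eigenvalues of $\Sigma$.

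The central step is the orthogonal normal form: there is an orthogonal matrix $O$ with $O^{T}AO=J_{\Lambda}$, where $\Lambda=\operatorname*{diag}(\lambda_{1},\dots,\lambda_{n})$ and $J_{\Lambda}=\begin{pmatrix}0&\Lambda\\-\Lambda&0\end{pmatrix}$. I would obtain this by passing to the Hermitian matrix $iA$, diagonalising it by a unitary, and using that its (real) eigenvalues occur in pairs $\pm\lambda_{j}$ whose eigenvectors are complex conjugates of each other; taking real and imaginary parts of those eigenvectors produces a real orthonormal basis in which $A$ decomposes into $2\times2$ blocks $\begin{pmatrix}0&\lambda_{j}\\-\lambda_{j}&0\end{pmatrix}$, after choosing the orientation of each pair so that $\lambda_{j}>0$, and a permutation of this basis gathers the blocks into the grouped shape $J_{\Lambda}$. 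Invertibility of $A$ is what rules out a zero block, so $\Lambda$ is invertible and $\Lambda^{\pm1/2}$ are well defined.

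It then remains to finish by bookkeeping. With $D^{1/2}:=\begin{pmatrix}\Lambda^{1/2}&0\\0&\Lambda^{1/2}\end{pmatrix}$ and $D:=\begin{pmatrix}\Lambda&0\\0&\Lambda\end{pmatrix}$ one checks directly that $D^{1/2}JD^{1/2}=J_{\Lambda}$. Combining this with $O^{T}LJLO=J_{\Lambda}$ gives $D^{-1/2}O^{T}LJLOD^{-1/2}=J$, i.e. the matrix $T:=LOD^{-1/2}$ satisfies $T^{T}JT=J$, hence $T\in\operatorname*{Sp}(n)$. Solving $T=LOD^{-1/2}$ for $L$ gives $L=TD^{1/2}O^{T}$, and since $L^{T}=L$,
\[
\Sigma=LL^{T}=(TD^{1/2}O^{T})(OD^{1/2}T^{T})=TDT^{T}.
\]
Putting $S:=T^{T}$, which again lies in $\operatorname*{Sp}(n)$ because the symplectic group is stable under transposition (recalled just after the definition of $\operatorname*{Sp}(n)$), we obtain $\Sigma=S^{T}DS$ with $D$ of the desired diagonal shape and diagonal entries the symplectic eigenvalues of $\Sigma$.

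The hard part --- really the only ingredient that is not routine manipulation of square roots --- is the antisymmetric normal form $O^{T}AO=J_{\Lambda}$ with $O$ orthogonal and all $\lambda_{j}>0$; one must be careful that the nonzero eigenvalues of the real antisymmetric $A$ genuinely pair as $\pm i\lambda_{j}$ with conjugate eigenvectors, and that the invertibility of $A$ excludes a kernel so that $\Lambda$ is invertible. Everything downstream of that is forced.
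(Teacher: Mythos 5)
Your proof is correct and complete at the level of detail appropriate here. Note that the paper itself does not prove the lemma at all --- its ``proof'' is the single line ``See for instance \cite{Birk,Birkbis} and the references therein'' --- so you have supplied what the paper omits, and what you supply is in fact the standard argument given in those references: set $L=\Sigma^{1/2}$, observe that $A=LJL$ is real, antisymmetric and invertible and is similar to $\Sigma J$ (so its spectrum consists of the $\pm i\lambda_j$ defining the symplectic eigenvalues), bring $A$ to the normal form $J_\Lambda=\bigl(\begin{smallmatrix}0&\Lambda\\-\Lambda&0\end{smallmatrix}\bigr)$ by an orthogonal conjugation, and then absorb the factors $D^{\pm1/2}$ to turn the orthogonal conjugation into a symplectic congruence. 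All the bookkeeping checks out: $D^{1/2}JD^{1/2}=J_\Lambda$, hence $T=LOD^{-1/2}$ satisfies $T^TJT=D^{-1/2}O^TLJLO\,D^{-1/2}=D^{-1/2}J_\Lambda D^{-1/2}=J$, and $\Sigma=L^2=TDT^T=S^TDS$ with $S=T^T$, which is symplectic because $\operatorname{Sp}(n)$ is stable under transposition (as the paper recalls). You also correctly isolate the one genuinely nontrivial ingredient --- the orthogonal normal form for an invertible real antisymmetric matrix --- and your sketch of it (diagonalize the Hermitian matrix $iA$, pair the eigenvalues $\pm\lambda_j$ with conjugate eigenvectors, take real and imaginary parts to get an orthonormal real basis of $2\times2$ blocks, fix orientations so that $\lambda_j>0$, and permute to group the blocks) is the standard and correct one; the only implicit facts are that the real and imaginary parts of an eigenvector for a nonzero eigenvalue are orthogonal and of equal length, which follows from the orthogonality of $v$ and $\bar v$ as eigenvectors of $iA$ for distinct eigenvalues. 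One cosmetic remark: the lemma as stated in the paper says ``symplectic eigenvalues of $M$'' where it means $\Sigma$; your proof correctly works with $\Sigma$ throughout.
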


\begin{proof}
See for instance\ \cite{Birk,Birkbis} and the references therein.
\end{proof}

For this we will need, in addition to Williamson's symplectic diagonalization
result, the following technical result:

\begin{lemma}
[Narcowich]\label{LemmaNarcow}Let $f(z)$ be a twice differentiable function
defined on phase space. If $f$ is of $\eta$\textit{-positive type then }%
\begin{equation}
-f^{\prime\prime}(0)+\frac{i\eta}{2}J\geq0 \label{conarcow}%
\end{equation}
where $f^{\prime\prime}(0)$ is the matrix of second derivatives (the Hessian
matrix) of $f(z)$ at $z=0$.
\end{lemma}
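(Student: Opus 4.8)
The plan is to read the matrix inequality off from the positive semidefiniteness of certain $3\times3$ matrices $\Lambda_{(3)}$ supplied by the $\eta$-positivity of $f$, via a second-order Taylor expansion at the origin.

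First, some book-keeping. Testing the defining condition with $N=1$ gives $f(0)\ge0$; with $N=2$, positive semidefiniteness forces $\Lambda_{(2)}$ to be Hermitian, which (using $\sigma(z_{2},z_{1})=-\sigma(z_{1},z_{2})$) yields $f(-z)=\overline{f(z)}$. Hence $f(0)$ is real and the Hessian $f''(0)$ is a real symmetric matrix. The degenerate case $f(0)=0$ would force $f\equiv0$ by $N=2$, so we take $f(0)=1$ — the normalization under which the lemma will be applied, $f$ being the reduced symplectic Fourier transform $\rho_{\Diamond}$ of a Wigner function. Next, for any real symmetric $2n\times2n$ matrix $H$ and any $u=a+ib$ with $a,b\in\mathbb{R}^{2n}$ one has, using $J^{T}=-J$, $a^{T}Ja=b^{T}Jb=0$ and $\sigma(a,b)=b^{T}Ja$,
\[
u^{\ast}\Bigl(-H+\tfrac{i\eta}{2}J\Bigr)u=-a^{T}Ha-b^{T}Hb+\eta\,\sigma(a,b).
\]
Since $(a,b)\mapsto a+ib$ maps $\mathbb{R}^{2n}\times\mathbb{R}^{2n}$ onto $\mathbb{C}^{2n}$, the desired inequality $-f''(0)+\tfrac{i\eta}{2}J\ge0$ is equivalent to
\[
-a^{T}f''(0)a-b^{T}f''(0)b+\eta\,\sigma(a,b)\ge0\qquad\text{for all }a,b\in\mathbb{R}^{2n}.
\]

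To establish this, fix $a,b\in\mathbb{R}^{2n}$ and invoke $\eta$-positivity with $N=3$, the points $z_{1}=0$, $z_{2}=ta$, $z_{3}=tb$ ($t>0$ small), and the coefficients $\lambda=(\lambda_{1},\lambda_{2},\lambda_{3})=(-(1+i),\,i,\,1)$, which satisfy $\lambda_{1}+\lambda_{2}+\lambda_{3}=0$. Setting $(v_{1},v_{2},v_{3})=(0,a,b)$ and $\Lambda_{jk}(t)=e^{-\frac{i\eta}{2}t^{2}\sigma(v_{j},v_{k})}f(t(v_{j}-v_{k}))$, the hypothesis gives $Q(t):=\sum_{j,k}\lambda_{j}\overline{\lambda_{k}}\Lambda_{jk}(t)\ge0$ for all small $t>0$. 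Because $\sum_{j}\lambda_{j}=0$ we get $Q(0)=f(0)\bigl|\sum_{j}\lambda_{j}\bigr|^{2}=0$, and since $\Lambda_{jk}'(0)=\nabla f(0)\cdot(v_{j}-v_{k})$ (the phase contributes nothing at first order) also $Q'(0)=\nabla f(0)\cdot\bigl[(\textstyle\sum_{j}\lambda_{j}v_{j})\overline{(\sum_{k}\lambda_{k})}-(\sum_{j}\lambda_{j})\overline{(\sum_{k}\lambda_{k}v_{k})}\bigr]=0$. Therefore $Q(t)=\tfrac12Q''(0)\,t^{2}+o(t^{2})$, so $\tfrac12Q''(0)=\lim_{t\to0^{+}}t^{-2}Q(t)\ge0$. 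A direct expansion of the six off-diagonal entries, using $f(tv)=1+t\,\nabla f(0)\!\cdot\!v+\tfrac{t^{2}}{2}v^{T}f''(0)v+o(t^{2})$ and $e^{-\frac{i\eta}{2}t^{2}\sigma}=1-\tfrac{i\eta}{2}t^{2}\sigma+o(t^{2})$, then shows that, with $c_{jk}$ denoting the $t^{2}$-coefficient of $\Lambda_{jk}$,
\[
\tfrac12Q''(0)=-a^{T}f''(0)a-b^{T}f''(0)b+\eta\,\sigma(a,b):
\]
the pairs $(\Lambda_{12},\Lambda_{21})$ and $(\Lambda_{13},\Lambda_{31})$ produce the two Hessian terms with coefficient $\lambda_{1}\overline{\lambda_{2}}+\lambda_{2}\overline{\lambda_{1}}=\lambda_{1}\overline{\lambda_{3}}+\lambda_{3}\overline{\lambda_{1}}=-2$, the cross term $a^{T}f''(0)b$ cancels because $\operatorname{Re}(\lambda_{2}\overline{\lambda_{3}})=0$, and the symplectic term comes solely from $(\Lambda_{23},\Lambda_{32})$ as $\lambda_{2}\overline{\lambda_{3}}\,c_{23}+\lambda_{3}\overline{\lambda_{2}}\,c_{32}=i(c_{23}-c_{32})=\eta\,\sigma(a,b)$ (here $c_{23}-c_{32}=-i\eta f(0)\sigma(a,b)=-i\eta\sigma(a,b)$). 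Since $a,b$ were arbitrary, this is exactly the inequality required, and the lemma follows.

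The only delicate point is the bookkeeping in $Q''(0)$, and in particular tracking how $\sigma(a,b)$ enters: it appears only through the "loop" entries $\Lambda_{23},\Lambda_{32}$ attached to the two nonzero points, with coefficient proportional to $\operatorname{Im}(\lambda_{2}\overline{\lambda_{3}})$. This is why $N=2$ does not suffice — with a single off-diagonal entry its modulus is phase-independent, so no information about $\sigma$ survives — and one must use a genuine $3$-cycle, whose phase $\exp\!\bigl(-\tfrac{i\eta}{2}[\sigma(z_{1},z_{2})+\sigma(z_{2},z_{3})+\sigma(z_{3},z_{1})]\bigr)=\exp\!\bigl(-\tfrac{i\eta}{2}\sigma(z_{2},z_{3})\bigr)$ genuinely feels the symplectic form.
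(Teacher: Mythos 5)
Your proof is correct, and it is worth noting that the paper does not actually prove this lemma: it simply cites Narcowich's original article (Lemma 2.1 of \emph{Geometry and uncertainty}), so your argument supplies a self-contained proof where the text defers to the literature. Your computation checks out: with $\lambda=(-(1+i),i,1)$ one has $\lambda_1\overline{\lambda_2}+\lambda_2\overline{\lambda_1}=\lambda_1\overline{\lambda_3}+\lambda_3\overline{\lambda_1}=-2$ and $\lambda_2\overline{\lambda_3}=i$, the first-order term vanishes because $\sum_j\lambda_j=0$, and the identity $u^{\ast}\bigl(-f''(0)+\tfrac{i\eta}{2}J\bigr)u=-a^{T}f''(0)a-b^{T}f''(0)b+\eta\,\sigma(a,b)$ for $u=a+ib$ is consistent with the paper's convention $\sigma(a,b)=b^{T}Ja$, so the $3$-point second-order expansion delivers exactly the quadratic form needed. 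Your method is in the same spirit as Narcowich's original argument (Taylor expansion of the $\eta$-positivity condition at coincident points with coefficients summing to zero), but your specific choice of a single three-point configuration is more economical than the general $N$-point version, and your closing remark explaining why $N=2$ cannot see the symplectic form is a genuine insight. Two small points deserve emphasis. First, you correctly observe that the lemma as stated requires the normalization $f(0)=1$: tracking $f(0)$ through your computation gives the general conclusion $-f''(0)+\tfrac{i\eta}{2}f(0)J\geq0$, so the statement is literally false for, say, $f\equiv0$ when $\eta\neq0$; in the paper's application $f$ is a symplectic Fourier transform of a Wigner function and carries a factor $(2\pi\hbar)^{-n}$ at the origin, which is silently absorbed into the normalization of equation (\ref{cj}) in the proof of Theorem \ref{ThmCov}, so flagging this is appropriate rather than pedantic. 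Second, your preliminary derivation of $f(-z)=\overline{f(z)}$ from the $N=2$ Hermiticity is not mere book-keeping: it is what guarantees that $f''(0)$ is real symmetric, without which the polarization identity you use to pass from the scalar inequality to the matrix inequality would fail.
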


\begin{proof}
See Narcowich \cite{Narcow}, Lemma 2.1.
\end{proof}

\subsubsection{A necessary (but not sufficient) condition for a state to be
\textquotedblleft quantum\textquotedblright}

We are going to prove an essential result, which goes back to Narcowich
\cite{Narcow}. It says that the covariance matrix of a quantum state must
satisfy a certain condition which implies -- but is stronger than -- the
Robertson--Schr\"{o}dinger uncertainty principle.

We recall that it is assumed that the Wigner function $\rho$ satisfies the
condition
\begin{equation}
\int(1+|z|^{2})|\rho(z)|d^{2n}z<\infty.
\end{equation}
This implies, among other things, that the Fourier transform (and hence also
the symplectic Fourier transform) of $\rho$ is twice continuously
differentiable. In fact, writing%
\[
F\rho(z)=\left(  \tfrac{1}{2\pi\hbar}\right)  ^{n}\int e^{-\frac{i}{\hbar
}zz^{\prime}}\rho(z^{\prime})d^{2n}z^{\prime}%
\]
we have%
\[
\partial_{z_{\alpha}}F\rho=-\frac{i}{\hbar}F[z_{\alpha}\rho]\text{ \ ,
\ }\partial_{z_{\alpha}}\partial_{z_{\beta}}F\rho=\left(  -\frac{i}{\hbar
}\right)  ^{2}F[z_{\alpha}z_{\beta}\rho]
\]
and hence%
\begin{align*}
|\partial_{z_{\alpha}}F\rho(z)|  &  \leq\frac{1}{\hbar}\left\vert \int
z_{\alpha}\rho(z)d^{2n}z\right\vert <\infty\\
|\partial_{z_{\alpha}}\partial_{z_{\beta}}F\rho(z)|  &  \leq\left(  \frac
{1}{\hbar}\right)  ^{2}\left\vert \int z_{\alpha}z_{\beta}\rho(z)d^{2n}%
z\right\vert <\infty
\end{align*}
in view of the inequalities (\ref{conv1}), (\ref{conv2}).

\begin{theorem}
\label{ThmCov}Suppose that the phase space function $\rho$ with associated
covariance matrix $\Sigma$ is the $\eta$-Wigner transform of a density matrix
$\widehat{\rho}$. (i) We have
\begin{equation}
\Sigma+\frac{i\eta}{2}J\geq0. \label{sigmaRS}%
\end{equation}
(i.e. the Hermitian matrix $\Sigma+\frac{i\eta}{2}J$ is positive
semidefinite); this condition is equivalent to the inequality
\begin{equation}
|\eta|\leq2\lambda_{\min} \label{eta2la}%
\end{equation}
where $\lambda_{\min}$ is the smallest symplectic eigenvalue of $\Sigma$. (ii)
If holds (\ref{sigmaRS})--(\ref{eta2la}) hold, then they hold for every
$\eta^{\prime}<\eta$.
\end{theorem}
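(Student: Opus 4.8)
The plan is to extract the matrix inequality (\ref{sigmaRS}) from the quantum Bochner theorem (Theorem \ref{Prop2}) together with Narcowich's lemma (Lemma \ref{LemmaNarcow}), to recast (\ref{sigmaRS}) in the symplectic-eigenvalue form (\ref{eta2la}) by Williamson's diagonalization (Lemma \ref{LemmaWill}), and finally to deduce (ii) by a one-line convexity argument.

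First I would reduce to a \emph{centered} state. The means $\langle z_{\alpha}\rangle$ exist by (\ref{conv1}); replacing $\rho(z)$ by $\widetilde{\rho}(z)=\rho(z+\overline{z})$ with $\overline{z}=(\langle z_{1}\rangle,\dots,\langle z_{2n}\rangle)$ leaves $\Sigma$ unchanged, since covariances are translation invariant. By the translation property (\ref{transwig}) (with $\hbar$ replaced by $\eta$), $\widetilde{\rho}$ is the $\eta$-Wigner transform of $\widehat{\rho}$ conjugated by the displacement operator $\widehat{D}_{\eta}(-\overline{z})$, which is again a density matrix because displacement operators are unitary. So we may assume $\overline{z}=0$. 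Under the decay hypothesis (\ref{cond25}) the reduced symplectic Fourier transform $\rho_{\Diamond}=F_{\Diamond}\rho$ is twice continuously differentiable — this is precisely the computation carried out just before the theorem for $F\rho$ — and differentiating (\ref{adiam}) under the integral sign gives $\rho_{\Diamond}^{\prime\prime}(0)=-J^{T}\Sigma J$, since for a centered state the raw second moments coincide with the entries of $\Sigma$. Because $\widehat{\rho}\geq0$, the necessity direction of Theorem \ref{Prop2} says that $\rho_{\Diamond}$ is of $\eta$-positive type, so Lemma \ref{LemmaNarcow} applied to $f=\rho_{\Diamond}$ yields
\[
J^{T}\Sigma J+\tfrac{i\eta}{2}J=-\rho_{\Diamond}^{\prime\prime}(0)+\tfrac{i\eta}{2}J\geq0 .
\]
Conjugating this Hermitian matrix by the orthogonal matrix $J$ (so that $J^{-1}=J^{T}$) and using $J^{2}=-I$ turns the left-hand side into $\Sigma+\tfrac{i\eta}{2}J$, which gives (\ref{sigmaRS}).

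For the equivalence with (\ref{eta2la}) I would first observe that (\ref{sigmaRS}) already forces $\Sigma$ to be positive definite: if $v$ is a real vector with $\Sigma v=0$ then (\ref{sigmaRS}) gives $v^{*}(\Sigma+\tfrac{i\eta}{2}J)v=0$, hence $(\Sigma+\tfrac{i\eta}{2}J)v=0$, hence $\tfrac{i\eta}{2}Jv=0$, which forces $v=0$ since $\eta\neq0$ and $J$ is invertible. Williamson's theorem then provides $S\in\operatorname*{Sp}(n)$ with $\Sigma=S^{T}DS$, $D=\operatorname*{diag}(\Lambda,\Lambda)$, $\Lambda=\operatorname*{diag}(\lambda_{1},\dots,\lambda_{n})$. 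Since $S$ is symplectic, $S^{T}JS=J$, so $\Sigma+\tfrac{i\eta}{2}J=S^{T}(D+\tfrac{i\eta}{2}J)S$, and positive semidefiniteness is preserved under this congruence. After a permutation of coordinates $D+\tfrac{i\eta}{2}J$ is block diagonal with $2\times2$ Hermitian blocks having diagonal entries $\lambda_{j}$ and off-diagonal entries $\pm i\eta/2$; such a block has eigenvalues $\lambda_{j}\pm\eta/2$, so it is positive semidefinite iff $\lambda_{j}\geq|\eta|/2$. Hence (\ref{sigmaRS}) holds iff $\lambda_{j}\geq|\eta|/2$ for all $j$, i.e. iff $|\eta|\leq2\lambda_{\min}$.

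Finally, (ii) is immediate. In the form (\ref{eta2la}) it is obvious that $|\eta|\leq2\lambda_{\min}$ implies $|\eta'|\leq2\lambda_{\min}$ whenever $|\eta'|\leq|\eta|$, in particular for $0<\eta'<\eta$, so (\ref{eta2la}), and hence (\ref{sigmaRS}), persists; equivalently, at the matrix level (\ref{sigmaRS}) gives $\Sigma\geq0$ and for $0<\eta'<\eta$
\[
\Sigma+\tfrac{i\eta'}{2}J=\tfrac{\eta'}{\eta}(\Sigma+\tfrac{i\eta}{2}J)+\tfrac{\eta-\eta'}{\eta}\,\Sigma\geq0
\]
as a combination with nonnegative coefficients of two positive semidefinite matrices. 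I expect the real obstacle to be the second paragraph: keeping the matrices $J$, $J^{T}$ and their signs straight in the identity $\rho_{\Diamond}^{\prime\prime}(0)=-J^{T}\Sigma J$ and in the conjugation step, and cleanly justifying both the differentiation under the integral (from (\ref{cond25})) and the claim that conjugation by the displacement operator yields exactly the centered state with the same covariance matrix; the remaining steps are routine linear algebra.
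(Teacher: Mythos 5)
Your proposal is correct and follows essentially the same route as the paper: reduce to the centered state, apply Narcowich's Lemma \ref{LemmaNarcow} to the symplectic Fourier transform of $\rho_0$ (whose $\eta$-positivity comes from Theorem \ref{Prop2}), compute the Hessian at the origin to obtain $J^{T}\Sigma J+\tfrac{i\eta}{2}J\geq0$ and conjugate by $J$, then use Williamson's diagonalization for the equivalence with (\ref{eta2la}) and the convex-combination identity for (ii). The only (welcome) differences are cosmetic: you work with the reduced transform $\rho_{\Diamond}$, which makes the Hessian formula $\rho_{\Diamond}''(0)=-J^{T}\Sigma J$ cleaner than the paper's $(2\pi\eta)^{-n}$-weighted version, you explicitly check that (\ref{sigmaRS}) forces $\Sigma>0$ before invoking Williamson (a point the paper glosses over), and you read off the eigenvalues from $2\times2$ blocks rather than from the characteristic polynomial.
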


\begin{proof}
(i) That $\Sigma+\frac{i\eta}{2}J$ is Hermitian is clear: since the adjoint of
$J$ is $-J$ we have
\[
(\Sigma+\tfrac{i\eta}{2}J)^{\dag}=\Sigma^{\dag}+(\tfrac{i\eta}{2}J)^{\dag
}=\Sigma+\tfrac{i\eta}{2}J.
\]
We next remark that $\Sigma=\Sigma_{0}$ where $\Sigma_{0}$ is the covariance
matrix of $\rho_{0}(z)=\rho(z+\left\langle z\right\rangle _{\rho})$: we have
$\left\langle z\right\rangle _{0}=0$ and hence%
\[
\Delta(x_{j},x_{k})_{0}=\int x_{j}x_{k}\rho_{0}(z)d^{2n}z=\Delta(x_{j}%
,x_{k})_{\rho};
\]
similarly $\Delta(x_{j},p_{k})_{0}=\Delta(x_{j},p_{k})_{\rho}$ and
$\Delta(p_{j},p_{k})_{0}=\Delta(p_{j},p_{k})_{\rho}$. It is thus sufficient to
prove the result for the density operator $\widehat{\rho}_{0}$. We are going
to use Lemma \ref{LemmaNarcow} with $f(z)=F_{\sigma}\rho_{0}(z)=\rho
_{0,\sigma}(z)$. Observing that the (symplectic) Fourier transform
$\rho_{0,\sigma}$ is twice continuously differentiable in view of the argument
preceding the statement of the theorem, we have%
\[
\hbar^{2}\rho_{0,\sigma}^{\prime\prime}(0)=(2\pi\hbar)^{-n}%
\begin{pmatrix}
-\Sigma_{0,pp} & \Sigma_{0,xp}\\
\Sigma_{0,px} & -\Sigma_{0,xx}%
\end{pmatrix}
\]
and hence%
\begin{equation}
\eta^{2}\rho_{0,\sigma}^{\prime\prime}(0)=\left(  \tfrac{1}{2\pi\eta}\right)
^{n}J\Sigma_{0}J. \label{cj}%
\end{equation}
Since $\widehat{\rho}$ is a density matrix we have
\[
M=-2\eta^{-1}J\Sigma J+iJ\geq0;
\]
the condition $M\geq0$ being equivalent to $J^{T}MJ\geq0$ the inequality
(\ref{sigmaRS}) follows. Let us finally show that the conditions
(\ref{eta2la}) and (\ref{sigmaRS}) indeed are equivalent. Let $\Sigma=S^{T}DS$
be a symplectic diagonalization of $\Sigma$ (Lemma \ref{LemmaWill}). Since
$S^{T}JS=J$ condition (\ref{sigmaRS}) is equivalent to
\[
D+\frac{i\eta}{2}J\geq0\text{ \ , \ }D=%
\begin{pmatrix}
\Lambda & 0\\
0 & \Lambda
\end{pmatrix}
.
\]
The characteristic polynomial of $D+\frac{i\eta}{2}J$ is%
\begin{align*}
P(\lambda)  &  =%
\begin{vmatrix}
\Lambda-\lambda I_{n} & \tfrac{i\eta}{2}I_{n}\\
-\tfrac{i\eta}{2}I_{n} & \Lambda-\lambda I_{n}%
\end{vmatrix}
\\
&  =\det\left[  (\Lambda-\lambda I_{n})^{2}-\tfrac{1}{4}\eta^{2}I_{n}\right]
;
\end{align*}
the matrix $\Lambda$ being diagonal, the zeroes $\lambda$ of $P(\lambda)$ are
the solutions of the $n$ equations $(\lambda_{j}-\lambda)^{2}-\tfrac{1}{4}%
\eta^{2}=0$ that is $\lambda_{j}-\lambda=\pm$ $\tfrac{1}{2}|\eta|$. Since
$\lambda\geq0$ we must have $\lambda_{j}\geq\tfrac{1}{2}|\eta|$ for all $j$
hence (\ref{eta2la}). Property (ii) immediately follows from (\ref{eta2la});
it can also be proved directly: setting $\eta^{\prime}=r\eta$ with $0<r\leq1$
we have
\[
\Sigma+\frac{i\eta^{\prime}}{2}J=(1-r)\Sigma+r(\Sigma+\frac{i\eta}{2}J)\geq0.
\]

\end{proof}

The relation of the result above with the uncertainty principle comes from the
following observation (see Narcowich \cite{Narcow}, de Gosson \cite{Birkbis},
Chapter 13, de Gosson and Luef \cite{PR}): the relation
\begin{equation}
\Sigma+\tfrac{1}{2}i\eta J\geq0; \label{unc112}%
\end{equation}
(which we will call the \textquotedblleft strong uncertainty
principle\textquotedblright) implies the Robertson--Schr\"{o}dinger
inequalities%
\begin{equation}
(\Delta x_{j})_{\widehat{\rho}}^{2}(\Delta p_{j})_{\widehat{\rho}}^{2}%
\geq(\Delta(x_{j},p_{j})_{\widehat{\rho}})^{2}+\tfrac{1}{4}\eta^{2}
\label{unc212}%
\end{equation}
($j=1,...,n$) and $(\Delta x_{j})_{\psi}^{2}(\Delta p_{k})_{\psi}^{2}\geq0$ if
$j\neq k$. It is however essential to note that (\ref{unc112}) and
(\ref{unc212}) are \emph{not equivalent}. Here is a counterexample in the case
$n=2$. Consider the symmetric matrix
\[
\Sigma=%
\begin{pmatrix}
1 & -1 & 0 & 0\\
-1 & 1 & 0 & 0\\
0 & 0 & 1 & 0\\
0 & 0 & 0 & 1
\end{pmatrix}
\]
Assume that $\Sigma$ is a covariance matrix; we thus have $(\Delta x_{1}%
)^{2}=(\Delta x_{2})^{2}=1$ and $(\Delta p_{1})^{2}=(\Delta p_{2})^{2}=1$, and
also $\Delta(x_{1},p_{1})=\Delta(x_{2},p_{2})=0$ so that the inequalities
(\ref{unc212}) are trivially satisfied for the choice $\eta=1$ (they are in
fact equalities). The matrix $\Sigma+iJ$ is nevertheless indefinite (its
determinant is $-1$); $\Sigma$ is not even invertible.

It is also essential to realize that condition (\ref{unc112}) is
\emph{necessary} but \emph{not sufficient} for a phase space function to be
the Wigner function of a quantum state. Let us check this on the following
example due Narcowich and O'Connell \cite{Narconnell}, further discussed in de
Gosson and Luef \cite{golubis}. Consider the function $f(x,p)$ defined for
$n=1$ by%
\[
f(x,p)=(1-\tfrac{1}{2}ax^{2}-\tfrac{1}{2}bx^{2})e^{-(a^{2}x^{4}-b^{2}p^{4})}%
\]
where $a$ and $b$ are positive constants such that $ab\geq\frac{1}{4}\hbar
^{2}$. Now define
\[
\rho(x,p)=\frac{1}{2\pi}\iint e^{-i(xx^{\prime}+pp^{\prime})}f(x^{\prime
},p^{\prime})dp^{\prime}dx^{\prime};
\]
since $f(x,p)$ is an even function $\rho(x,p)$ is real; in view of the Fourier
inversion formula we have
\[
f(x,p)=\iint e^{-i(xx^{\prime}+pp^{\prime})}\rho(x^{\prime},p^{\prime
})dp^{\prime}dx^{\prime}%
\]
and hence%
\[
\iint\rho(x,p)dpdx=f(0,0)=1
\]
so that $\rho(x,p)$ is a candidate for being the Wigner function of some
density matrix. Calculating the covariance matrix $\Sigma$ associated with
$\rho$, one finds after some tedious calculations (\cite{Narconnell}, pp.4--5)
that $\Sigma+\frac{i\hbar}{2}J\geq0$. However, $\rho$ cannot be the Wigner
function of a density matrix $\widehat{\rho}$ for if this were the case we
would have $\langle p^{4}\rangle_{\widehat{\rho}}\geq0$; but by definition of
$\rho$
\[
\iint p^{4}\rho(x,p)dpdx=\frac{\partial^{4}f}{\partial x^{4}}(0,0)=-24a^{2}%
<0.
\]

\subsection{Gaussian Bosonic states}

We focus now on the case where all involved states are Gaussian: we say that a
pure or mixed quantum state is Gaussian if its Wigner function is of the type%
\begin{equation}
\rho(z)=(2\pi)^{-n}\sqrt{\det\Sigma^{-1}}e^{-\frac{1}{2}\Sigma^{-1}%
(z-z_{0})\cdot(z-z_{0})}%
\end{equation}
where $\Sigma$ (the \textquotedblleft covariance matrix\textquotedblright) is
a positive definite $2n\times2n$ matrix satisfying a certain condition that
will be stated later (intuitively speaking $\Sigma$ can't be \textquotedblleft
too small\textquotedblright\ because then $\rho(z)$ would be too sharply
peaked and thus violate the uncertainty principle of quantum mechanics).
Gaussian states appear naturally in every quantum system which can be
described or approximated by a quadratic Bosonic Hamiltonian (Wolf \textit{et
al}. \cite{wogici}); because of their peculiarities they play an exceptionally
important role in quantum mechanics and optics (see Barnett and Radmore
\cite{BaRa}).

\subsubsection{Definition and examples}

We will define a \emph{generalized Gaussian} as any complex function on
$\mathbb{R}^{n}$ of the type
\begin{equation}
\psi_{M}^{\hbar}(x)=\left(  \tfrac{1}{\pi\hbar}\right)  ^{n/4}(\det
X)^{1/4}e^{-\tfrac{1}{2\hbar}M(x-x_{0})^{2}} \label{gausshud}%
\end{equation}
where $M=X+iY$ is a complex symmetric $2n\times2n$ invertible matrix; $X$ and
$Y$ are real matrices such that $X=X^{T}>0$ and $Y=Y^{T}$. The coefficient in
front of the exponential is chosen so that $\psi_{M}^{\hbar}$ is normalized to
unity: $||\psi_{M}^{\hbar}||=1$.

Suppose that $X=I$ and $Y=0$; then%
\begin{equation}
\psi_{M}^{\hbar}(x)=\phi_{0}^{\hbar}(x)=\left(  \tfrac{1}{\pi\hbar}\right)
^{n/4}e^{-\tfrac{1}{2\hbar}|x|^{2}} \label{standardco}%
\end{equation}
is the standard (or fiducial) coherent state

Let $\phi_{M}(x)=e^{-\tfrac{1}{2\hbar}Mx^{2}}$ where $M=X+iY$ is a symmetric
complex $n\times n$ matrix such that $X=\operatorname{Re}M>0$. The Fourier
transform
\[
F\phi_{M}(p)=\left(  \tfrac{1}{2\pi\hbar}\right)  ^{n/2}\int e^{-\frac
{i}{\hbar}px}\phi_{M}(x)d^{n}x
\]
is given by
\begin{equation}
F\phi_{M}(x)=(\det M)^{-1/2}\phi_{M^{-1}}(x) \label{fofol8}%
\end{equation}
where $(\det M)^{-1/2}$ is given by the formula%
\[
(\det M)^{-1/2}=\lambda_{1}^{-1/2}\cdot\cdot\cdot\lambda_{m}^{-1/2}%
\]
the numbers $\lambda_{1}^{-1/2},...,\lambda_{n}^{-1/2}$ being the square roots
with positive real parts of the eigenvalues $\lambda_{1}^{-1},...,\lambda
_{m}^{-1}$ of $M^{-1}$ (see \textit{e.g.} Folland \cite{Folland}, Appendix A).
It follows that the Fourier transform of $\psi_{M}^{\hbar}$ is given by the
formula%
\begin{equation}
F\psi_{M}^{\hbar}(p)=\left(  \tfrac{1}{\pi\hbar}\right)  ^{n/4}(\det
X)^{1/4}(\det M)^{-1/2}\phi_{M^{-1}}(x). \label{foutoir}%
\end{equation}

\subsubsection{The Wigner transform of $\psi_{M}^{\hbar}$}

We are following here almost verbatim our discussion in \cite{Birkbis}, \S 11.2.1.

\begin{theorem}
The Wigner transform $W\psi_{M}^{\hbar}$ is the phase space Gaussian
\begin{equation}
W\psi_{M}^{\hbar}(z)=\left(  \tfrac{1}{\pi\hbar}\right)  ^{n}e^{-\tfrac
{1}{\hbar}Gz^{2}} \label{phagauss}%
\end{equation}
where $G$ is the symplectic symmetric matrix%
\begin{equation}
G=%
\begin{pmatrix}
X+YX^{-1}Y & YX^{-1}\\
X^{-1}Y & X^{-1}%
\end{pmatrix}
; \label{gsym}%
\end{equation}
in fact $G=S^{T}S$ where%
\begin{equation}
S=%
\begin{pmatrix}
X^{1/2} & 0\\
X^{-1/2}Y & X^{-1/2}%
\end{pmatrix}
\label{bi}%
\end{equation}
is a symplectic matrix.
\end{theorem}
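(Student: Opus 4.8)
The plan is to bypass a head-on Gaussian integration and instead read the answer off the \emph{symplectic covariance} of the Wigner transform (Theorem \ref{Thmsyco}) combined with the explicit generators of $\operatorname{Mp}(n)$ recorded above. The first step is the base case: the fiducial coherent state $\phi_0^{\hbar}(x)=\left(\tfrac{1}{\pi\hbar}\right)^{n/4}e^{-\frac{1}{2\hbar}|x|^2}$ has Wigner transform
\[
W\phi_0^{\hbar}(z)=\left(\tfrac{1}{\pi\hbar}\right)^{n}e^{-\frac{1}{\hbar}|z|^{2}},
\]
which follows from definition (\ref{wigner4}) by a single completion of the square in the $y$-integral. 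The second step is purely linear algebra: with $L=X^{-1/2}$ and $P=X^{-1/2}YX^{-1/2}$ (symmetric because $X,Y$ are, and $X>0$), a block multiplication shows that the matrix $S$ in (\ref{bi}) factors as $S=V_{-P}M_{L}$ with $V_{-P},M_{L}$ the generators in (\ref{vpml}); hence $S$ is symplectic as a product of symplectic matrices. A further block computation of $S^{T}S$ reproduces exactly the entries of $G$ in (\ref{gsym}), so $G=S^{T}S$ is symmetric and, being a product of symplectic matrices, symplectic.

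The third step exhibits $\psi_{M}^{\hbar}$ as a metaplectic image of $\phi_0^{\hbar}$. From $S=V_{-P}M_{L}$ we get $S^{-1}=M_{L^{-1}}V_{P}$, and applying the corresponding metaplectic operators, with $\widehat{V}_{P}\psi(x)=e^{-\frac{i}{2\hbar}Px^{2}}\psi(x)$ and $\widehat{M}_{L^{-1},m}\psi(x)=i^{m}\sqrt{|\det L^{-1}|}\,\psi(L^{-1}x)$, one computes
\[
\widehat{M}_{L^{-1},m}\widehat{V}_{P}\phi_0^{\hbar}(x)=i^{m}(\det X)^{1/4}\left(\tfrac{1}{\pi\hbar}\right)^{n/4}e^{-\frac{1}{2\hbar}Mx^{2}},
\]
using $L^{-1}=X^{1/2}$, the identity $X^{1/2}(I+iP)X^{1/2}=X+iY=M$, and $|\det X^{1/2}|=(\det X)^{1/2}$. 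Thus $\psi_{M}^{\hbar}=i^{-m}\widehat{S}^{-1}\phi_0^{\hbar}$ up to a unimodular phase, where $\widehat{S}$ is a metaplectic operator covering $S$. Since the Wigner transform is insensitive to a unimodular phase and $W(\widehat{S}^{-1}\psi)(z)=W\psi(Sz)$ by Theorem \ref{Thmsyco}(i), we conclude
\[
W\psi_{M}^{\hbar}(z)=W\phi_0^{\hbar}(Sz)=\left(\tfrac{1}{\pi\hbar}\right)^{n}e^{-\frac{1}{\hbar}|Sz|^{2}}=\left(\tfrac{1}{\pi\hbar}\right)^{n}e^{-\frac{1}{\hbar}(S^{T}Sz)\cdot z}=\left(\tfrac{1}{\pi\hbar}\right)^{n}e^{-\frac{1}{\hbar}Gz^{2}},
\]
which is (\ref{phagauss}).

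The one genuinely delicate point is the bookkeeping in the third step: one must keep track of the factorization $S^{-1}=M_{L^{-1}}V_{P}$ in the correct order and of the square-root and index conventions in the metaplectic generators, so that the constant in front of the exponential comes out to be $(\det X)^{1/4}$ (times an irrelevant eighth root of unity) rather than something else. Everything else — the Gaussian integral for $W\phi_0^{\hbar}$, the identity $S=V_{-P}M_{L}$, and the block identity $G=S^{T}S$ — is routine. A fully self-contained alternative, requiring nothing beyond Gaussian integrals, is to compute $W\psi_{M}^{\hbar}$ directly from (\ref{wigner4}) by completing the square in the exponent $-\frac{1}{2\hbar}\bigl[M(x+\tfrac12 y)^{2}+M^{\ast}(x-\tfrac12 y)^{2}\bigr]-\frac{i}{\hbar}py$ and integrating over $y$; this is more computational but avoids any appeal to the metaplectic group.
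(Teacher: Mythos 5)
Your proposal is correct, but it takes a genuinely different route from the paper. The paper proves (\ref{phagauss}) by brute force: it writes out the defining integral (\ref{wigner4}) for $W\psi_{M}^{\hbar}$, simplifies the phase $M(x+\tfrac12 y)^{2}+M^{\ast}(x-\tfrac12 y)^{2}$ to $2Xx^{2}+2iYx\cdot y+\tfrac12 Xy^{2}$, and evaluates the remaining $y$-integral with the Gaussian Fourier transform formula (\ref{fofol8}) applied to $\tfrac12 X$; the matrix $G$ is then read off from the resulting quadratic form and the identities $G=S^{T}S$, $S\in\operatorname{Sp}(n)$ are verified afterwards. You instead compute only the easiest Gaussian integral (the fiducial state, giving $W\phi_{0}^{\hbar}(z)=(\pi\hbar)^{-n}e^{-|z|^{2}/\hbar}$, which is (\ref{wii8})), observe that $S=V_{-P}M_{L}$ with $L=X^{-1/2}$, $P=X^{-1/2}YX^{-1/2}$ from the generators (\ref{vpml}), check $\psi_{M}^{\hbar}=i^{-m}\widehat{M}_{L^{-1},m}\widehat{V}_{P}\phi_{0}^{\hbar}$ via $X^{1/2}(I+iP)X^{1/2}=M$, and invoke the covariance formula $W(\widehat{S}^{-1}\psi)(z)=W\psi(Sz)$ of Theorem \ref{Thmsyco}. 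Your bookkeeping is right ($S^{-1}=M_{L^{-1}}V_{P}$, $\sqrt{|\det X^{1/2}|}=(\det X)^{1/4}$, and the sign/phase ambiguity of the metaplectic lift is harmless since $W(c\psi)=|c|^{2}W\psi$), and there is no circularity since Theorem \ref{Thmsyco} precedes this result in the paper. What your argument buys is conceptual: the identities $G=S^{T}S$ and the symplecticity of $G$ emerge structurally rather than being verified after the fact, and the method generalizes immediately to any metaplectic image of a state with known Wigner transform. What the paper's computation buys is self-containedness — it needs nothing beyond the Gaussian Fourier transform — which is essentially the fallback you mention in your last sentence.
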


\begin{proof}
To simplify notation we set $C(X)=\left(  \pi\hbar\right)  ^{n/4}(\det
X)^{1/4}$. By definition of the Wigner transform we have%
\begin{equation}
W\psi_{M}^{\hbar}(z)=\left(  \tfrac{1}{2\pi\hbar}\right)  ^{n}C(X)^{2}\int
e^{-\frac{i}{\hbar}py}e^{-\frac{1}{2\hbar}F(x,y)}d^{n}y \label{wigcoh11}%
\end{equation}
where the phase $F$ is defined by%
\begin{align*}
F(x,y)  &  =(X+iY)(x+\tfrac{1}{2}y)^{2}+(X-iY)(x-\tfrac{1}{2}y)^{2}\\
&  =2Xx^{2}+2iYx\cdot y+\tfrac{1}{2}Xy^{2}%
\end{align*}
so we can rewrite (\ref{wigcoh11}) as%
\[
W\psi_{M}^{\hbar}(z)=\left(  \tfrac{1}{2\pi\hbar}\right)  ^{n}e^{-\frac
{1}{\hbar}Xx^{2}}C(X)^{2}\int e^{-\frac{i}{\hbar}(p+Yx)y}e^{-\frac{1}{4\hbar
}Xy^{2}}d^{n}y\text{.}%
\]
Using the Fourier transformation formula (\ref{fofol8}) above with $x$
replaced by $p+Yx$ and $M$ by $\frac{1}{2}X$ we get
\begin{multline*}
\int e^{-\frac{i}{\hbar}(p+Yx)y}e^{-\frac{1}{4\hbar}Xy^{2}}d^{n}y=\\
(2\pi\hbar)^{n/2}\left[  \det(\tfrac{1}{2}X)\right]  ^{-1/2}C(X)^{2}%
e^{-\tfrac{1}{\hbar}X^{-1}(p+Yx)^{2}}.
\end{multline*}
On the other hand we have
\[
(2\pi\hbar)^{n/2}\left[  \det(\tfrac{1}{2}X)\right]  ^{-1/2}C(X)^{2}=\left(
\tfrac{1}{\pi\hbar}\right)  ^{n}%
\]
and hence%
\[
W\psi_{M}^{\hbar}(z)=\left(  \tfrac{1}{\pi\hbar}\right)  ^{n}e^{-\tfrac
{1}{\hbar}Gz^{2}}%
\]
where
\[
Gz^{2}=(X+YX^{-1})x^{2}+2X^{-1}Yx\cdot p+X^{-1}p^{2}\mathbf{\ }%
\]
so that $G$ is given by (\ref{gsym}). One immediately verifies that $G=S^{T}S$
where $S$ is given by (\ref{bi}) and that $S^{T}JS=J$ hence $S\in
\operatorname*{Sp}(n)$ as claimed.
\end{proof}

In particular, when $\psi_{M}^{\hbar}$ is the standard coherent state
(\ref{standardco})\ we recover the well-known formula%
\begin{equation}
W\phi_{0}^{\hslash}(z)=\left(  \tfrac{1}{\pi\hbar}\right)  ^{n}e^{-\frac
{1}{\hbar}|z|^{2}}. \label{wii8}%
\end{equation}

\subsubsection{A necessary and sufficient condition}

We are going to discuss the following question: \textit{For which values of
}$\eta$\textit{ can the Gaussian function }$\rho$\textit{ be the }$\eta
$\textit{-Wigner function of a density operator?} Narcowich \cite{Narcow2} was
the first to address this question using techniques from harmonic analysis
using the approach in Kastler's paper \cite{Kastler}; we give here a new and
simpler proof using the multidimensional generalization of Hardy's uncertainty principle.

Let us begin with what is usually called in the literature \textquotedblleft
Hardy's uncertainty principle\textquotedblright. In what follows we denote by
$F_{\eta}\psi$ the $\eta$-Fourier transform given, for $\psi\in L^{2}%
(\mathbb{R}^{n})$, by
\begin{equation}
F_{\eta}\psi(p)=\left(  \tfrac{1}{2\pi\eta}\right)  ^{n}\int e^{-\frac{i}%
{\eta}px}\psi(x)d^{n}x. \label{etafourier}%
\end{equation}
An old result (1933) due to Hardy \cite{Hardy} quantifies the
\textquotedblleft folk theorem\textquotedblright\ following which a function
and its Fourier transform cannot be simultaneously arbitrarily sharply peaked.
In fact Hardy proved that if $\psi\in L^{2}(\mathbb{R})$ satisfies%
\[
|\psi(x)|\leq Ce^{-\tfrac{1}{2\eta}ax^{2}}\text{ \ and \ }|F_{\eta}%
\psi(p)|\leq Ce^{-\tfrac{1}{2\eta}bx^{2}}%
\]
then we must have $ab\leq1$. In particular, if $ab=1$ then $\psi
(x)=Ne^{-ax^{2}/2\eta}$ for some constant $N$ (we are thus here in the
presence of a particular quantum tomography result, which says that it
suffices, in the Gaussian case, to know the position and momentum
probabilities to determines the state). We will use the following
generalization to the multidimensional case of Hardy's result:

\begin{lemma}
[Hardy]\label{LemmaHardy}Let $A$ and $B$ be two real positive definite
matrices and $\psi\in L^{2}(\mathbb{R}^{n})$, $\psi\neq0$. Assume that
\begin{equation}
|\psi(x)|\leq Ce^{-\tfrac{1}{2}Ax^{2}}\text{ \ and \ }|F_{\eta}\psi(p)|\leq
Ce^{-\tfrac{1}{2}Bp^{2}} \label{AB}%
\end{equation}
for a constant $C>0$. Then: (i) The eigenvalues $\lambda_{j}$, $j=1,...,n$, of
the matrix $AB$ are all $\leq1/\eta^{2}$; (ii) If $\lambda_{j}=1/\eta^{2}$ for
all $j$, then $\psi(x)=ke^{-\frac{1}{2}Ax^{2}}$ for some complex constant $k$.
\end{lemma}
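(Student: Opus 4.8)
The plan is to reduce the multidimensional statement to the one-dimensional Hardy theorem by a clever choice of coordinates. First I would normalize the parameter: replacing $\psi(x)$ by $\psi(\sqrt{\eta}\,x)$ turns $F_\eta$ into the ordinary ($\eta=1$) Fourier transform and rescales $A\mapsto\eta A$, $B\mapsto\eta B$, so the eigenvalues of $AB$ get multiplied by $\eta^2$; hence it suffices to treat the case $\eta=1$ and prove that the eigenvalues of $AB$ are all $\le 1$, with the rigidity statement when they all equal $1$. Next, since $A$ and $B$ are real symmetric positive definite, I can simultaneously "diagonalize" them: there is an invertible matrix $L$ with $L^TAL=I$ and $L^TBL=D=\operatorname{diag}(\mu_1,\dots,\mu_n)$, where the $\mu_j$ are exactly the eigenvalues of $AB$ (this is the standard simultaneous congruence of a pair of quadratic forms). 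The subtlety is that an arbitrary linear change of variables $x=Ly$ does not act as simply on the Fourier side as one would like; one must track the induced transformation on $F\psi$. The clean fact is that if $\phi(y)=\psi(Ly)$ then $F\phi(w)=|\det L|^{-1}F\psi((L^T)^{-1}w)$ (with the $\frac1{2\pi}$-normalized transform one checks this by direct substitution). Feeding the hypotheses (\ref{AB}) through this change of variables gives $|\phi(y)|\le C e^{-\frac12|y|^2}$ and $|F\phi(w)|\le C' e^{-\frac12 Dw^2}$, i.e. $|F\phi(w)|\le C'\prod_j e^{-\frac12\mu_j w_j^2}$.

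Now the problem has become: a function $\phi$ on $\mathbb{R}^n$ with $|\phi(y)|\le Ce^{-\frac12|y|^2}$ and $|F\phi(w)|\le C'e^{-\frac12 Dw^2}$, $D$ diagonal, must have all $\mu_j\le1$. The natural route is to slice: fix all but one variable. For the $j$-th variable, write $\phi$ as a function of $y_j$ with the other coordinates as parameters and apply the partial Fourier transform in the single variable $y_j$; the Gaussian decay hypotheses restrict to each slice (after using, e.g., that $|F\phi(w)|$ controls the partial transform via an additional integration in the remaining $w$-variables, or more cleanly by applying the one-dimensional theorem to the function $t\mapsto\int\phi(y)\,e^{-\frac{i}{1}(w'\cdot y')}\,dy'$ obtained by partial Fourier transform). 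Applying the classical one-dimensional Hardy theorem (stated in the excerpt) to this slice with exponents $a=1$ and $b=\mu_j$ forces $1\cdot\mu_j\le1$, so $\mu_j\le1$ for every $j$, which is assertion (i). The step I expect to be the main obstacle is making the slicing argument fully rigorous: one needs the partial Fourier transform of $\phi$ to inherit a genuine pointwise Gaussian bound in the $j$-th frequency variable, and this requires either a Fubini/absolute-integrability justification (available because of the Gaussian decay of $\phi$, which guarantees $\phi\in L^1$) or an appeal to the known reduction, e.g. the treatment in Folland's book or in the paper of Sitaram–Sundari–Thangavelu, that the multidimensional Hardy theorem follows from the one-dimensional one by exactly this device.

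For part (ii), suppose $\mu_j=1$ for all $j$, i.e. $AB=I$ so $B=A^{-1}$. In the diagonalized picture $D=I$, so $|\phi(y)|\le Ce^{-\frac12|y|^2}$ and $|F\phi(w)|\le C'e^{-\frac12|w|^2}$ with equality in the exponent. The rigidity half of the one-dimensional Hardy theorem then forces, on each slice, $\phi$ restricted to the line in the $y_j$ direction to be a constant multiple of $e^{-\frac12 y_j^2}$; combining the $n$ directions (the constant on the $j$-th slice is a function of the other variables, and running the argument for each $j$ in turn) yields $\phi(y)=k\,e^{-\frac12|y|^2}$ for a single constant $k$. Undoing the change of variables $y=L^{-1}x$ gives $\psi(x)=k\,e^{-\frac12 (L^{-1})^T L^{-1} x\cdot x}$, and since $L^TAL=I$ means $(L^{-1})^TL^{-1}=A$, we conclude $\psi(x)=k\,e^{-\frac12 Ax^2}$, as claimed. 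Finally I would restore the scaling from the first step, which merely replaces $A$ by its original value, completing the proof.
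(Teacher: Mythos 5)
The paper itself gives no proof of this lemma (it simply cites de Gosson--Luef and \cite{Birkbis}), so there is no in-paper argument to compare against; your strategy --- rescale to $\eta=1$, make a linear change of variables, then slice and invoke the classical one-dimensional Hardy theorem --- is the standard route and is essentially what those sources do. There is, however, a concrete error in your linear-algebra step which, as written, breaks the reduction. If $\phi(y)=\psi(Ly)$ then, as you correctly record, $F\phi(w)=|\det L|^{-1}F\psi((L^{T})^{-1}w)$; feeding the second hypothesis of (\ref{AB}) through this gives
\[
|F\phi(w)|\leq C^{\prime}\,e^{-\tfrac{1}{2}\left(L^{-1}B(L^{-1})^{T}\right)w\cdot w},
\]
so the matrix that must be made diagonal on the Fourier side is $L^{-1}B(L^{-1})^{T}$, \emph{not} the congruent image $L^{T}BL$. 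The standard simultaneous congruence you invoke ($L^{T}AL=I$, $L^{T}BL=D$) therefore does not produce a diagonal Gaussian bound for $F\phi$, and in addition the diagonal entries of that $D$ are the eigenvalues of $A^{-1}B$, not of $AB$ (take $A=\operatorname{diag}(4,1)$, $B=I$ to see the difference), so your parenthetical identification of the $\mu_{j}$ is false for the matrix you constructed.

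The fix is one line. Every $L$ with $L^{T}AL=I$ has the form $L=A^{-1/2}O$ with $O$ orthogonal, and then $L^{-1}B(L^{-1})^{T}=O^{T}A^{1/2}BA^{1/2}O$; choosing $O$ to diagonalize the symmetric positive definite matrix $A^{1/2}BA^{1/2}$ gives $L^{-1}B(L^{-1})^{T}=D=\operatorname{diag}(\mu_{1},\dots,\mu_{n})$, where the $\mu_{j}$ are now genuinely the eigenvalues of $AB$, since $A^{1/2}BA^{1/2}$ is similar to $AB$. With this corrected choice of $L$ the rest of your argument --- the Fubini justification of the sliced bounds (available because the Gaussian bound puts $\phi$ in $L^{1}$), the application of one-dimensional Hardy to each partial Fourier transform for some choice of the remaining frequency variables making the slice nonzero, and the induction on the dimension in the equality case, ending with $(L^{-1})^{T}L^{-1}=A$ --- goes through as you describe.
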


\begin{proof}
See de Gosson and Luef \cite{goluhardy,PR}, de Gosson \cite{Birkbis}.
\end{proof}

We will also need the following positivity result:

\begin{lemma}
\label{Lemmaposex} If $R$ is a symmetric positive semidefinite $2n\times2n$
matrix, then%
\begin{equation}
P_{(N)}=\left(  Rz_{j}\cdot z_{k}\right)  _{1\leq j,k\leq N} \label{P}%
\end{equation}
is a symmetric positive semidefinite $N\times N$ matrix for all $z_{1}%
,...,z_{N}\in\mathbb{R}^{2n}$.
\end{lemma}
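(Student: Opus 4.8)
The plan is to exhibit $P_{(N)}$ as a Gram matrix and then invoke the standard fact that Gram matrices are positive semidefinite. Since $R$ is real, symmetric, and positive semidefinite, the spectral theorem furnishes a symmetric positive semidefinite square root $R^{1/2}$ with $R=R^{1/2}R^{1/2}$ (any factorization $R=C^{T}C$ would do as well). First I would record the elementary identity
\[
Rz_{j}\cdot z_{k}=R^{1/2}(R^{1/2}z_{j})\cdot z_{k}=(R^{1/2}z_{j})\cdot(R^{1/2}z_{k}),
\]
the last equality using that $R^{1/2}$ is symmetric. Setting $w_{j}=R^{1/2}z_{j}\in\mathbb{R}^{2n}$, this says precisely that $(P_{(N)})_{jk}=w_{j}\cdot w_{k}$, i.e. $P_{(N)}$ is the Gram matrix of the vectors $w_{1},\dots,w_{N}$.

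Symmetry of $P_{(N)}$ is then immediate from the symmetry of the dot product (equivalently from the symmetry of $R$). For positive semidefiniteness I would take an arbitrary $\lambda=(\lambda_{1},\dots,\lambda_{N})\in\mathbb{C}^{N}$ and compute
\[
\sum_{1\leq j,k\leq N}\lambda_{j}\overline{\lambda_{k}}\,(P_{(N)})_{jk}=\sum_{1\leq j,k\leq N}\lambda_{j}\overline{\lambda_{k}}\,(w_{j}\cdot w_{k})=\Bigl\Vert\sum_{j=1}^{N}\lambda_{j}w_{j}\Bigr\Vert^{2}\geq 0,
\]
where $\Vert\cdot\Vert$ is the norm on $\mathbb{C}^{2n}$ and the middle equality uses that each $w_{j}$ is a \emph{real} vector, so that $w_{j}\cdot w_{k}$ coincides with the Hermitian inner product $\langle w_{j},w_{k}\rangle$ on $\mathbb{C}^{2n}$ (conjugate-linear in the second slot). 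This gives $P_{(N)}\geq 0$ and completes the proof.

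The only step that requires a moment of care is the bookkeeping in the displayed computation: the coefficients $\lambda_{j}$ are complex while the $z_{j}$ (hence the $w_{j}$) are real, which is why I would explicitly note $w_{j}\cdot w_{k}=\langle w_{j},w_{k}\rangle_{\mathbb{C}^{2n}}$ before factoring the sum as a squared norm. Beyond that there is no genuine obstacle: the lemma is essentially the observation that Gram matrices are positive semidefinite, and the existence of the square root $R^{1/2}$ (or of any Cholesky-type factorization of a positive semidefinite matrix) is the sole external ingredient. It is worth remarking — though this is motivation, not part of the argument — that this lemma is exactly what will allow the covariance matrix $\Sigma$ and the Hessian inequality of Lemma \ref{LemmaNarcow} to be fed into the $\eta$-positivity/quantum Bochner machinery.
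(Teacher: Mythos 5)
Your proof is correct and follows essentially the same route as the paper: both factor $R$ (you via the symmetric square root $R^{1/2}$, the paper via a Cholesky-type decomposition $R=L^{\ast}L$), recognize $P_{(N)}$ as a Gram matrix, and conclude positive semidefiniteness by writing the quadratic form as a squared norm, with the same care about complex coefficients versus real vectors. No gaps.
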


\begin{proof}
There exists a matrix $L$ such that $R=L^{\ast}L$ (Cholesky decomposition).
Denoting by $\langle z|z^{\prime}\rangle=z\cdot\overline{z^{\prime}}$ the
inner product on $\mathbb{C}^{2n}$ we have, since the $z_{j}$ are real
vectors,
\[
L^{\ast}z_{j}\cdot z_{k}=\langle L^{\ast}z_{j}|z_{k}\rangle=\langle
z_{j}|Lz_{k}\rangle=z_{j}\cdot(Lz_{k})^{\ast}%
\]
hence $Rz_{j}\cdot z_{k}=Lz_{j}\cdot(Lz_{k})^{\ast}$. It follows that%
\[
\sum_{1\leq j,k\leq N}\lambda_{j}\lambda_{k}^{\ast}Rz_{j}\cdot z_{k}%
=\sum_{1\leq j\leq N}\lambda_{j}Lz_{j}\left(  \sum_{1\leq j\leq N}\lambda
_{j}Lz_{j}\right)  ^{\ast}\geq0
\]
hence our claim.
\end{proof}

We now have the tools needed to give a complete characterization of Gaussian
$\eta$-Wigner functions. Recall from Theorem \ref{ThmCov} that a necessary
condition for a matrix $\Sigma$ to be the covariance matrix of a quantum state
is that it satisfies the condition $\Sigma+\frac{i\eta}{2}J\geq0$. It turns
out that in the Gaussian case this condition is also \textit{sufficient}:

\begin{theorem}
\label{ThmGauss}The Gaussian function
\begin{equation}
\rho(z)=(2\pi)^{-n}\sqrt{\det\Sigma^{-1}}e^{-\frac{1}{2}\Sigma^{-1}z^{2}}
\label{71}%
\end{equation}
is the $\eta$-Wigner transform of a positive trace class operator if and only
if it satisfies
\begin{equation}
|\eta|\leq2\lambda_{\min} \label{etalambda}%
\end{equation}
where $\lambda_{\min}$ is the smallest symplectic eigenvalue of $\Sigma$;
equivalently
\begin{equation}
\Sigma+\frac{i\eta}{2}J\geq0. \label{sigmaj}%
\end{equation}

\end{theorem}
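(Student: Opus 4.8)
The plan is to prove both implications of Theorem~\ref{ThmGauss} by reducing everything to the standard coherent state via a symplectic change of variables, and then invoking the Quantum Bochner Theorem (Theorem~\ref{Prop2}) together with Hardy's uncertainty principle (Lemma~\ref{LemmaHardy}). Since the necessity of $\Sigma+\tfrac{i\eta}{2}J\geq 0$ (equivalently $|\eta|\leq 2\lambda_{\min}$) was already established in Theorem~\ref{ThmCov}, the real content here is \emph{sufficiency}: assuming $|\eta|\leq 2\lambda_{\min}$, I must exhibit a positive trace class operator whose $\eta$-Wigner transform is $\rho$.

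\textbf{Step 1: Symplectic normalization.} First I would apply Williamson's theorem (Lemma~\ref{LemmaWill}) to write $\Sigma = S^{T}DS$ with $S\in\operatorname{Sp}(n)$ and $D=\operatorname{diag}(\Lambda,\Lambda)$, $\Lambda=\operatorname{diag}(\lambda_1,\dots,\lambda_n)$. Using the symplectic covariance of the $\eta$-Wigner transform (the $\eta$-analogue of Theorem~\ref{Thmsyco}(i)) together with the corollary on density matrices, it suffices to treat the diagonal case $\Sigma = D$: indeed if $\rho_D$ is the $\eta$-Wigner transform of a positive trace class operator $\widehat{\rho}_D$, then $\rho(z)=\rho_D(Sz)$ is the $\eta$-Wigner transform of $\widehat{S}^{-1}\widehat{\rho}_D\widehat{S}^{-\dag}$, which is again positive and trace class; and the condition $\Sigma+\tfrac{i\eta}{2}J\geq 0$ is preserved under $\Sigma\mapsto S^{T}\Sigma S$ since $S^{T}JS=J$. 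So I reduce to a product of $n$ one-dimensional Gaussians with variances $\lambda_j$.

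\textbf{Step 2: The one-dimensional building block.} For a single mode with $\Sigma=\lambda I_2$ and $\lambda\geq \tfrac{1}{2}|\eta|$, I would guess that $\rho$ is the $\eta$-Wigner function of a thermal-type operator, namely a convex combination (or integral) of displaced/squeezed coherent states, or more directly the operator $\widehat{\rho}_\lambda = (1-s)\sum_{k\geq 0} s^{k}|k\rangle\langle k|$ in the number basis adapted to $\eta$, with $s$ chosen so that the variance comes out to $\lambda$. Concretely: $W_\eta$ of the $\eta$-coherent state is $(\pi\eta)^{-n}e^{-|z|^2/\eta}$ (the $\eta$-version of \eqref{wii8}), which has $\Sigma = \tfrac{\eta}{2}I$, the extremal case $\lambda=\tfrac{1}{2}|\eta|$. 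For $\lambda > \tfrac{1}{2}|\eta|$ one obtains a genuinely mixed Gaussian by Gaussian-averaging coherent states: $\rho = (\pi\eta)^{-n}e^{-|\cdot|^2/\eta} \ast g$ where $g$ is a centered Gaussian of covariance $(\lambda-\tfrac{\eta}{2})I$, and convolution of Wigner functions with a positive Gaussian of the correct size corresponds to a legitimate mixed state (this is the familiar "classical smearing" of a coherent state). I would verify the covariance bookkeeping and check positivity directly by writing the smearing as an integral $\int \alpha(z_0)\, W_\eta(\widehat{D}_\eta(z_0)\phi_0)(z)\,d^{2n}z_0$ with $\alpha\geq 0$, using \eqref{transwig}.

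\textbf{Step 3: Purity / the extremal case and uniqueness.} Here is where Hardy's lemma enters. When $|\eta| = 2\lambda_{\min}$, the corresponding mode is "saturated" and the state must be pure; Lemma~\ref{LemmaHardy} forces the wavefunction to be exactly Gaussian, which pins down $\widehat{\rho}$ and confirms consistency. More importantly, Hardy's lemma is the right tool to show that when $|\eta|>2\lambda_{\min}$ \emph{no} state exists --- but that direction is already Theorem~\ref{ThmCov}(i), so I only need Hardy to handle the boundary and to identify the operator explicitly in the pure case. Alternatively, I can bypass the explicit construction entirely: compute $\rho_{\Diamond}$ for the Gaussian \eqref{71} --- it is again a Gaussian, $\rho_{\Diamond}(z) = e^{-\tfrac{1}{2}\Sigma z\cdot z}$ up to normalization --- check $\rho_{\Diamond}(0)=1$ and continuity (trivial), and then verify the KLM $\eta$-positivity condition \eqref{polynomial1} using Lemma~\ref{Lemmaposex}: the matrix $\bigl(e^{-\tfrac{i\eta}{2}\sigma(z_j,z_k)}\rho_{\Diamond}(z_j-z_k)\bigr)$ factors, after expanding the Gaussian, as a Hadamard product of $\bigl(e^{-\tfrac{1}{2}\Sigma z_j\cdot z_j}\bigr)\otimes(\cdots)$-type rank-one pieces with a matrix of the form $\bigl(e^{\Sigma z_j\cdot z_k - \tfrac{i\eta}{2}\sigma(z_j,z_k)}\bigr)$, and this last matrix is positive semidefinite precisely because $\Sigma + \tfrac{i\eta}{2}J\geq 0$ (apply Lemma~\ref{Lemmaposex} to the exponent after noting $\sigma(z_j,z_k)=Jz_j\cdot z_k$ and using the fact that $w\mapsto e^{w}$ preserves positive-definiteness entrywise via its power series and Schur's Lemma~\ref{lemmaschur}). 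Then Theorem~\ref{Prop2} delivers the positive trace class operator.

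\textbf{Main obstacle.} The hard part will be Step~3's positivity verification --- turning "$\Sigma+\tfrac{i\eta}{2}J\geq 0$" into "the KLM matrix is positive semidefinite." The exponentiation-preserves-positivity argument (Hadamard powers / Schur) is clean in principle but one must be careful that the quadratic-form exponent genuinely splits into a positive-semidefinite bilinear part plus the $\tfrac{i\eta}{2}\sigma$ correction, and that cross terms $e^{-\tfrac{1}{2}\Sigma(z_j-z_k)\cdot(z_j-z_k)} = e^{-\tfrac12\Sigma z_j\cdot z_j}e^{-\tfrac12 \Sigma z_k\cdot z_k}e^{\Sigma z_j\cdot z_k}$ are organized so the rank-one diagonal factors can be pulled out as congruence (which preserves semidefiniteness) leaving exactly the matrix $\bigl(e^{(\Sigma - \tfrac{i\eta}{2}J)z_j\cdot z_k}\bigr)$ to which Lemma~\ref{Lemmaposex} plus the entrywise-exponential trick applies. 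I would do that congruence reduction carefully and cite Schur's Lemma~\ref{lemmaschur} for the entrywise exponential. The explicit-construction route of Steps~1--2 is a good sanity check and handles the boundary case transparently via Hardy, but the Bochner route is the cleaner writeup.
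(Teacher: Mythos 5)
Your main line of argument for sufficiency coincides with the paper's: both reduce to the Williamson-diagonal case $\Sigma=D$ by symplectic covariance, compute the reduced symplectic Fourier transform $\rho_{\Diamond}$ (again a Gaussian), factor the KLM matrix $\bigl(e^{-\frac{i\eta}{2}\sigma(z_{j},z_{k})}\rho_{\Diamond}(z_{j}-z_{k})\bigr)$ as a congruence $\Delta_{(N)}\Gamma_{(N)}\Delta_{(N)}^{\ast}$ with $\Delta_{(N)}$ diagonal and $\Gamma_{jk}=e^{\frac{1}{2}(D+i\eta J)z_{j}\cdot z_{k}}$, and then invoke Lemma \ref{Lemmaposex} together with the quantum Bochner theorem (Theorem \ref{Prop2}). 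You are in fact more careful than the paper at the one delicate point: Lemma \ref{Lemmaposex} only gives positive semidefiniteness of the matrix $\bigl((D+i\eta J)z_{j}\cdot z_{k}\bigr)$, and passing to its entrywise exponential requires Schur's Lemma \ref{lemmaschur} applied term by term to the power series of $e^{w}$ --- the paper asserts the conclusion directly, while you supply this missing step explicitly. Where you genuinely diverge is necessity: you defer to Theorem \ref{ThmCov}(i), which is legitimate since the Gaussian satisfies the decay hypothesis required there, whereas the paper gives an independent, self-contained argument using the marginal distributions of the $\eta$-Wigner transform and Hardy's Lemma \ref{LemmaHardy} with $A=B=\frac{1}{2}\eta\Lambda^{-1}$; the Hardy route has the added benefit of identifying the extremal case $|\eta|=2\lambda_{\min}$ as forcing a pure Gaussian state. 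Your Step 2 (realizing $\rho$ as a Gaussian smearing $\int\alpha(z_{0})W_{\eta}(\widehat{D}_{\eta}(z_{0})\phi_{0})\,d^{2n}z_{0}$ of displaced coherent states) is an alternative constructive proof of sufficiency that the paper does not use; it buys an explicit density operator (a displaced thermal state) rather than a bare existence statement extracted from Bochner, at the cost of some covariance bookkeeping, and it only covers $|\eta|<2\lambda_{\min}$ strictly unless you treat the saturated modes separately as pure coherent states.
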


\begin{proof}
Let us give a direct proof of the necessity of condition (\ref{etalambda}) for
the Gaussian (\ref{71}) to be the $\eta$-Wigner transform of a positive trace
class operator. Let $\widehat{\rho}=(2\pi\eta)^{n}\operatorname*{Op}_{\eta
}^{\mathrm{W}}(\rho)$ and set $a(z)=(2\pi\eta)^{n}\rho(z)$. Let $\widehat{S}$
$\in\operatorname*{Mp}(n)$; the operator $\widehat{\rho}$ is of trace class if
and only if $\widehat{S}\widehat{\rho}\widehat{S}^{-1}$ is, in which case
$\operatorname{Tr}(\widehat{\rho})=\operatorname{Tr}(\widehat{S}\widehat{\rho
}\widehat{S}^{-1})$. Choose $\widehat{S}$ with projection $S\in
\operatorname*{Sp}(n)$ such that $\Sigma=S^{T}DS$ is a symplectic
diagonalization of $\Sigma$. This choice reduces the proof to the case
$\Sigma=D$, that is to%
\begin{equation}
\rho(z)=(2\pi)^{-n}(\det\Lambda^{-1})e^{-\frac{1}{2}(\Lambda^{-1}x^{2}%
+\Lambda^{-1}p^{2})}. \label{Gaussdiag}%
\end{equation}
Suppose now that $\widehat{\rho}$ is of trace class; then there exist
functions $\psi_{j}\in L^{2}(\mathbb{R}^{n})$ ($1\leq j\leq n$) such that
\[
\rho(z)=\sum_{j}\alpha_{j}W_{\eta}\psi_{j}(z)
\]
where the $\alpha_{j}>0$ sum up to one. Integrating with respect to the $p$
and $x$ variables, respectively, the marginal conditions satisfied by the
$\eta$-Wigner transform and formula (\ref{Gaussdiag}) imply that we have%
\begin{align*}
\sum_{j}\alpha_{j}|\psi_{j}(x)|^{2}  &  =(2\pi)^{-n/2}(\det\Lambda
)^{1/2}e^{-\frac{1}{2}\Lambda^{-1}x^{2}}\\
\sum_{j}\alpha_{j}|F_{\eta}\psi_{j}(p)|^{2}  &  =(2\pi)^{-n/2}(\det
\Lambda)^{1/2}e^{-\frac{1}{2}\Lambda^{-1}p^{2}}.
\end{align*}
In particular, since $\alpha_{j}\geq0$ for every $j=1,2,...,n$,%
\[
|\psi_{j}(x)|\leq C_{j}e^{-\frac{1}{4}\Lambda^{-1}x^{2}}\text{ \ , \ }%
|F_{\eta}\psi_{j}(p)|\leq C_{j}e^{-\frac{1}{4}\Lambda^{-1}p^{2}}%
\]
here $C_{j}=(2\pi)^{-n/4}(\det\Lambda)^{1/4}/\alpha_{j}^{1/2}$. Applying
Hardy's Lemma \ref{LemmaHardy} with $A=B=\frac{1}{2}\eta\Lambda^{-1}$ we must
have $|\eta|\leq2\lambda_{j}$ for all $j=1,...,n$ which is condition
(\ref{etalambda}); this establishes the sufficiency statement. Let us finally
show that, conversely, the condition (\ref{sigmaj}) is sufficient. It is again
no restriction to assume that $\Sigma$ is the diagonal matrix $D=%
\begin{pmatrix}
\Lambda & 0\\
0 & \Lambda
\end{pmatrix}
$; the symplectic Fourier transform of $\rho$ is easily calculated and one
finds that $\rho_{\Diamond}(z)=e^{-\frac{1}{4}Dz^{2}}$. Let $\Lambda_{(N)}$ be
the $N\times N$ matrix with entries%
\[
\Lambda_{jk}=e^{-\frac{i\eta}{2}\sigma(z_{j},z_{k})}\rho_{\Diamond}%
(z_{j}-z_{k});
\]
a simple algebraic calculation shows that we have
\[
\Lambda_{jk}=e^{-\frac{1}{4}Dz_{j}^{2}}e^{\frac{1}{2}(D+i\eta J)z_{j}\cdot
z_{k}}e^{-\frac{1}{4}Dz_{k}^{2}}%
\]
and hence%
\[
\Lambda_{(N)}=\Delta_{(N)}\Gamma_{(N)}\Delta_{(N)}^{\ast}%
\]
where $\Delta_{(N)}=\operatorname*{diag}(e^{-\frac{1}{4}Dz_{1}^{2}%
},...,e^{-\frac{1}{4}Dz_{N}^{2}})$ and $\Gamma_{(N)}=(\Gamma_{jk})_{1\leq
j,k\leq N}$ with $\Gamma_{jk}=e^{\frac{1}{2}(D+i\eta J)z_{j}\cdot z_{k}}$. The
matrix $\Lambda_{(N)}$ is thus positive semidefinite if and only if
$\Gamma_{(N)}$ is, but this is the case in view of Lemma \ref{Lemmaposex}.
\end{proof}

Setting $2\lambda_{\min}=\hslash$ and writing $\Sigma$ in the block-matrix
form $%
\begin{pmatrix}
\Sigma_{xx} & \Sigma_{xp}\\
\Sigma_{px} & \Sigma_{pp}%
\end{pmatrix}
$ where $\Sigma_{xx}=(\Delta(x_{j},x_{k}))_{1\leq j,k\leq n}$, $\Sigma
_{xp}=(\Delta(x_{j},p_{k}))_{1\leq j,k\leq n}$ and so on, one shows \cite{PR}
that (\ref{sigmaj}) implies the generalized uncertainty relations (the
\textquotedblleft Robertson--Schr\"{o}dinger inequalities\textquotedblright;
see de Gosson and Luef \cite{PR} for a detailed discussion of these
inequalities)%
\begin{equation}
\Delta x_{j}^{2}\Delta p_{j}^{2}\geq\Delta(x_{j},p_{k})^{2}+\tfrac{1}{4}%
\hbar^{2} \label{RS}%
\end{equation}
where, for $\leq j\leq n$, the $\Delta x_{j}^{2}=\Delta(x_{j},x_{j})$, $\Delta
p_{j}^{2}=\Delta(p_{j},p_{j})$ are viewed as variances and the $\Delta
(x_{j},p_{k})$ as covariances. We have given a detailed discussion of the
Robertson--Schr\"{o}dinger inequalities in de Gosson and Luef \cite{PR} from
the symplectic point of view.

\section{Some Speculations}

\subsection{The fine structure constant}

Dirac \cite{Dirac} already speculated in 1937 that physical constants such as
the gravitational constant or the fine structure constant might be subject to
change over time. This question has since been a very active area of research
(see the recent reviews \cite{Duff3,Uzan}). Some scientists have suggested
that the fine structure constant, $\alpha\approx1/137$, might not be constant,
but could vary over time and space. This dimensionless constant, introduced by
Sommerfeld in 1916, measures the strength of interactions between light and
matter, or equivalently, how strong electrical and magnetic forces are. It can
be expressed as a combination of three constants: the electron charge, the
speed of light, and Planck's constant $h$:
\[
\alpha=\frac{1}{4\pi\varepsilon_{0}}\frac{e^{2}}{\hbar c}.
\]
The quest for testing the non-constancy of $\alpha$ is ongoing. The Oklo
natural nuclear reactor is known to give limits on the variation of the fine
structure constant over the period since the reactor was running
(\symbol{126}1.8 billion years). In 1999, a team of astronomers using a
telescope in Hawaii reported that measurements of light absorbed by very
distant galaxy-like objects in space called quasars -- which are so far away
that we see them today as they looked billions of years ago -- suggest that
the value of the fine structure constant was once slightly different from what
it is today. Experiments can in principle only put an upper bound on the
relative change per year. For the fine structure constant, this upper bound is
comparatively low, at roughly $10^{-17}$ per year. That claim was
controversial, and still unproven. But if true, it must mean that at least one
of the three fundamental constants that constitute $\alpha$ must vary. The
possibility that some constants of Nature could vary in space-time has
remained a subject of fascination which has motivated numerous theoretical and
experimental researches \cite{oklo1,Dyson1}.

\subsection{Planck's constant}

Kentosh and Mohageg focused on $h$, and specifically on whether $h$ depends on
where (not when) you measure it. If $h$ changes from place to place, so do the
frequencies, and thus the \textquotedblleft ticking rate\textquotedblright, of
atomic clocks. And any dependence of $h$ on location would translate as a tiny
timing discrepancy between different GPS clocks. The physicist Freeman Dyson
has suggested (private communication) that the increasing precision of
measurements of time could lead to non-ambiguous results. Mohageg and his
student Kentosh \cite{KM1,KM3} have tested the constancy of $h$ using the
freely available data from GPS. Kentosh and Mohageg were actually motivated by
the fact that $h$ also appears in the fine structure constant, whose possible
variation is a very active area of research in experimental physics. After
careful analysis of the data from seven highly stable GPS satellites, Kentosh
and Mohageg concluded that $h$ is identical at different locations to an
accuracy of seven parts in a thousand. In other words, if $h$ were a one-metre
measuring stick, then two sticks in different places anywhere in the world
would not differ by more than seven millimeters.

At least as interesting is the possible time-variation of Planck's constant
(see Mangano \textit{et al}. \cite{mangano}). This deserves to be explored
because if true it could shed some light on the Early Universe, just after the
Big Bang. In fact, if the fine structure constant has been increasing since
the Big Bang, this could perhaps be due to a decrease of Planck's constant. If
such a variation could be experimentally detected, then it would mean,
following our discussion of the quantum Bochner theorem, that the early
Universe was much more \textquotedblleft quantum\textquotedblright\ than it is
now; this would of course have major implications in terms of entanglement.

\subsection{Units}

Testing the constancy of a physical parameter means going to extraordinary
lengths in terms of precision measurements, and is intimately related to
choices of \emph{unit systems}. The physicist Michael Duff \cite{Duff}
remarked in 2002 (also see Duff \cite{Duff2,Duff3}) that all the fundamental
physical dimensions could be expressed using only one: mass. Duff first
noticed the obvious, namely that lengths can be expressed as times using $c$,
the velocity of light, as a conversion factor. One can therefore take $c=1$,
and measure lengths in seconds. The second step was to use the relation
$E=h\nu$ which relates energy to a frequency, that is to the inverse of a
time. We can thus measure a time using the inverse of energy. But energy is
equivalent to mass as shown by Einstein, so that time can be measured by the
inverse of mass. Thus, setting $c=h=1$ we have reduced all the fundamental
dimensions to one: mass. A further step consists in choosing a reference mass
such that the gravitational constant (first measured by Cavendish in 1798) is
equal to one: $G=1$. Summarizing, we have obtained a theoretical system of
units in which $c=h=G=1$. Now, a very important physical parameter\ is,
without doubt, the fine structure constant $\alpha=e^{2}/2\varepsilon_{0}hc$
($\varepsilon_{0}$ the dielectric constant); it is a dimensionless number
whose approximate value is $1/137$. There are other ways to define irreducible
unit systems. Already Stoney, noting that electric charge is quantized,
derived units of length, time, and mass in 1881 by normalizing $G,c,$ and $e$
to unity; Planck suggested in 1898--1899 that it would suffice to use $G,c,$
and $h$ to define length, mass, and time units. His proposal led to what are
called today Planck's length\ $\ell_{P}=\sqrt{Gh/c^{3}}$ and Planck mass and
time $M_{P}=\sqrt{hc/G}$ and $T_{P}=\sqrt{Gh/c^{5}}$.

\begin{acknowledgement}
This work has been financed by the grant P27773 N23 of the Austrian Research
Foundation FWF (Fonds zur F\"{o}rderung der wissenschaftlichen Forschung). It
is my pleasure to thank Glen Dennis for a careful reading of the manuscript.
\end{acknowledgement}

\end{document}